\documentclass{article}

\usepackage{amsthm}
\usepackage{amsmath}
\usepackage{authblk}

\usepackage{mathtools}

\usepackage{tikz}
\usetikzlibrary{calc}
\usepackage{bm}
\usepackage{wrapfig}
\usepackage{enumitem}

\usepackage{multirow}
\usetikzlibrary{patterns.meta}
\usepackage{float}

\usepackage{xurl}

\usepackage{tabularx}
\usepackage{booktabs}
\usepackage{tabularray}
\usepackage{pdflscape}
\usepackage{algorithm}
\usepackage{algpseudocode}
\usepackage{caption}
\usepackage{subcaption}

\newcommand\numberthis{\addtocounter{equation}{1}\tag{\theequation}}

\usepackage{graphicx}
\usepackage[american]{babel}
\usepackage{amssymb}

\usepackage{epsfig}

\usepackage{geometry}

\usepackage[round]{natbib}
\newtheorem{theorem}{Theorem}
\newtheorem{proposition}[theorem]{Proposition}
 
\newtheorem{remark}[theorem]{Remark}
\newtheorem{corollary}[theorem]{Corollary}
\newtheorem{lemma}[theorem]{Lemma}
\theoremstyle{definition}
\newtheorem{definition}[theorem]{Definition}
\newtheorem{example}{Example}
\numberwithin{equation}{section}
\numberwithin{theorem}{section}
\numberwithin{example}{section}
\DeclareCaptionFormat{algor}{%
	\hrulefill\par\offinterlineskip\vskip1pt%
	\textbf{#1#2}#3\offinterlineskip\hrulefill}
\DeclareCaptionStyle{algori}{singlelinecheck=off,format=algor,labelsep=space}
\definecolor{c1}{RGB}{74,143,222}
\definecolor{c2}{RGB}{195,135,44}
\definecolor{c3}{RGB}{198,198,198}

\title{Identifying Direct Causal Effects in Latent Factor Models by Accounting for Unidentified Parents}
\author[1,2]{Tom Hochsprung\thanks{tom.hochsprung@dlr.de}}
\author[3]{Nils Sturma\thanks{nils.sturma@epfl.ch}}
\author[4]{Jakob Runge\thanks{jakob.runge@uni-potsdam.de}}
\author[5,6]{Mathias Drton\thanks{mathias.drton@tum.de}}
\author[1]{Andreas Gerhardus\thanks{andreas.gerhardus@dlr.de}}
\affil[1]{German Aerospace Center (DLR), Institute of Data Science, Jena, Germany}
\affil[2]{Technische Universität Berlin, Berlin, Germany}
\affil[3]{Institute of Mathematics, École Polytechnique Fédérale de Lausanne (EPFL), Lausanne, Switzerland}
\affil[4]{Institute of Computer Science, University of Potsdam, Potsdam, Germany}
\affil[5]{Department of Mathematics, TUM School of Computation, Information and Technology, Technical University of Munich, Munich, Germany}
\affil[6]{Munich Center for Machine Learning, Munich, Germany}
\date{\vspace{-5ex}}
\begin{document}
\maketitle	
\begin{abstract}
We consider linear structural equation models 
with explicitly modelled latent variables.
 In such models, observed and latent variables solve linear equations including stochastic noise terms.
The goal of our work is to identify the direct causal effects between the observed variables of interest by providing (rational) formulas in the observed covariances.
Most prior identification approaches 
operate in the latent projection framework, where latent variables are projected away into dependent error terms. However, when the observed variables are densely confounded, even if only by a few latent variables,
the projection-based approaches are unable to certify identifiability of most effects. For such problems, approaches that explicitly use the latent variables are more effective, but algorithms that were recently proposed for this purpose often remain inconclusive for denser causal graphs.
We develop a new identification criterion that is able to better handle dense graphs by leveraging the key insight that recursive identification schemes can be generalized by explicitly accounting for causal parents with (yet) unidentified direct effects. 
Combinatorial search problems in our new criterion can be tackled with the help of network-flow computations, leading to a practical useful algorithmic tool that we also make available in software.\\

\noindent
\textbf{Keywords:}  causal inference, causal
effect identification, latent variables, factor analysis,
structural equation model
\end{abstract}

\section{Introduction}
\label{sec_intro}

We consider linear structural equation models (LSEMs) with explicitly modelled latent variables. 
In such LSEMS, structural equations state that the components of the observed random vector $X$ are noisy linear functions of a subset of the other components of $X$ and a vector of latent variables $L$.  Solving the equations induces the joint distribution the model postulates for $X$.
For LSEMs with latent variables, it is important to clarify which of the coefficients in the structural equations are identifiable (that is, uniquely recoverable) from the covariance matrix of $X$ and also to derive (rational) identification formulas.
This parameter identification problem is relevant for a broad range of applied sciences because one can naturally interpret the coefficients as direct causal effects between different variables (e.g., \citealp{spirtes2000causation, pearl2009causality, runge2023causal}).

The setting in our paper is similar to the one in \citet{barber2022half}. 
Consider a collection $X=(X_v)_{v\in V}$ of $d=|V|$ observed random variables and a further collection $L=(L_h)_{h\in \mathcal{L}}$ of $\ell=|\mathcal{L}|$ latent (unobserved) random variables.
We assume that $V$ and $\mathcal{L}$ are both finite 
and that all variables $X_v$ are related by linear equations as
\begin{align}
	\label{eq_model}
	X_v=\sum_{w\neq v}\lambda_{wv}X_w+\sum_{h\in \mathcal{L}}\gamma_{hv}L_h+\epsilon_v,\quad v\in V.
\end{align}
The $\epsilon_v$ constitute noise in the form of jointly independent random variables with zero mean and variances $\omega_{vv}:=\mathbb{E}[\epsilon_v^2]\in(0,\infty)$. The $\lambda_{wv}$ and $\gamma_{hv}$ are real-valued parameters commonly referred to as the \emph{direct causal effects} of $X_w$ on $X_v$ and $L_h$ on $X_v$, respectively. We assume the latent variables $(L_h)_{h\in \mathcal{L}}$ to be directly sampled from some underlying distribution with strictly positive and finite variance and that the latent variables $(L_h)_{h\in \mathcal{L}}$ are jointly independent
 and independent of the noise terms $\epsilon=(\epsilon_v)_{v\in V}$. Furthermore, without loss of generality we  assume that the means of the latent variables $(L_h)_{h\in \mathcal{L}}$ equal $0$.

Understanding $X$, $L$ and $\epsilon$ as vectors, we can recast \eqref{eq_model} as the matrix equation
\begin{align}
\label{eq_matrix_eq}
	X=\Lambda^TX + \Gamma^T L +\epsilon,
\end{align}
where $\Lambda = (\lambda_{wv})\in\mathbb{R}^{d\times d}$ and $\Gamma=(\gamma_{hv})\in\mathbb{R}^{\ell\times d}$ are parameter matrices. The matrix $\Lambda$ has zero diagonal entries and, to ensure solvability of \eqref{eq_matrix_eq}, it is assumed that $I_d-\Lambda$ is invertible, where $I_d$ is the $d\times d$-identity matrix.  Solving for $X$ in \eqref{eq_matrix_eq}, we have 
\begin{align}
\label{eq_dep_error_terms}
	X=(I_d-\Lambda)^{-T}(\Gamma^T L + \epsilon).
\end{align}
The vector $\Gamma^TL+\epsilon$ follows a latent-factor model and has covariance matrix 
\begin{align*}
	\Omega := \textnormal{Var}(\epsilon+\Gamma^TL)=\Omega_{\textnormal{diag}}+\Gamma^T\mathcal{V}_L\Gamma,
\end{align*}
where $\Omega_{\textnormal{diag}}:= \textnormal{Var}(\epsilon)$ is diagonal, and we write  $\mathcal{V}_L:=\textnormal{Var}(L)$.
From \eqref{eq_dep_error_terms}, we see that $X$ has covariance matrix
\begin{align}
\label{eq_Sigma}
  \Sigma = (I_d-\Lambda)^{-T}\Omega (I_d-\Lambda)^{-1}.
\end{align} 
The goal of this paper is to derive conditions
under which $\Lambda$ (or some of its entries) can be identified from
the observed covariance matrix $\Sigma$.

We study the identification problem from a graphical perspective and encode the parame\-trized equation in \eqref{eq_matrix_eq} by a graph whose nodes represent the latent and observed variables in $L$ and $X$.
If the model allows an effect $\lambda_{vw}$ or $\gamma_{hw}$, for $v,w\in V$ and $h\in \mathcal{L}$, to be non-zero, then the graph contains a directed edge $v\rightarrow w$ or $h\rightarrow w$, respectively. We call this graph a \emph{latent-factor graph} and emphasize that it \emph{explicitly} shows the latent variables. 

		\begin{figure}
		\center
\scalebox{0.8}{
\begin{tikzpicture}
	\node[circle, draw, fill = c3, line width=0.4mm] (h1) at (0.0,0) {SES};
	\node[circle, draw, line width=0.4mm] (1) at (-3, -2) {IP};
	\node[circle, draw, line width=0.4mm] (2) at (-1.5, -2) {HS};
	\node[circle, draw, line width=0.4mm] (3) at (0, -2) {HA};
	\node[circle, draw, line width=0.4mm] (4) at (1.5, -2) {TA};
	\node[circle, draw, line width=0.4mm] (5) at (3, -2) {TC};
	
	\path [->, line width = 0.5mm, dashed, color = red] (h1) edge node[left] {} (1);
	\path [->, line width = 0.5mm, dashed, color = red] (h1) edge node[left] {} (2);
	\path [->, line width = 0.5mm, dashed, color = red] (h1) edge node[left] {} (3);
	\path [->, line width = 0.5mm, dashed, color = red] (h1) edge node[left] {} (4);
	\path [->, line width = 0.5mm, dashed, color = red] (h1) edge node[left] {} (5);

	\path [->, line width = 0.5mm, color = blue] (2) edge node[left] {} (3);
	\path [->, line width = 0.5mm, color = blue, bend right] (2) edge node[left] {} (4);
	\path [->, line width = 0.5mm, color = blue, bend left] (2) edge node[left] {} (5);
	\path [->, line width = 0.5mm, color = blue, bend right] (3) edge node[left] {} (5);
	\path [->, line width = 0.5mm, color = blue] (4) edge node[left] {} (5);
\end{tikzpicture}
}
\caption{Latent-factor graph corresponding to an example about the total energy consumption (TC) of a household; see Example \ref{example_application}.}
		\label{fig_application}
	\end{figure}

\begin{example}
\label{example_application} We consider an example about the \emph{total energy consumption (TC)} of a household, adapted 
from \citet{estiri2014building} and \citet{karatasou2019socio}. Three variables, namely, \emph{total heated floor area (TA)}, \emph{household appliances (HA)} and \emph{household size (HS)}, are assumed to have a direct causal effect on total energy consumption. Additionally, the household size is assumed to have a direct effect on total heated floor area and household appliances. All of these four observed variables are also caused by the unobserved variable \emph{social economic status (SES)}. Furthermore, one measures a proxy of SES that is conditionally independent of all the other observed variables given SES; for example, this \emph{independent observed proxy} (IP) could be the car value. The corresponding LSEM with $L=(L_{\textnormal{SES}})$ and $X=(X_{\textnormal{IP}},X_{\textnormal{HS}},X_{\textnormal{HA}},X_{\textnormal{TA}},X_{\textnormal{TC}})^T$ is given by
\begin{align*}
    X=\underbrace{\begin{pmatrix}
    0 & 0 & 0 & 0 &0 \\
    0 & 0 & 0 &0 &0\\
    0 & \lambda_{\textnormal{HS}, \textnormal{HA}} & 0 & 0 & 0\\
    0 & \lambda_{\textnormal{HS}, \textnormal{TA}} & 0 & 0 & 0\\
    0 & \lambda_{\textnormal{HS}, \textnormal{TC}} & \lambda_{\textnormal{HA}, \textnormal{TC}} & \lambda_{\textnormal{TA}, \textnormal{TC}} & 0
\end{pmatrix}}_{=\Lambda^T}
X + \underbrace{\begin{pmatrix}
    \gamma_{\textnormal{SES}, \textnormal{IP}}\\
    \gamma_{\textnormal{SES},\textnormal{HS}}\\
    \gamma_{\textnormal{SES},\textnormal{HA}}\\
    \gamma_{\textnormal{SES},\textnormal{TA}}\\
    \gamma_{\textnormal{SES},\textnormal{TC}}
    \end{pmatrix}}_{=\Gamma^T}
L + \epsilon.
\end{align*}
The induced latent-factor graph is depicted in Figure \ref{fig_application}. 
\end{example}

Explicitly modelling the latent variables is particularly effective when the latent confounding is dense but low-dimensional, that is, when only a few latent variables confound many observed variables. The literature offers several graphical methods that seek to leverage this perspective. For example, \citet{stanghellini2005identification} and \citet{leung2016identifiability} consider problems that feature only one latent variable. The conditional instrument approach of \citet{van2015efficiently} only considers cases where no confounder directly influences all observed variables, and another line of work assumes availability of observed proxies of the latent confounder \citep{kuroki2014measurement,miao2018identifying,lee2021causal,shpitser2023proximal,tchetgen2024introduction}. A criterion that allows the treatment of general latent-factor graphs is the latent-factor half-trek criterion (LF-HTC) of
\citet{barber2022half}, which applies exactly to our considered setting with latent variables that are independent and source variables. There also exist recent works such as \citet{ankan2023combining}, \citet{dong2024parameter}, or  \citet{sturma2025trek}, which generalize LF-HTC type methods to settings with causal effects among latents. Furthermore, there exist recent identification results from \citet{hochsprung2025using} for linear time series models having explicitly modelled latent confounding time series.


 Already for our latent-factor graph setting, there exist many graphs that are not identifiable via the LF-HTC, which to our knowledge is currently the most encompassing criterion, despite the fact that (rational) identifiability can be proven with computer algebra methods. In this paper, motivated by the aforementioned related results in the time series setting \citep{hochsprung2025using}, we propose a more general criterion for latent-factor graphs that is able to certify identifiability of direct effects for significantly more graphs.
 \addtocounter{example}{-1}
 \begin{example}[continued]
  Consider again the model for total energy consumption of a household, with the graph from Figure~\ref{fig_application}. It turns out that in this example all direct effects between observed variables (corresponding to the blue edges in the figure) are identifiable.  While the LF-HTC is only able to identify the effects for $\textnormal{HS}\rightarrow \textnormal{HA}$ and $\textnormal{HS}\rightarrow \textnormal{TA}$ and, thus, none of those for edges pointing into the total energy consumption, our new criterion graphically detects identifiability of all effects associated to edges between observed variables.
 \end{example}

It is worth mentioning that Gröbner basis computations allow one to perfectly decide identifiability of causal effects  (\citealp{garcia2010identifying}, Section C in \citealp{barber2022supplement}). However, this tends to be feasible only for problems of small to moderate size. While the precise computational complexity of Gr\"obner basis approaches to LSEM identifiability is unknown, applying complexity results for finding Gr\"obner bases for general polynomial ideals \citep{huynh1986superexponential,mayr1997some} to our setting yields double exponential time complexity in the size of the graph. In contrast, the LF-HTC has polynomial time complexity when bounding the number of the to-be-searched-over latent factors.  While our criterion has a larger than exponential time complexity in the size of the graph,
we also achieve polynomial time complexity in the size of the graph under practically useful algorithmic simplifications.

\begin{remark}
As pioneered by \citet{wright1921correlation, wright1934method}, there also exists the approach of using latent projections, where the latent variables are merged into error terms that then may become dependent. Such dependence is graphically represented by bidirected edges ($\leftrightarrow$). The latent projection approach allows one to apply results as in \cite{drton2011global}, which are connected to nonparametric approaches based on the do-calculus \citep{tian2002general, tian2002testable,tian2003technicalreport, shpitser2006bidentification, shpitser2006identification, huang2006identifiability, huang2006pearl, pearl2009causality, shpitser2018identification}. Moreover, there exist a host of methods taylored to linear models under latent projection; see the review of \cite{drton2018}. However, as shown by \cite{barber2022half}, these methods not only remain inconclusive under dense confounding, but can also yield incorrect conclusions on (rational/generic) identifiability under sparse but low-dimensional confounding.
\end{remark}
 
We now repeat and highlight \textbf{main contributions} of our new work:
 \begin{itemize}
     \item we propose a new broadly applicable criterion for identifying  direct causal effects in (linear) latent-factor models, which subsumes the previous state-of-the-art methods for this class of models,
     \item we provide the accompanying identification formulas, and
     \item we provide R-code  \citep{Rlanguage} at \url{https://gitlab.com/dlr-dw/elfhtc} that allows one to check identifiability, obtain  identification formulas in LaTeX-code, and compute estimates of direct causal effects from a given covariance matrix estimate.
 \end{itemize}
We structure our paper as follows: \textbf{Section \ref{sec_preliminaries}} introduces necessary preliminaries. \textbf{Section \ref{section_elfhtc}} contains the main pillar of our identification criterion, namely, a more general version of the LF-HTC from \citet{barber2022half} that allows for edgewise identification and for auxiliary sets with unidentified parents. We term this criterion eLF-HTC.
In \textbf{Section \ref{sec_further_complementary_results}}, we present two further identification results complementing the eLF-HTC.
In \textbf{Section \ref{section_computation}}, we discuss how one can computationally implement the results from Sections \ref{section_elfhtc} and \ref{sec_further_complementary_results} yielding a combined identification algorithm/criterion. In \textbf{Section \ref{sec_numerical_experiments}}, we present experiments that demonstrate that our combined algorithm from Section \ref{section_computation} identifies significantly more latent-factor graphs than the LF-HTC. Finally, in \textbf{Section \ref{sec_conclusion}}, we finish with a conclusion. Most proofs are deferred to the Appendix, which also  contains further explanations.

\section{Preliminaries}
\label{sec_preliminaries}
In this section, we introduce required preliminaries: In Section \ref{sec_basic_graphical_concepts}, we recall basic graphical concepts. The exposition hereby bears some similarity to the one in \citet{hochsprung2025using}. Then, in Section \ref{sec_generic_id}, we introduce our notion of rational identifiability. The reader might skip this section in case they are not interested in algebraic or topological arguments; for the remainder of the paper it suffices to know that rational identifiability roughly means that direct causal effects can be identified by rational functions in the entries of $\Sigma$. Finally, in Section \ref{sec_lfhtc}, we briefly state the LF-HTC from \citet{barber2022half} and introduce further required graphical concepts.

\subsection{Basic Graphical Concepts}
\label{sec_basic_graphical_concepts}

One can represent the model from equation \eqref{eq_model} using a directed graph $G=(V\cup \mathcal{L},D)$. Here, $V$ and $\mathcal{L}$ are finite disjoint sets of observed and latent nodes, respectively, and $D\subseteq (V\cup \mathcal{L})\times (V\cup \mathcal{L})$ is the set of directed edges.  We assume that the set $D$ partitions as $D=D_V\cup D_{\mathcal{L}V}$, where $D_V=D\cap (V\times V)$ contains the directed edges that are just between the observed vertices and where $D_{\mathcal{L}V}=D\cap (\mathcal{L}\times V)$ is the set of directed edges that point from a latent vertex to an observed vertex. We frequently write $G^\mathcal{L}$ instead of $G$ to indicate that $G$ is a latent-factor graph. Besides, we assume that there are no self-loops, that is, $v\rightarrow v \notin D$ for all $v\in V\cup \mathcal{L}$. However, we do not prohibit reciprocal edge pairs, that is, $v\rightarrow w \in D_V$ and $w\rightarrow v\in D_V$. 

We further adopt the following common graphical terminology (see, e.g., \citealp{lauritzen1996graphical} or \citealp{pearl2009causality}):
 For a vertex $v\in V\cup \mathcal{L}$, we define its \emph{parents} by $\textnormal{pa}(v):=\{w\in V\cup \mathcal{L}:\;w\rightarrow v\}$, and for a set of vertices $A\subseteq V\cup \mathcal{L}$ by $\textnormal{pa}(A):=\bigcup_{v\in A} \textnormal{pa}(v)$. Note that in this paper, $\textnormal{pa}(\mathcal{L})=\emptyset$. To distinguish between latent and observed  parents of some vertex $v\in V\cup\mathcal{L}$ (and analogously of a set $A\subseteq V\cup\mathcal{L}$), we write $\textnormal{pa}_V(v)=\{w\in V:\;w\rightarrow v\}$ and $\textnormal{pa}_{\mathcal{L}}(v)=\{w\in  \mathcal{L}:\;w\rightarrow v\}$, respectively. Moreover, for a vertex $v\in V\cup \mathcal{L}$, we define its \emph{children} by $\textnormal{ch}(v):=\{w\in V\cup \mathcal{L}:\;v\rightarrow w\}$,  similarly $\textnormal{ch}(A):=\bigcup_{v\in A} \textnormal{ch}(A)$ for a set of vertices $A\subseteq V\cup \mathcal{L}$. As vertices in our paper only have observed children, we do not further distinguish the children-notation into observed and latent children.

We further call a finite sequence of not necessarily distinct vertices $(v_1,v_2,\ldots,v_n)$ a \emph{walk} if every pair of successive vertices in this sequence is connected by a directed edge pointing in any direction. If additionally
 all vertices with the potential exception of the endpoint vertices $v_1$ and $v_n$ in this sequence are unique, then we also call a walk a \emph{path}. We say that a path is \emph{trivial} if it just consists of a single vertex. We further say that a path is a \emph{directed path} 
if every edge in this path is pointing in the same direction and away from the starting vertex $v_1$ towards the end-vertex $v_n$.  We call a directed path a \emph{directed cycle} if $v_1=v_n$. We emphasize that our considered setting allows for directed cycles. 
If for vertices $v,w\in V\cup\mathcal{L}$ there is a directed path with at least one edge from $v$ to $w$, then we say that $w$ is a \emph{descendant} of $v$ (and that $v$ is an \emph{ancestor} of $w$, however, we do not require this notion), so in our terminology, $v$ might be but not necessarily is a descendant of itself. We write $\textnormal{dec}(v)$ to denote the set of descendants of $v$, similarly, $\textnormal{dec}(A):=\bigcup_{v\in A}\textnormal{dec}(v)$ for a set of vertices $A\subseteq V\cup\mathcal{L}$. As vertices in our paper only have observed children and hence observed descendants, we do not further distinguish the descendants-notation into observed and latent descendants. However, we sometimes write $\textnormal{dec}_{G^\mathcal{L}}$ to indicate the graph we are referring to. Besides, we say that a non-endpoint vertex $v\in V\cup\mathcal{L}$ is a collider in a triplet of subsequent vertices $(u, v, w)$ in a walk $\pi$ if the corresponding edges to that triplet in $\pi$ are $u\rightarrow v\leftarrow w$, otherwise, we say that $v$ is a non-collider in that triplet. 

\subsection{Rational Identifiability}
\label{sec_generic_id}
In this section, we introduce our notion of rational identifiability. For better connectivity to other existing results in Section \ref{sec_determinantal_results}, our notion refers to a different underlying space than the notion of rational identifiability from the related LF-HTC paper from \citet{barber2022half}. As we show in Section \ref{sec_terminologies_implications},
both notions are equivalent in our setting. We start this section by recalling some basic required algebraic and topological terms, then continue by introducing further required notation, and then finish by introducing our notion of rational identifiability.

We first recall some basic algebraic and topological terms (c.f.\ \citealp{cox2015ideals}):
A set $A\subseteq \mathbb{R}^m:=\{(k_1,\ldots,k_m):\;k_1,\ldots,k_m\in \mathbb{K}\}$ with $m\in \mathbb{N}$ is called an \emph{affine variety} or \emph{algebraic set} if one can for some finite set of polynomials $S\subseteq \mathbb{R}[x_1,\ldots,x_m]$ write $A=V(S):=\{a\in \mathbb{R}^m:f(a)=0 \;\forall f\in S\}$. 
The \emph{Zariski topology} over $\mathbb{R}^m$ is the topology whose closed sets are exactly all the algebraic subsets of $\mathbb{R}^m$. The \emph{Zariski closure} of a set $A\subseteq \mathbb{R}^m$ is the smallest algebraic subset of $\mathbb{R}^m$ which contains $A$, and which we denote by $\overline{A}$. An algebraic set is called \emph{irreducible} if it cannot be written as the union of two proper algebraic subsets.
It holds that every proper algebraic subset of an irreducible algebraic set has Lebesgue measure zero; see, for example (for a slightly simpler statement) the lemma in \citet{okamoto1973distinctness}.

We further employ the following notation (c.f.\ \citealp{barber2022half}):
 We write $\mathbb{R}^{D_V}$ to denote the set of all real $d\times d$-matrices $\Lambda = (\lambda_{vw})$ with support $D_V$, that is, if $v\rightarrow w\notin D_V$, then $\lambda_{vw}=0$. Similarly, we write $\mathbb{R}^{D_V}_{\textnormal{reg}}$ to denote the set of all matrices $\Lambda\in \mathbb{R}^{D_V}$ for which $I_d-\Lambda$ is invertible. 
Also, we write $\mathbb{R}^{D_{\mathcal{L}V}}$ for the set of real $\ell \times d$ matrices $\Gamma = (\gamma_{hv})$ with support $D_{\mathcal{L}V}$, that is, if $h\rightarrow v\notin D_{\mathcal{L}V}$, then $\gamma_{hv}=0$. Furthermore, we let $\textnormal{diag}^+_d$ denote the set of all $d\times d$ diagonal matrices with positive diagonal indexed by the elements of $V$, and analogously, $\textnormal{diag}^+_\ell$ denotes the set of all $\ell\times \ell$ diagonal matrices with positive diagonal indexed by the elements of $\mathcal{L}$. Finally, we let $\textnormal{PD}(d)$ denote the cone of positive definite symmetric $d\times d$-matrices.
We also define the parameter space $\Theta:= \mathbb{R}^{D_V}_{\textnormal{reg}}\times \mathbb{R}^{D_{\mathcal{L}V}}\times \textnormal{diag}^+_d\times \textnormal{diag}^+_\ell$ and the map
\begin{align*}
    \chi_{G^{\mathcal{L}}}:\Theta \longrightarrow \textnormal{PD}(d),\;
    (\Lambda, \Gamma, \Omega_{\textnormal{diag}}, \mathcal{V}_L)\mapsto (I_d-\Lambda)^{-T}(\Omega_{\textnormal{diag}}+\Gamma^T\mathcal{V}_L\Gamma )(I_d-\Lambda)^{-1}.
\end{align*}

We now define our notion of rational identifiability.
\begin{definition}[Rational identifiability] \label{def:rat-id}
\mbox{}
\begin{enumerate}[label=(\alph*)]
    \item     The latent-factor graph $G^\mathcal{L}$ is rationally identifiable if there exists a proper algebraic subset $A_\Theta\subsetneq \overline{\Theta}$ and a rational map $\psi:\textnormal{PD}(d)\rightarrow \mathbb{R}^{D_V}_{\textnormal{reg}}$ such that $(\psi\circ \chi_{G^\mathcal{L}})(\Lambda,\allowbreak\Gamma,\allowbreak\Omega_{\textnormal{diag}},\allowbreak\mathcal{V}_L)=\Lambda$ for all $(\Lambda,\Gamma, \Omega_{\textnormal{diag}},\mathcal{V}_L)\in \Theta\setminus A_\Theta$.
    \item The direct causal effect $\lambda_{vw}$, or also simply the edge $v\rightarrow w\in D_V$, or if the vertex $w$ is clear from the context simply the parent $v$, is rationally identifiable if there exists a proper algebraic subset $A_\Theta\subsetneq \overline{\Theta}$ and a rational map $\psi:\textnormal{PD}(d)\rightarrow \mathbb{R}$ such that ($\psi\circ\chi_{G^\mathcal{L}})(\Lambda, \Gamma, \Omega_{\textnormal{diag}},\mathcal{V}_L)=\lambda_{vw}$ for all $(\Lambda,\Gamma, \Omega_{\textnormal{diag}},\mathcal{V}_L)\in \Theta\setminus A_\Theta$.
\end{enumerate}
\end{definition}
Because $\overline{\Theta}\cong\mathbb{R}^{m_\Theta}$, where $m_\Theta:=|D_V| + |D_{\mathcal{LV}}|+d+\ell$, it follows that $\overline{\Theta}$ is irreducible (see Section \ref{app_further_exp} in the Appendix for a more detailed explanation of this fact) and thus that every proper algebraic subset $A_\Theta$ of $\overline{\Theta}$ has Lebesgue measure zero. Therefore, our notion of rational identifiability implies that an absolutely continuous random choice of $(\Lambda,\Gamma,\Omega_{\textnormal{diag}},\mathcal{V}_L)\in \Theta$ almost surely results in a covariance matrix $\Sigma$ from which  one can recover $\Lambda$ or some causal effect $\lambda_{vw}$ by rational formulas in the entries of $\Sigma$. 
Besides, note that rational identifiability of $G^\mathcal{L}$ is equivalent to rational identifiability of all edges in $D_V$. 

We now finish this section with the following remark.
\begin{remark}
\label{remark_notions}
As we further explain in Section \ref{sec_terminologies_implications}, all results  in this paper are also true when working with the notion of rational identifiability from the related LF-HTC paper from \citet{barber2022half}, which refers to a different underlying space.
\end{remark}

\subsection{Treks, Latent-Factor Half-Treks and the Latent-Factor Half-Trek Criterion}
\label{sec_lfhtc}

In this section, we first define certain types of paths, namely \emph{treks} and \emph{latent-factor half-treks}, both of which are essential for the remainder of this work.
Treks are related to covariances via the trek rule (e.g.,  \citealp{wright1921correlation}, \citealp{wright1934method}, Section 2 in \citealp{sullivant2010trek},  Section 2 in \citealp{foygel2012half}). We then define several further trek-based concepts. Finally, we present the already existing latent-factor half-trek criterion (LF-HTC) from \citet{barber2022half} which also applies to latent-factor graphs and which to the best of our knowledge is the current most encompassing criterion for this type of graphs. In addition to the just mentioned references for the trek rule, the definitions and results in this section are also further based on Section 3 in \citet{barber2022half}.

\begin{definition}[Trek and latent-factor half-trek]
A walk $\pi$ in the latent-factor graph $G^\mathcal{L}$ is a \emph{trek from source $s$ to target $t$} if it is a walk from $s\in V\cup\mathcal{L}$ to $t\in V\cup \mathcal{L}$ and does not contain any colliders, that is, if it is of the form 
\begin{align*}
    s=v^L_l\leftarrow v^L_{l-1}\leftarrow \cdots \leftarrow v^L_1\leftarrow v^T\rightarrow v^R_1\rightarrow \cdots \rightarrow v^R_{r-1}\rightarrow v^R_r=t,
\end{align*}
where $v^T, v^L_1,\ldots,v^L_l,v^R_1,\ldots,v^R_r\in V\cup \mathcal{L}$.
The left-hand side of $\pi$ is defined by $\textnormal{Left}(\pi)=\{v^T,\allowbreak v^L_1,\allowbreak\ldots,\allowbreak v^L_l\}$ and the right-hand side of $\pi$ is defined by $\textnormal{Right}(\pi)=\{v^T,\allowbreak v^R_1,\allowbreak\ldots,\allowbreak v^R_r\}$. The top node of $\pi$ is $v^T$. It is allowed that $\textnormal{Left}(\pi)=\emptyset$ or $\textnormal{Right}(\pi)=\emptyset$. If $\pi$ just consists of a single vertex, then $\pi$ is called trivial and in this case, the left and right of $\pi$ just consist of $\pi$'s single vertex. For vertices $s, t\in V\cup \mathcal{L}$, we write $\mathcal{T}(s,t)$ to denote the set of all treks from $s$ to $t$ in $G^\mathcal{L}$.

A \emph{latent-factor half-trek from source $s$ to target $t$} is a special type of trek of the form
\begin{align*}
    s\rightarrow v_1\rightarrow \cdots \rightarrow v_n\rightarrow t
\end{align*}
    or of the form
    \begin{align*}
        s\leftarrow h \rightarrow v_1 \rightarrow \cdots \rightarrow v_n \rightarrow t
    \end{align*}
    for $v_1,\ldots, v_n\in V$ and $h\in \mathcal{L}$.
\end{definition} 
		\begin{figure}
		\center
\begin{subfigure}[t]{0.4\textwidth}
		\scalebox{0.7}{
\begin{tikzpicture}
	\node[circle, draw, fill = c3, line width=0.4mm] (h1) at (0.75,0) {$h_1$};
	\node[circle, draw, line width=0.4mm] (1) at (-3, -2) {1};
	\node[circle, draw, line width=0.4mm] (2) at (-1.5, -2) {2};
	\node[circle, draw, line width=0.4mm] (3) at (0, -2) {3};
	\node[circle, draw, line width=0.4mm] (4) at (1.5, -2) {4};
	\node[circle, draw, line width=0.4mm] (5) at (3, -2) {5};
	\node[circle, draw, line width=0.4mm] (6) at (4.5, -2) {6};
	
	\path [->, line width = 0.5mm, dashed, color = red] (h1) edge node[left] {} (1);
	\path [->, line width = 0.5mm, dashed, color = red] (h1) edge node[left] {} (2);
	\path [->, line width = 0.5mm, dashed, color = red] (h1) edge node[left] {} (3);
	\path [->, line width = 0.5mm, dashed, color = red] (h1) edge node[left] {} (4);
	\path [->, line width = 0.5mm, dashed, color = red] (h1) edge node[left] {} (5);
	\path [->, line width = 0.5mm, dashed, color = red] (h1) edge node[left] {} (6);
	
	\path [->, line width = 0.5mm, color = blue] (1) edge node[left] {} (2);
	\path [->, line width = 0.5mm, color = blue] (2) edge node[left] {} (3);
	\path [->, line width = 0.5mm, color = blue] (3) edge node[left] {} (4);
	\path [->, line width = 0.5mm, color = blue] (4) edge node[left] {} (5);
	\path [->, line width = 0.5mm, color = blue] (5) edge node[left] {} (6);
	\path [->, line width = 0.5mm, color = blue, bend right] (1) edge node[left] {} (5);
\end{tikzpicture}
}
\subcaption{}
\end{subfigure}
\begin{subfigure}[t]{0.4\textwidth}
		\scalebox{0.7}{
\begin{tikzpicture}
	\node[circle, draw, fill = c3, line width=0.4mm] (h1) at (0.75,0) {$h_1$};
	\node[circle, draw, line width=0.4mm] (1) at (-1.5, -1.5) {1};
	\node[circle, draw, line width=0.4mm] (2) at (0, -2) {2};
	\node[circle, draw, line width=0.4mm] (3) at (1.5, -2) {3};
	\node[circle, draw, line width=0.4mm] (4) at (3, -1.5) {4};

	\node[circle, draw, line width=0.4mm, color = white] (p) at (-2.3, -2.15) {placeholder};

	\path [->, line width = 0.5mm, dashed, color = red] (h1) edge node[left] {} (1);
	\path [->, line width = 0.5mm, dashed, color = red] (h1) edge node[left] {} (2);
	\path [->, line width = 0.5mm, dashed, color = red] (h1) edge node[left] {} (3);
	\path [->, line width = 0.5mm, dashed, color = red] (h1) edge node[left] {} (4);

	\path [->, line width = 0.5mm, color = blue] (1) edge node[left] {} (2);
	\path [->, line width = 0.5mm, color = orange] (2) edge node[left] {} (3);
	\path [->, line width = 0.5mm, color = blue] (4) edge node[left] {} (3);
\end{tikzpicture}
}
\subcaption{}
\end{subfigure}

\caption{Left: Example latent-factor graph which is not identifiable via the LF-HTC but via the eLF-HTC, see Example \ref{example_elfhtc}. Right: Two-proxy graph for which the edge $2\rightarrow 3$ (\color{orange} orange\color{black}) is identifiable via the eLF-HTC but not via the LF-HTC, see Example \ref{ex_two_proxies}.}
		\label{fig_0}
	\end{figure}
\begin{example}
Consider the latent-factor graph in Figure \ref{fig_0}a. The path $\pi_1:2\leftarrow h_1\rightarrow 3$ is a trek because it contains no colliders. The trek $\pi_1$ is even a latent-factor half-trek because it has the special required form. Similarly, the path $\pi_2:1\rightarrow 2\rightarrow 3$ is a latent-factor half-trek. The path $\pi_3:1\rightarrow 2\leftarrow h_1\rightarrow 3$, however, is not a trek (and hence, also not latent-factor half-trek) because it contains a collider, namely $1\rightarrow 2\leftarrow h_1$.
\end{example}
Several treks can be considered together, which gives rise to the following definition.
\begin{definition} [System of treks]
A \emph{system of treks $\Pi$} is a set of treks $\Pi=\{\pi_1,\ldots,\pi_n\}$.
    One says that $\Pi$ has no sided intersection if
    \begin{align*}
        \textnormal{Left}(\pi_i)\cap \textnormal{Left}(\pi_j)=\emptyset = \textnormal{Right}(\pi_i)\cap \textnormal{Right}(\pi_j) \text{ for all $i,j\in\{1,\ldots, n\}$ with $i\neq j$}.
    \end{align*}
\end{definition}
\addtocounter{example}{-1}
\begin{example}[continued]
The system of (latent-factor half-)treks $\Pi:=\{\pi_1,\pi_2\}$ has sided intersection because $3\in \textnormal{Right}(\pi_1)\cap \textnormal{Right}(\pi_2)$.
\end{example}

To state the LF-HTC, there is a further concept that we need to introduce.
\begin{definition}[Latent-factor half-trek reachability] For two distinct vertices $v,w\in V$ in a latent-factor graph $G^\mathcal{L}$ and for a set $H\subseteq \mathcal{L}$, we say that \emph{$w$ is half-trek reachable from $v$ while avoiding $H$}, denoted by $w\in \textnormal{htr}_H(v)$, if there exists a non-trivial latent-factor half-trek from $v$ to $w$ in the latent-factor graph $G^\mathcal{L}$ which does not pass through any node in $H$. For a set $U\subseteq V$, we write $w\in \textnormal{htr}_H(U)$ provided that $w\in \textnormal{htr}_H(u)$ for some $u\in U$.
\end{definition}
\addtocounter{example}{-1}
\begin{example}[continued] Again consider the latent-factor graph from Figure \ref{fig_0}a.
For $H=\{h_1\}$, we have $2\notin \textnormal{htr}_H(3)$, however, for $H=\emptyset$, we have $2\in \textnormal{htr}_H(3)$ due to the latent-factor half-trek $3\leftarrow h_1\rightarrow 2$.
\end{example}
We are now ready to state the LF-HTC and its accompanying identification result.

\begin{definition}
    For a node $v\in V$, the triple $(Y, Z, H)\in 2^{V\setminus \{v\}}\times 2^{V\setminus \{v\}}\times 2^\mathcal{L}$ satisfies the latent-factor half-trek criterion (LF-HTC) with respect to $v$ if:
    \begin{enumerate}[label=(\roman*)]
    \item $|Y|=|\textnormal{pa}_V(v)| + |H|$ and $|Z|=|H|$ with $Z\cap \{v\}=\emptyset$ and $Z\cap \textnormal{pa}_V(v)=\emptyset$,\footnote{The condition $Z\cap \{v\}=\emptyset$ does not occur in the original formulation of the LF-HTC in \citet{barber2022half}. However, when $v\in Z$, the subsequent LF-HTC identification theorem (Theorem \ref{theorem_lfhtc}) does not yield any further identification because every observed edge into $Z$ is required to be rationally identifiable.}
    \item $Y\cap (Z\cup\{v\})=\emptyset$ and $\textnormal{pa}_\mathcal{L}(Y)\cap \textnormal{pa}_\mathcal{L}(Z\cup \{v\})\subseteq H$, and
    \item there exists a system of latent-factor half-treks with no sided intersection from $Y$ to $Z\cup \textnormal{pa}_V(v)$ in $G^\mathcal{L}$, such that for each $z\in Z$, the half-trek terminating at $z$ takes the form $y\leftarrow h\rightarrow z$ for some $y\in Y$ and some $h\in H$.
    \end{enumerate}
\end{definition}

\begin{theorem}[LF-HTC identifiability]
\label{theorem_lfhtc}
Suppose the triple $(Y, Z, H)\in 2^{V\setminus \{v\}}\times 2^{V\setminus \{v\}}\times 2^\mathcal{L}$ satisfies the LF-HTC with respect to $v\in V$. If all directed edges $u\rightarrow y\in D_V$ with head $y\in Z\cup (Y\cap \textnormal{htr}_H(Z\cup \{v\}))$ are rationally identifiable, then all directed edges in $D_V$ with $v$ as head are rationally identifiable.
\end{theorem}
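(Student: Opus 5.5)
The approach is the LF-HTC argument of \citet{barber2022half}, recast for our parameter space: write down a square linear system whose unknowns are $\lambda_{\textnormal{pa}_V(v),v}$ together with auxiliary latent quantities, and whose coefficients are rational in $\Sigma$ and the already identified edges.

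\textbf{Step 1: deriving the equations.} From \eqref{eq_Sigma} we have $(I_d-\Lambda)^T\Sigma=\Omega(I_d-\Lambda)^{-1}$. Column $v$ of $(I_d-\Lambda)$ gives, for any $y\in V$,
\begin{align*}
[(I_d-\Lambda)^T\Sigma]_{y,v}=\Sigma_{yv}-\sum_{p\in\textnormal{pa}_V(v)}\lambda_{pv}\Sigma_{yp}.
\end{align*}
The same quantity also equals $\textnormal{Cov}(\tilde X_y, X_v - \sum_p\lambda_{pv}X_p)$ with $\tilde X_y:=X_y$, where $X_v-\sum_p\lambda_{pv}X_p=\sum_{h}\gamma_{hv}L_h+\epsilon_v$.

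For $y\in Y$ I would replace $X_y$ by the partially cleaned variable $\tilde X_y:=[(I_d-\Lambda)^TX]_y=\sum_h\gamma_{hy}L_h+\epsilon_y$ whenever $y\in\textnormal{htr}_H(Z\cup\{v\})$; this uses that the edges into $y$ are identified. Otherwise I keep $\tilde X_y=X_y$. In both cases the covariance of $\tilde X_y$ with the residual of $v$ is rational in $\Sigma$ and identified entries.

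Condition (ii) makes $\textnormal{Cov}(\tilde X_y,\epsilon_v)=0$, since $y\neq v$ and $v$ is not a descendant of $y$ along a path that reaches $\epsilon_v$. Equivalently, I must show that $\tilde X_y$ has no trek to $\epsilon_v$; the cleaning step handles those $y$ that half-trek reach $v$. The latent contribution is then confined to $\sum_{h\in H}\gamma_{hv}\textnormal{Cov}(\tilde X_y,L_h)$, because $\textnormal{pa}_{\mathcal L}(Y)\cap\textnormal{pa}_{\mathcal L}(v)\subseteq H$.

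\textbf{Step 2: eliminating the unknown latent terms.} The unknowns $\textnormal{Cov}(\tilde X_y,L_h)\gamma_{hv}$ are handled using $Z$. For $z\in Z$, the cleaned variable $\tilde X_z$ (whose incoming edges are all identified) satisfies
\begin{align*}
\textnormal{Cov}(\tilde X_y,\tilde X_z)=\sum_{h\in H}\textnormal{Cov}(\tilde X_y,L_h)\gamma_{hz}\mathcal{V}_{L,hh}
\end{align*}
by the same locality argument, using $\textnormal{pa}_{\mathcal L}(Y)\cap\textnormal{pa}_{\mathcal L}(Z)\subseteq H$ and $Y\cap Z=\emptyset$. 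Stacking these gives the matrix form
\begin{align*}
A\lambda_{\textnormal{pa}(v),v}+B\beta=c,
\end{align*}
with $|Y|=|\textnormal{pa}_V(v)|+|H|$ equations. Here $\beta\in\mathbb{R}^{|H|}$ is obtained by expressing the $H$-span through the $|H|=|Z|$ columns $\textnormal{Cov}(\tilde X_Y,\tilde X_z)$, which is legitimate once the relevant $|H|\times|H|$ factor is invertible. The result is a square system in the unknowns $(\lambda_{\textnormal{pa}(v),v},\beta)$ with coefficient matrix $[\Sigma_{Y,\textnormal{pa}(v)}\mid\textnormal{Cov}(\tilde X_Y,\tilde X_Z)]$, rational in $\Sigma$. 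Cramer's rule then gives the rational map $\psi$.

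\textbf{Step 3: generic invertibility, the main obstacle.} The determinant of this coefficient matrix is a polynomial on $\overline{\Theta}$. Since $\overline{\Theta}$ is irreducible, it suffices to show the polynomial is not identically zero; its zero set is then the proper algebraic subset $A_\Theta$, and we take the union with the exceptional sets of the identified edges. I would show nonvanishing with the trek rule and the Gessel–Viennot/Lindström-type expansion of \citet{sullivant2010trek}: the determinant is a sum over trek systems from $Y$ to $\textnormal{pa}_V(v)\cup Z$ without sided intersection. Condition (iii) supplies such a system, in which the $Z$-targets are reached via $y\leftarrow h\rightarrow z$. Choosing parameters supported on exactly this system, and sending all other edge weights to zero, makes the determinant a nonzero monomial.

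The delicate part is verifying that cleaning $\tilde X_y$ does not destroy the half-trek system. I would check that a latent-factor half-trek starting at $y$ either begins with a latent edge, which survives cleaning, or is a directed path, in which case $y\notin\textnormal{htr}_H(Z\cup\{v\})$ is not forced. If that fails, I would fall back on the full determinantal argument of \citet{barber2022half}, together with the equivalence of identifiability notions noted in Remark~\ref{remark_notions}.
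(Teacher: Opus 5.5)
Your Steps~1--2 are the standard construction. It is the one the paper uses, following \citet{barber2022half}, in the proof of Theorem~\ref{theorem_elfhtc}; that theorem contains this statement as the case $W_v=\textnormal{pa}_V(v)$, $W_z=\emptyset$ (Proposition~\ref{lf_htc_implications}). Rows are indexed by $Y$ and cleaned exactly when $y\in\textnormal{htr}_H(Z\cup\{v\})$, and the unknowns are $\lambda_{\textnormal{pa}_V(v),v}$ plus $|H|$ auxiliaries. With $M_{yh}:=\textnormal{Cov}(\tilde X_y,L_h)$, the $Z$-block is $M\Gamma_{H,Z}$ and the $v$-residual is $M\Gamma_{H,v}$. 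So $\beta=\Gamma_{H,Z}^{-1}\Gamma_{H,v}$ is a fine substitute for Claim~2 of \citet{barber2022half}. You should say why $\Gamma_{H,Z}$ is generically invertible: the treks $y\leftarrow h\rightarrow z$ in (iii) have pairwise distinct tops, and $|H|=|Z|$.

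Several details need repair:
\begin{itemize}
\item The quantity displayed in Step~1 is $[\Sigma(I_d-\Lambda)]_{yv}$, not $[(I_d-\Lambda)^T\Sigma]_{yv}$.
\item For cleaned rows the coefficient block is $[(I_d-\Lambda)^T\Sigma]_{y,\textnormal{pa}_V(v)}$, not $\Sigma_{y,\textnormal{pa}_V(v)}$.
\item The factor $\mathcal{V}_{L,hh}$ in Step~2 is double-counted.
\item For an uncleaned $y$, the vanishing of $\textnormal{Cov}(X_y,\epsilon_v)$ and of the terms with $h\in\textnormal{pa}_{\mathcal L}(v)\setminus H$ follows from $y\notin\textnormal{htr}_H(v)$. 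That gives no directed path $v\rightarrow\cdots\rightarrow y$ and no $v\leftarrow h\rightarrow\cdots\rightarrow y$ with $h\notin H$. It does not follow from condition (ii) or from $v\notin\textnormal{dec}(y)$; (ii) only handles the cleaned rows. The same remarks apply with $z$ in place of $v$.
\end{itemize}

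The genuine gap is Step~3, which is where the proof actually lives. The expansion of \citet{sullivant2010trek} describes minors of $\Sigma$. Your matrix has cleaned rows (sums over latent-factor half-treks) and cleaned $Z$-columns, so that expansion does not apply as stated.

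Worse, the claim that restricting the parameters to the support of the system leaves a single nonzero monomial is false. Take $Y=\{y_1,y_2\}$, $\textnormal{pa}_V(v)=\{p_1,p_2\}$, $H=Z=\emptyset$, and latent $h_1,h_2$ with children $y_1,y_2$. Use the system $y_1\leftarrow h_1\rightarrow y_2\rightarrow p_1$, $y_2\leftarrow h_2\rightarrow y_1\rightarrow p_2$, which has no sided intersection. On this support the determinant is $-\lambda_{y_2p_1}\lambda_{y_1p_2}\det\Sigma_{Y,Y}$, a sum of many monomials. Moreover, the system's own monomial is also produced by the system with $h_1,h_2$ swapped. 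What is needed is an argument that the monomial of a suitably chosen system survives all cancellations in a matrix with mixed trek/half-trek rows. That is precisely Lemma~2 of \citet{foygel2012half} (Lemma~\ref{lemma_modified_lemma_2} here).

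Your ``delicate part'' is also misplaced. Cleaning never destroys a latent-factor half-trek, because $\textnormal{Cov}(\tilde X_y,X_p)$ is exactly the weighted sum over latent-factor half-treks from $y$ to $p$.

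The paper closes the step in three moves. First it sets $\lambda_{pz}=0$ for all edges into $Z$; no half-trek of the system can use such an edge without meeting the trek $y\leftarrow h\rightarrow z$ on the right side. This turns the $Z$-columns into ordinary covariance columns. It then passes to the latent projection $\hat G$, where the system becomes a half-trek system with no sided intersection, and applies Lemma~\ref{lemma_modified_lemma_2}. Your fallback to \citet{barber2022half} is thus not a backup but the missing proof.
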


Theorem \ref{theorem_lfhtc} lays the foundation for a practical identification algorithm which is able to certify that a given graph is rationally identifiable and which is also able to print out the corresponding rational identification formulas.
This algorithm iteratively certifies rational identifiability of the observed edges in a given graph by applying Theorem \ref{theorem_lfhtc}. This algorithm also iteratively constructs the rational identification formulas for the remaining edge coefficients by reusing the rational identification formulas constructed in previous iterations. Because  edges from other observed vertices into the auxiliary set $Z$ must be known to be rationally identifiable and also because the corresponding rational formulas must be known, the algorithm must start with auxiliary sets $Z$ that have no observed parents.
Our later improvements in Section \ref{section_elfhtc} allow for auxiliary sets $Z$ with a non-empty observed parent set already at the beginning of such an algorithm.

\section{Edgewise Latent-Factor Half-Trek Criterion Allowing for Auxiliary Sets with Unidentified Parents}
\label{section_elfhtc}
The LF-HTC requires that all edge coefficients $\lambda_{wz}$ with $w\in V$ and $z\in Z$ are known to be rationally identifiable thus giving rise to an iterative procedure where one must start with auxiliary sets $Z$ for which $\textnormal{pa}_V(Z)=\emptyset$. For our new criterion which we call eLF-HTC and which we present in this section, we relax that requirement by allowing $z\in Z$ for which some $\lambda_{wz}$ are not already known to be rationally identifiable. 
Furthermore, in comparison to the LF-HTC, we build upon the fact that parents of the target vertex $v$ which are already known to be rationally identifiable need not to be re-identified and can be excluded from $\textnormal{pa}_V(v)$.

For a better intuitive understanding of the following definition of the eLF-HTC, the reader might for now translate the occuring set $W_v$ in the definition of the eLF-HTC with the observed parents of $v$ which have not yet been rationally identified. Similarly, each set $W_z$ might for now be translated with the parents of $z$  which have not yet been rationally identified. Afterwards, in Remark \ref{remark_W_v}, we will see that 
this intuition for $W_v$ suffices and does not reduce the number of eLF-HTC identifiable direct causal effects. However, in Remark \ref{remark_W_z} we will see that this intuition for the sets $W_z$ does not suffice as it reduces the number of eLF-HTC identifiable direct causal effects.
\begin{definition} [Edgewise latent-factor half-trek criterion allowing for auxiliary sets with unidentified parents (eLF-HTC)]
	Let $v\in V$ and let $W_v\subseteq \textnormal{pa}_V(v)$. In addition, let $Z\subseteq V\setminus \{v\}$ and $W_z\subseteq \textnormal{pa}_V(z)$ for all $z\in Z$. Write $W_Z:=\bigcup_{z\in Z}W_z$ and $Z^{(1)}:=\{z\in Z:\;W_z\subsetneq\textnormal{pa}_V(z)\}$ and $Z^{(2)}:=\{z\in Z:\;W_z=\textnormal{pa}_V(z)\}$. Furthermore, let $Y\subseteq V\setminus \{v\}$ and $H\subseteq \mathcal{L}$. 
	The tuple $(Y, Z, (W_z)_{z\in Z}, H)$ satisfies the eLF-HTC with respect to $(v, W_v)$ if 
	\begin{enumerate}[label=(\roman*)]
		\item $|Z| = |H|$ and $|Y|=|W_v\cup Z\cup W_Z|$ with $Z\cap \{v\} = \emptyset$ and $Z^{(1)}\cap (W_Z\cup W_v)=\emptyset$,
		\item $Y\cap (Z\cup \{v\}) = \emptyset$ and $\textnormal{pa}_{\mathcal{L}}(Y)\cap \textnormal{pa}_{\mathcal{L}}(Z\cup \{v\})\subseteq H$, and
		\item there exists a system of latent-factor half-treks with no sided intersection from $Y$ to $W_v\cup Z \cup W_Z$ in $G^{\mathcal{L}}$, such that for each $z\in Z$, the half-trek terminating at $z$ takes the form $y\leftarrow h \rightarrow z$ for some $y\in Y$ and some $h\in H$.
	\end{enumerate}
\end{definition}

\begin{theorem}[eLF-HTC identification result]
	\label{theorem_elfhtc}
	Suppose the tuple $(Y, Z, (W_z)_{z\in Z}, H)$ satisfies the eLF-HTC with respect to $(v, W_v)$. Also suppose that all directed edges $u\rightarrow y\in D_V$ with head $y\in (Y\cap \textnormal{htr}_H(Z\cup \{v\}))$,  all directed edges $p^z\rightarrow z\in D_V$ with $p^z \in \textnormal{pa}_V(z)\setminus W_z$ for all $z\in Z^{(1)}$, and all directed edges $p^v\rightarrow v\in D_V$ with $p^v\in \textnormal{pa}_V(v)\setminus W_v$ are rationally identifiable. Then, all directed edges $p^v\rightarrow v \in D_V$ with $p^v\in  W_v\setminus (Z^{(2)}\cup W_Z)$ are rationally identifiable.
\end{theorem}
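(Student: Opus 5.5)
We follow the strategy behind Theorem \ref{theorem_lfhtc} in \citet{barber2022half}: derive a square linear system in the unknown coefficients whose matrix is built from entries of $\Sigma$ and already-identified rational quantities, and show it is generically invertible through the half-trek system of condition (iii) and a Lindström--Gessel--Viennot/trek-determinant argument. The unknowns are $\lambda_{pv}$ for $p\in W_v$, $\lambda_{pz}$ for $p\in W_z$ and $z\in Z$, and auxiliary quantities encoding the latent confounding of $Z\cup\{v\}$ through $H$. The count works as follows: $|Y|=|W_v\cup Z\cup W_Z|$ equations, with one unknown per element of $W_v\cup W_Z$ and one per $z\in Z$, the latter playing the role of the $|H|$ latent loadings.

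First we fix a target. For $y\in Y$, by (ii) the latent parents shared by $y$ and $Z\cup\{v\}$ lie in $H$. Write $\epsilon'_v=X_v-\sum_{p\in\textnormal{pa}_V(v)}\lambda_{pv}X_p=\sum_h\gamma_{hv}L_h+\epsilon_v$, and similarly $\epsilon'_z$. Define $\tilde X_y:=X_y-\sum_{u}\lambda_{uy}X_u$ when $y\in\textnormal{htr}_H(Z\cup\{v\})$, which is rationally available by hypothesis, and $\tilde X_y:=X_y$ otherwise. We then get $\textnormal{Cov}(\tilde X_y,\epsilon'_v)=\sum_{h\in H}(\ldots)$, which involves only $H$-confounding. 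The $|H|=|Z|$ latent directions are eliminated using the $z$'s: for each $z\in Z$, $\epsilon'_z$ carries the loadings $\gamma_{hz}$, and generically $(\gamma_{hz})_{h\in H,z\in Z}$ is invertible on the $H$-part. So $\sum_{h\in H}\gamma_{hv}L_h$ restricted to $H$ equals $\sum_{z}\beta_z(\epsilon'_z)_H$ for unknown coefficients $\beta_z$. This yields, for each $y\in Y$, the equation
\begin{align*}
\textnormal{Cov}(\tilde X_y, X_v)-\sum_{p\in\textnormal{pa}_V(v)\setminus W_v}\lambda_{pv}\textnormal{Cov}(\tilde X_y,X_p)=\sum_{p\in W_v}\lambda_{pv}\textnormal{Cov}(\tilde X_y,X_p)+\sum_{z\in Z}\beta_z\Big(\textnormal{Cov}(\tilde X_y,X_z)-\sum_{q\in\textnormal{pa}_V(z)}\lambda_{qz}\textnormal{Cov}(\tilde X_y,X_q)\Big).
\end{align*}
The left side is known. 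On the right, the terms with $q\notin W_z$ are known for $z\in Z^{(1)}$, and for $z\in Z^{(2)}$ all $q$ lie in $W_z$. The products $\beta_z\lambda_{qz}$ with $q\in W_z$ are merged, together with $\lambda_{qv}$ when $q\in W_v$, into one new unknown per element of $W_v\cup W_Z$. After this the system is linear, with $|W_v\cup Z\cup W_Z|$ unknowns and $|Y|$ equations. The condition $Z^{(1)}\cap(W_Z\cup W_v)=\emptyset$ keeps the column for $z\in Z^{(1)}$ distinct from the merged columns. For $z\in Z^{(2)}$ overlapping $W_Z\cup W_v$, the columns merge, and so the coefficients of $p^v\in Z^{(2)}\cup W_Z$ get mixed with $\beta$-terms. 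This is exactly why the conclusion is restricted to $W_v\setminus(Z^{(2)}\cup W_Z)$: for those $p$, the merged coefficient is $\lambda_{pv}$ itself.

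Next we establish generic invertibility of the coefficient matrix $A=(\textnormal{Cov}(\tilde X_y,X_c))_{y\in Y,c\in W_v\cup Z\cup W_Z}$, where the $z$-columns are partially residualized. Since $A$ is polynomial in the parameters, it suffices to exhibit one parameter point with $\det A\neq0$. Expanding via the trek rule, the entries become sums over treks from $y$ (or from $\tilde X_y$, which removes the non-latent-half-trek prefixes) to $c$. Choosing parameters that kill every edge outside the half-trek system from (iii), with the $z$-half-treks of the form $y\leftarrow h\rightarrow z$, makes $A$ a permuted monomial matrix, in the style of the proof in \citet{barber2022half}. We expect this step to be the main obstacle, because the residualization of $z$-columns and the merging of columns must not destroy the leading monomial. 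It is also here that (ii) and the $H$-avoidance in the htr condition are used to show that unwanted treks vanish identically or contribute lower-order terms. Once $A$ is shown generically invertible, Cramer's rule gives rational formulas for the merged unknowns, and in particular for $\lambda_{pv}$ with $p\in W_v\setminus(Z^{(2)}\cup W_Z)$. The exceptional set $\{\det A=0\}$ is a proper algebraic subset of $\overline{\Theta}$, so Definition \ref{def:rat-id} is satisfied.
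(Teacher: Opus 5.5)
Your linear system is the paper's (Step~1 of its proof). It has the same rows: $\Sigma_{y,\cdot}$ for $y\notin\textnormal{htr}_H(Z\cup\{v\})$ and $[(I_d-\Lambda)^T\Sigma]_{y,\cdot}$ otherwise. It also has the same partially residualized $Z^{(1)}$-columns and the same merged unknowns on $Z^{(2)}\cup W_Z$, which is why only $W_v\setminus(Z^{(2)}\cup W_Z)$ is recovered. Your existence argument via $\gamma_{H,v}=\Gamma_{H,Z}\beta$ is a valid, more elementary replacement for the paper's appeal to Claim~2 of \citet{barber2022half}, provided you add two things. First, no sided intersection forces the tops $h$ of the $z$-half-treks to be distinct, so they form a perfect matching $H\to Z$ and $\det\Gamma_{H,Z}\not\equiv 0$. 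Second, your exceptional set must also contain $\{\det\Gamma_{H,Z}=0\}$ and the exceptional sets of the already identified edges.

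The genuine gap is the generic invertibility of the coefficient matrix. This is the substance of the theorem, and you leave it open. Moreover, the one concrete claim you make there is false. Killing all edges outside the half-trek system does not in general give a permuted monomial matrix. For example, $y_1\to y_2\to p_1$ and $y_2\leftarrow h\to p_2$ have no sided intersection, yet in the restricted graph the $y_2$-row is nonzero in both columns $p_1$ and $p_2$. So a real non-vanishing argument is needed. Conditions (ii) and the $\textnormal{htr}$-avoidance do not help here; they are used only for existence.

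The missing idea is the paper's specialization. The half-trek ending at $z\in Z$ is $y\leftarrow h\to z$, so by no sided intersection no other half-trek of the system contains $z$ on its right side. Hence the system uses no observed edge into any $z\in Z$. One may therefore set $\lambda_{p^zz}=0$ for $z\in Z^{(1)}$ and $p^z\in\textnormal{pa}_V(z)\setminus W_z$ without losing the system. This removes the residualization of the $Z^{(1)}$-columns, so every row becomes $\Sigma_{y,P}$ or $[(I_d-\hat\Lambda)^T\Sigma]_{y,P}$ with $P=W_v\cup Z\cup W_Z$. No two columns coincide precisely because $Z^{(1)}\cap(W_v\cup W_Z)=\emptyset$.

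Next, pass to the latent projection $\hat G$, with a bidirected edge for each pair of observed nodes sharing a latent parent. There the system becomes a half-trek system without sided intersection. Lemma~2 of \citet{foygel2012half} then gives $\det\neq 0$ at generic specialized parameters, and hence for generic parameters; its proof allows the mixed choice of $\Sigma$-rows and $(I_d-\Lambda)^T\Sigma$-rows. That is exactly the step your proposal needs to supply.
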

The proof of Theorem \ref{theorem_elfhtc} also gives explicit rational formulas to identify the direct causal effect of interest, which the reader can find in Section \ref{sec_proof_elfhtc}.

We now explain an example. (We present the rational identification formulas for this example in Section \ref{sec_extended_example_elfhtc}).

\begin{example}[eLF-HTC identification example]
\label{example_elfhtc}
Consider the latent-factor graph from Figure \ref{fig_0}a with $V=\{1,\allowbreak 2,\allowbreak 3,\allowbreak 4,\allowbreak 5,\allowbreak 6\}$ and $\mathcal{L}=\{h_1\}$.
This latent-factor graph is not LF-HTC identifiable which one can quickly see by using the R-package SEMID \citep{Rlanguage, barber2025package} in which the LF-HTC is implemented, yet this latent-factor graph is eLF-HTC identifiable (Theorem \ref{theorem_elfhtc}) as we now explain:

\noindent\underline{$v=4$:} Take $W_v=\{3\}$ and $(Y=\{1, 2, 3\}, Z=\{6\}, (W_z)_{z\in Z}=(\{5\}), H=\{h_1\})$. For this choice, we have $\textnormal{pa}_{\mathcal{L}}(Y)=\{h_1\}$ and $\textnormal{htr}_H(Z\cup \{v\})=\{4, 5, 6\}$ and $\textnormal{pa}_{\mathcal{L}}(Z\cup \{v\})=\{h_1\}$. A possible system of latent-factor half-treks is $\{1\rightarrow 5 , 2\leftarrow h_1 \rightarrow 6, 3\}$. In addition, note that no other edges need to be rationally identifiable because $Y\cap (\textnormal{htr}_H(Z\cup \{v\}))=\emptyset$, because $W_v = \textnormal{pa}_V(v)=\{3\}$, and because $Z^{(1)}=\emptyset$.
The identified edge is $3\rightarrow 4$.

\noindent\underline{$v=6$:} Take $W_v=\{5\}$ and $(Y=\{1, 2, 3\}, Z=\{4\}, (W_z)_{z\in Z}=(\{3\}), H=\{h_1\})$. For this choice, we have $\textnormal{pa}_{\mathcal{L}}(Y)=\{h_1\}$ and $\textnormal{htr}_H(Z\cup \{v\})=\{4, 5, 6\}$ and $\textnormal{pa}_{\mathcal{L}}(Z\cup \{v\})=\{h_1\}$. A possible system of latent-factor half-treks is $\{1\rightarrow 5 , 2\leftarrow h_1 \rightarrow 4, 3\}$.  In addition, note that no other edges need to be rationally identifiable because $Y\cap (\textnormal{htr}_H(Z\cup \{v\}))=\emptyset$, because $W_v = \textnormal{pa}_V(v)=\{5\}$, and because $Z^{(1)}=\emptyset$.
The identified edge is $5\rightarrow 6$.

\noindent\underline{$v=2$:} Take $W_v=\{1\}$ and $(Y=\{1, 6\}, Z=\{4\}, (W_z)_{z\in Z}=(\emptyset), H=\{h_1\})$. For this choice, we have $\textnormal{pa}_{\mathcal{L}}(Y)=\{h_1\}$ and $\textnormal{htr}_H(Z\cup \{v\})=\{2, 3, 4, 5, 6\}$ and $\textnormal{pa}_{\mathcal{L}}(Z\cup \{v\})=\{h_1\}$. The system of latent-factor half-treks is $\{1, 6\leftarrow h_1 \rightarrow 4\}$. In addition, note that $Y\cap (\textnormal{htr}_H(Z\cup \{v\}))=\{6\}$, that $W_v = \textnormal{pa}_V(v)=\{1\}$, and that $Z^{(1)}=\{4\}$ and $\textnormal{pa}_V(z)\setminus W_z=\{3\}$, and thus all observed edges required to be rationally identifiable, namely $5\rightarrow 6$ and $3\rightarrow 4$, are rationally identifiable as argued previously. The identified edge is $1\rightarrow 2$.

\noindent\underline{$v=3$:} Take $W_v=\{2\}$ and $(Y=\{2, 4\}, Z=\{1\}, (W_z)_{z\in Z}=(\emptyset), H=\{h_1\})$.  For this choice, we have $\textnormal{pa}_{\mathcal{L}}(Y)=\{h_1\}$ and $\textnormal{htr}_H(Z\cup \{v\})=\{1, 2, 3, 4, 5, 6\}$ and $\textnormal{pa}_{\mathcal{L}}(Z\cup \{v\})=\{h_1\}$. The system of latent-factor half-treks is $\{2, 4\leftarrow h_1 \rightarrow 1\}$. In addition, note that $Y\cap (\textnormal{htr}_H(Z\cup \{v\}))=\{2, 4\}$, that $W_v = \textnormal{pa}_V(v)=\{2\}$, and that $Z^{(1)}=\emptyset$, and thus all observed edges required to be rationally identifiable, namely $1\rightarrow 2$ and $3\rightarrow 4$, are rationally identifiable as argued previously. The identified edge is $2\rightarrow 3$.

\noindent\underline{$v=5$:} Take $W_v=\{1, 4\}$ and $(Y=\{1, 3, 4\}, Z=\{2\}, (W_z)_{z\in Z}=(\emptyset), H=\{h_1\})$.  For this choice, we have $\textnormal{pa}_{\mathcal{L}}(Y)=\{h_1\}$ and $\textnormal{htr}_H(Z\cup \{v\})=\{2, 3, 4, 5, 6\}$ and $\textnormal{pa}_{\mathcal{L}}(Z\cup \{v\})=\{h_1\}$. The system of latent-factor half-treks is $\{1, 3\leftarrow h_1 \rightarrow 2, 4\}$. In addition, note that $Y\cap (\textnormal{htr}_H(Z\cup \{v\}))=\{1, 3, 4\}$, that $W_v = \textnormal{pa}_V(v)=\{1,4\}$, and that $Z^{(1)}=\{2\}$ and $\textnormal{pa}_V(z)\setminus W_z=\{1\}$, and thus all observed edges required to be rationally identifiable, namely $1\rightarrow 2$ and $2\rightarrow 3$ and $3\rightarrow 4$, are rationally identifiable as argued previously. The identified edges are $1\rightarrow 5$ and $4\rightarrow 5$.
\end{example}

Our eLF-HTC criterion is also able to identify the edge from treatment to response in the well-known two-proxy graph \citep{kuroki2014measurement, miao2018identifying}, which the LF-HTC is not able to identify.
\begin{example}[Two-proxy graph]
\label{ex_two_proxies}
Consider the latent-factor graph in Figure \ref{fig_0}b. The edge $2\rightarrow 3$ is not identifiable via the LF-HTC which one can again quickly see by using the R-package SEMID \citep{Rlanguage, barber2025package}, however, this edge is eLF-HTC identifiable. To see this fact, take 
$W_v=\{2,4\}$ and $(Y=\{1, 2\}, Z=\{4\}, (W_z)_{z\in Z} = \emptyset, H=\{h_1\})$. For the corresponding rational formula for identification, see Section \ref{sec_rat_two_proxies}.
\end{example}

\begin{figure}
	\center
	\scalebox{0.8}{
	\begin{tikzpicture}
	\node[circle, draw, fill = c3, line width=0.4mm] (h1) at (0.75,0) {$h_1$};
	\node[circle, draw, line width=0.4mm] (1) at (-3, -2) {1};
	\node[circle, draw, line width=0.4mm] (2) at (-1.5, -2) {2};
	\node[circle, draw, line width=0.4mm] (3) at (0, -2) {3};
	\node[circle, draw, line width=0.4mm] (4) at (1.5, -2) {4};
	\node[circle, draw, line width=0.4mm] (5) at (3, -2) {5};
	\node[circle, draw, line width=0.4mm] (6) at (4.5, -2) {6};
	
	\path [->, line width = 0.5mm, dashed, color = red] (h1) edge node[left] {} (1);
	\path [->, line width = 0.5mm, dashed, color = red] (h1) edge node[left] {} (2);
	\path [->, line width = 0.5mm, dashed, color = red] (h1) edge node[left] {} (3);
	\path [->, line width = 0.5mm, dashed, color = red] (h1) edge node[left] {} (4);
	\path [->, line width = 0.5mm, dashed, color = red] (h1) edge node[left] {} (5);
	\path [->, line width = 0.5mm, dashed, color = red] (h1) edge node[left] {} (6);
	
	\path [->, line width = 0.5mm, color = blue] (1) edge node[left] {} (2);
	\path [->, line width = 0.5mm, color = blue, bend right] (1) edge node[left] {} (3);
	\path [->, line width = 0.5mm, color = blue, bend right] (3) edge node[left] {} (6);
	\path [->, line width = 0.5mm, color = blue] (4) edge node[left] {} (5);
	\path [->, line width = 0.5mm, color = blue] (5) edge node[left] {} (6);
	\path [->, line width = 0.5mm, color = blue, bend right] (1) edge node[left] {} (5);
	\path [->, line width = 0.5mm, color = blue, bend right] (2) edge node[left] {} (4);
	\path [->, line width = 0.5mm, color = blue, bend right] (4) edge node[left] {} (6);
	\path [->, line width = 0.5mm, color = blue, bend right] (2) edge node[left] {} (6);
\end{tikzpicture}
}
\caption{Example graph for Remark \ref{remark_W_z} in Section \ref{section_elfhtc}.}
	\label{fig_1.5}
\end{figure}

We now present a proposition stating that the LF-HTC implies the eLF-HTC.
\begin{proposition}[Identifiability via LF-HTC$\implies$ Identifiability via eLF-HTC]
\label{lf_htc_implications}
	Whenever an edge is identifiable via the LF-HTC, it is identifiable via the eLF-HTC. 
\end{proposition}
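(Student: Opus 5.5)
The plan is to show that every single application of Theorem \ref{theorem_lfhtc} can be simulated by one application of Theorem \ref{theorem_elfhtc}, and then to lift this to the iterative identification procedure by induction. Concretely, suppose $(Y,Z,H)$ satisfies the LF-HTC with respect to $v$. I will use the same $Y$, $Z$ and $H$, and choose
\[
W_v:=\textnormal{pa}_V(v), \qquad W_z:=\emptyset \quad \text{for all } z\in Z .
\]
With this choice $W_Z=\emptyset$. Moreover, $Z^{(1)}$ consists of those $z\in Z$ with $\textnormal{pa}_V(z)\neq\emptyset$, and $Z^{(2)}$ consists of those $z\in Z$ with $\textnormal{pa}_V(z)=\emptyset$.

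I then check the three eLF-HTC conditions in turn. For (i), we have $|Z|=|H|$ directly from the LF-HTC. Since $Z\cap\textnormal{pa}_V(v)=\emptyset$, we get $|W_v\cup Z\cup W_Z|=|\textnormal{pa}_V(v)|+|H|=|Y|$. The requirement $Z\cap\{v\}=\emptyset$ is inherited from the LF-HTC, and $Z^{(1)}\cap(W_Z\cup W_v)\subseteq Z\cap\textnormal{pa}_V(v)=\emptyset$. Condition (ii) is literally the same as in the LF-HTC. For (iii), the target set $W_v\cup Z\cup W_Z$ equals $\textnormal{pa}_V(v)\cup Z$, so the LF-HTC half-trek system, with its prescribed form $y\leftarrow h\rightarrow z$ at each $z\in Z$, works verbatim.

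Next I compare the hypotheses on previously identified edges. Edges into $Y\cap\textnormal{htr}_H(Z\cup\{v\})$ are required by both theorems. For $z\in Z^{(1)}$, Theorem \ref{theorem_elfhtc} requires the edges from $\textnormal{pa}_V(z)\setminus W_z=\textnormal{pa}_V(z)$ into $z$, and these are covered by the LF-HTC assumption that all edges into $Z$ are identified. The requirement on edges from $\textnormal{pa}_V(v)\setminus W_v$ is vacuous, since this set is empty. Thus the eLF-HTC hypotheses are weaker than or equal to those of the LF-HTC. Finally, the conclusion of Theorem \ref{theorem_elfhtc} covers the parents in $W_v\setminus(Z^{(2)}\cup W_Z)=\textnormal{pa}_V(v)\setminus Z^{(2)}$. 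Because $Z\cap\textnormal{pa}_V(v)=\emptyset$, this is all of $\textnormal{pa}_V(v)$, exactly the LF-HTC conclusion.

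The only step needing care is making ``identifiable via the LF-HTC'' precise as an iterative procedure: a sequence of applications of Theorem \ref{theorem_lfhtc}, each using edges certified in earlier steps. I would argue by induction on the step index that every edge certified up to step $k$ by the LF-HTC procedure is also certified by the eLF-HTC procedure. In the inductive step, the hypotheses of Theorem \ref{theorem_elfhtc} only involve edges that the LF-HTC had already certified, which by the induction hypothesis are eLF-HTC certified. I expect no real obstacle beyond this bookkeeping; the substantive point is simply the choice $W_z=\emptyset$, $W_v=\textnormal{pa}_V(v)$ together with the disjointness $Z\cap\textnormal{pa}_V(v)=\emptyset$.
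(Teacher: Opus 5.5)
Your proposal is correct and takes the same route as the paper: the paper also chooses $W_v=\textnormal{pa}_V(v)$ and $W_z=\emptyset$, and notes that the condition $Z^{(1)}\cap(W_Z\cup W_v)=\emptyset$ then reduces to $Z^{(1)}\cap\textnormal{pa}_V(v)=\emptyset$, which the LF-HTC conditions already guarantee. You add the comparison of the side hypotheses, the check that $W_v\setminus(Z^{(2)}\cup W_Z)=\textnormal{pa}_V(v)$, and the induction over the iterative procedure; these are details the paper leaves implicit.
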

\begin{proof}
    Choose $W_v=\textnormal{pa}_V(v)$ and $W_z=\emptyset$ for all $z\in Z$. If $\textnormal{pa}_V(z)=\emptyset$, then $z\in Z^{(2)}$, otherwise $z\in Z^{(1)}$. Because the condition $Z^{(1)}\cap (W_Z\cup W_v)=\emptyset$ in the eLF-HTC becomes $Z^{(1)}\cap \textnormal{pa}_V(v) = \emptyset$ for these choices of $W_v$ and $W_z$, the conditions in the LF-HTC imply the conditions from the eLF-HTC.
\end{proof}
We now come back to the intuition behind $W_v$ and the $W_z$'s.
\begin{remark}[Choosing $W_v$]
\label{remark_W_v}
 Theorem \ref{theorem_elfhtc} in principle allows $W_v$ to also contain nodes corresponding to edges $p^v \rightarrow v$ which are already certified to be rationally identifiable. However, in  practice, it is enough to choose the set $W_v \subseteq \text{pa}_V(v)$ to be equal to the subset of observed parents of $v$ such that each edge $p^v \rightarrow v$ for $p^v \in W_v$ is not yet certified to be rationally identifiable.

To see this, we will now argue that if  $(Y,\allowbreak Z,\allowbreak (W_z)_{z\in Z},\allowbreak H)$ satisfies the eLF-HTC with respect to $(v, W_v)$, then, for all $W_v'\subseteq W_v$, there exists $Y'\subseteq Y$ such that $(Y',\allowbreak Z,\allowbreak (W_z)_{z\in Z},\allowbreak H)$ satisfies the eLF-HTC with respect to $(v, W'_v)$. For this, let $Y'$ contain all the elements of $Y$ that correspond to the latent-factor half-treks ending in $W_v'\cup Z\cup W_Z$. Then, all conditions of the eLF-HTC are still satisfied. Condition (iii) holds by design, and for Condition (i) note that $|Y'|=|W'   _v\cup Z\cup W_Z|$ and  $Z^{(1)}\cap (W_Z\cup W_v')=\emptyset$ because $W_v'\subseteq W_v$. Finally, Condition (ii) is satisfied because $Y'\subseteq Y$. Furthermore, if the further assumptions in Theorem \ref{theorem_elfhtc} are satisfied for $(Y,\allowbreak Z,\allowbreak (W_z)_{z\in Z},\allowbreak H)$ and $W_v$, then the further assumptions in Theorem \ref{theorem_elfhtc} are satisfied for $(Y',\allowbreak Z,\allowbreak (W_z)_{z\in Z},\allowbreak H)$ and $W'_v$ if all edges in $\textnormal{pa}_V(v)\setminus W'_v$ are rationally identifiable. This fact follows because $(Y'\cap \textnormal{htr}_H(Z\cup \{v\}))\subseteq (Y\cap \textnormal{htr}_H(Z\cup \{v\}))$. We conclude that if $W_v$ only contains the observed parents $p^v$ of $v$ such that each edge $p^v \rightarrow v$ has not been previously shown to be rationally identifiable, then the number of eLF-HTC identifiable direct causal effects is not reduced.
\end{remark}
\begin{remark}[Choosing $W_z$]
\label{remark_W_z}
In contrast, it is \emph{not} enough to choose the sets $W_z \subseteq \text{pa}_V(z)$ to be equal to the subset of observed parents of $z$ such that each edge $p^z \rightarrow v$ for $p^z \in W_z$ is not yet certified to be rationally identifiable. In this case,  satisfaction of the eLF-HTC for some choice $(Y,\allowbreak Z,\allowbreak (W_z)_{z\in Z},\allowbreak H)$ with respect to $(v, W_v)$ does not necessarily imply satisfaction of the eLF-HTC for $(Y, Z,\allowbreak (W'_z)_{z\in Z},\allowbreak H)$ with respect to $(v, W_v)$ for which each $W'_z\subseteq W_z$.

We will now construct an example to illustrate this. Consider the latent-factor graph from Figure \ref{fig_1.5} and suppose that all edges in $D_V$ are known to be rationally identifiable except for the edges from $\textnormal{pa}_V(6)=\{2,3,4,5\}$ to $v=6$. Then, the eLF-HTC is satisfied for $(Y=\{1,\allowbreak 2,\allowbreak 3,\allowbreak 5\}, Z=\{4\}, (W_z)_{z\in Z}=(\{2\}), H=\{h_1\})$ and $W_v=\{2,\allowbreak 3,\allowbreak 4,\allowbreak 5\}$. To see this fact, note that $Z^{(1)}=\emptyset$ and that a possible system of latent-factor half-treks is $\{1\leftarrow h \rightarrow 4,\allowbreak 2,\allowbreak 3,\allowbreak 5\}$. Furthermore, all edges in $D_V$ with head in $Y\cap \textnormal{htr}_H(Z\cup \{v\})=\{5\}$ are rationally identifiable by assumption. Therefore, Theorem \ref{theorem_elfhtc} yields that the edges $3\rightarrow 6$ and $5\rightarrow 6$ are rationally identifiable by the eLF-HTC. However, the eLF-HTC for $(Y=\{1,\allowbreak 2,\allowbreak 3,\allowbreak 5\}, Z=\{4\}, (W_z)_{z\in Z}=(\emptyset), H=\{h_1\})$ and $W_v=\{2,\allowbreak 3,\allowbreak 4,\allowbreak 5\}$ is not satisfied because $Z^{(1)}=\{4\}$ and thus, $Z^{(1)}\cap (W_Z\cup W_v)$ is violated. Nor is the eLF-HTC satisfied for any other choice of $Y$ and the same choices of $W_v$, $Z$, $(W_z)_{z\in Z}$ and $H$ because $Z^{(1)}\cap (W_Z\cup W_v)=\emptyset$ is only possible if $Z=\{1\}$ and hence $Z^{(1)}=\emptyset$, in which case $Y$ needs to be a subset of $\{2,\allowbreak 3,\allowbreak 4,\allowbreak 5\}$ such that it fulfills $Y\cap (Z\cup \{v\})=\emptyset$. Hence, it must be that  $|Y|<|W_v\cup Z \cup W_Z|$, so Condition (i) of the eLF-HTC is violated. We conclude that if we choose the sets $W_z$ such that each $W_z$ only contains those observed parents $p^z$ of $z$ such that each edge $p^z \rightarrow z$ has not been previously shown to be rationally identifiable,  the number of eLF-HTC identifiable direct causal effects can be reduced.
\end{remark}

\section{Further Complementary Identification Results}
\label{sec_further_complementary_results}
In this section, we present two further identification approaches which well complement the eLF-HTC identification result  from Section \ref{section_elfhtc}. We present the first complementary result in Section \ref{sec_determinantal_results} and the second complementary result in Section \ref{sec_recursive}.

 \subsection{Determinantal Identification Result}
\label{sec_determinantal_results}
This section revisits an identification result from \citet{weihs2018determinantal} which applies to LSEMs where latent variables are projected away into dependent error terms, and transfers this identification result to our considered setting with explicitly modelled latent variables. This transfer works as follows:
The originally considered graphs in \citet{weihs2018determinantal} are mixed graphs potentially including both directed and bidirected edges. We then apply the results from \citet{weihs2018determinantal} to the entire latent-factor graphs including the latent variables; the idea behind doing so is that these latent-factor graphs, forgetting about the latentness of the latent variables for a moment, are just directed graphs corresponding to an LSEM without bidirected edges. Finally, we restrict the occuring variable sets in the resulting theorem to only observed variables, so that the resulting theorem can be applied to latent-factor graphs. Along the way, we need to take care of some subtleties regarding the notion of rational identifiability.

The identification result from \citet{weihs2018determinantal} relies on the theory of flow-networks (see, e.g., Section 26 in \citealp{cormen2022introduction} for an introduction). A flow-network is a directed graph whose vertices and edges have a certain given capacity. \citet{weihs2018determinantal} introduce the following flow-network corresponding to each (latent-factor) graph $G^\mathcal{L}$; for an example, see Figure \ref{fig_3}.

\begin{definition}[Determinantal flow-network]
\label{def_flow_graph}
For a given latent-factor graph $G^\mathcal{L}=(V\cup \mathcal{L}, D)$, the corresponding flow-graph $G^\mathcal{L}_{\textnormal{flow,det}} = (V^f\cup \mathcal{L}^f,D^f)$ is a directed graph where $V^f:=V\cup V'$ with $V'$ being a copy of $V$, that is, for each $v\in V$ there corresponds a $v'\in V'$, and where $\mathcal{L}^f:=\mathcal{L}\cup \mathcal{L}'$ with $\mathcal{L}'$ being a copy of $\mathcal{L}$, and where $D^f$ contains the edges
\begin{align*}
\label{eq_flow_graph_rules}
    &i\rightarrow j \text{ if } j\rightarrow i \in D,\\
    &i\rightarrow i' \text{ for all } i\in V\cup \mathcal{L}, \text{ and}\\
    &i'\rightarrow j' \text{ if } i\rightarrow j\in D.\numberthis
\end{align*}
Turn $G^\mathcal{L}_{\textnormal{flow,det}}$ into a network by giving all vertices and edges capacity $1$.
\end{definition}
\begin{remark}
\label{remark_flow_graph_extension}
The original flow-network framework (as, e.g., layed out in Section 26 in \citealp{cormen2022introduction}) does not allow for vertices to have capacities, the presence of multiple sources or targets, or the presence of reciprocal edge pairs $(v,w)$, that is, both $v\rightarrow w$ and $w\rightarrow v$ are in $D^f$, which might occur as we do not prohibit directed cycles. However, extending the original theory to include these further cases is rather straightforward, see, for example, Section 6 in \citet{foygel2012halfSupplement}.
\end{remark}
Transferred to our setting, the identification result from \citet{weihs2018determinantal} relies on calculating maximum flows in the flow-network $G^\mathcal{L}_{\textnormal{flow,det}}$ and a version of $G^\mathcal{L}_{\textnormal{flow,det}}$ where some edges have been deleted. The maximum-flow problem is a well-known problem in computer science, and there exist several standard algorithms for solving it; the oldest one being the Ford-Fulkerson algorithm  (\citealp{ford1956maximal}, also see Section 26 in \citealp{cormen2022introduction}). 
To state this result, we for vertex sets $A=\{a_1,\ldots, a_{n}\}\subseteq V$ and $B=\{b_1,\ldots,b_{n}\} \subseteq V$ write $\Sigma_{A, B}$ to refer to the submatrix of $\Sigma$
whose $(i,j)$-th entry with $i,j\in \{1,\ldots,n\}$ is given by $\Sigma_{a_ib_j}$.

\begin{theorem}[Determinantal identification result]
\label{theorem_determinantal}
    Let $w_0,w_1,\ldots, w_m\in V$. Let $w_0\rightarrow v\in D_V$ and suppose that the edges $w_1\rightarrow v,\ldots, w_m\rightarrow v\in D_V$ are rationally identifiable. Let $\overline{G}^\mathcal{L}_{\textnormal{flow,det}}$ be $G^\mathcal{L}_{\textnormal{flow,det}}$ with the edges $w_0'\rightarrow v',\ldots,w'_m\rightarrow v'$ removed. Suppose there are sets $S\subseteq V$ and $T\subseteq V\setminus \{v,w_0\}$ such that $|S|=|T|+1=k$ and such that
    \begin{enumerate}[label=(\alph*)]
    \item in $G^\mathcal{L}$ it holds that $(\textnormal{dec}(v))\cap (T\cup \{v\})=\emptyset,$
    \item the max-flow from $S$ to $T'\cup \{w'_0\}$ in $G^\mathcal{L}_{\textnormal{flow,det}}$ equals $k$, and
    \item the max-flow from $S$ to $T'\cup \{v'\}$ in $\overline{G}^\mathcal{L}_{\textnormal{flow,det}}$ is $<k$.
    \end{enumerate}
    Then, $w_0\rightarrow v$ is rationally identifiable via
    \begin{align*}
        \lambda_{w_0v}=\frac{\det{(\Sigma_{S, T\cup \{v\}})} - \sum_{i=1}^m\lambda_{w_iv}\det{(\Sigma_{S, T\cup \{w_i\}})}}{\det{(\Sigma_{S, T\cup \{w_0\}})}}.
    \end{align*}
\end{theorem}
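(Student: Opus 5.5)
The plan is to view the latent-factor graph $G^\mathcal{L}$ as an ordinary directed graph on $V\cup\mathcal{L}$ without bidirected edges, with independent errors. The latents are just further nodes with error variances $\mathcal{V}_L$. Concretely, set $\tilde\Lambda\in\mathbb{R}^{(d+\ell)\times(d+\ell)}$ with blocks $\Lambda$ and $\Gamma$ (latents have no parents) and $\tilde\Omega=\textnormal{diag}(\Omega_{\textnormal{diag}},\mathcal{V}_L)$. The full covariance $\tilde\Sigma=(I-\tilde\Lambda)^{-T}\tilde\Omega(I-\tilde\Lambda)^{-1}$ then has $\Sigma$ as its $V\times V$ block. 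Everything in the theorem involves only rows and columns indexed by $V$: the sets $S$, $T$, the vertices $v,w_0,\ldots,w_m$, and the determinants. So the identities we derive for $\tilde\Sigma$ transfer directly to $\Sigma$.

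The algebraic core is the following. From $(I-\tilde\Lambda)^T\tilde\Sigma=\tilde\Omega(I-\tilde\Lambda)^{-1}$, column $v$ of $\tilde\Sigma(I-\tilde\Lambda)$ gives, for every $s$,
\[
\Sigma_{sv}-\sum_{p\in\textnormal{pa}(v)}\lambda_{pv}\tilde\Sigma_{sp}=\big[(I-\tilde\Lambda)^{-T}\tilde\Omega\big]_{sv}.
\]
The right-hand side is the trek sum over treks from $s$ to $v$ whose right side is trivial, i.e.\ directed paths from $v$ to $s$ times $\omega_{vv}$. By multilinearity of the determinant in the column indexed by $v$, we get
\[
\det\Sigma_{S,T\cup\{v\}}=\sum_{p\in\textnormal{pa}(v)}\lambda_{pv}\det\tilde\Sigma_{S,T\cup\{p\}}+\det\big[\tilde\Sigma_{S,T}\,\big|\,((I-\tilde\Lambda)^{-T}\tilde\Omega)_{S,v}\big].
\]
Following \citet{weihs2018determinantal}, I would group the parents into $w_0,\ldots,w_m$ and the remaining parents $p$. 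Then I show that the leftover terms vanish as polynomials in the parameters:
\[
\det\Sigma_{S,T\cup\{v\}}-\sum_{i=0}^m\lambda_{w_iv}\det\Sigma_{S,T\cup\{w_i\}}=0.
\]
Via the Gessel--Viennot--Lindström / trek-separation formula of \citet{sullivant2010trek} (equivalently the flow interpretation in $G^\mathcal{L}_{\textnormal{flow,det}}$), this difference is a sum over trek systems from $S$ to $T\cup\{v\}$ without sided intersection. In these systems the trek ending at $v$ does not end with one of the removed edges $w_i\to v$. Such systems correspond exactly to vertex-disjoint flows of value $k$ from $S$ to $T'\cup\{v'\}$ in $\overline{G}^\mathcal{L}_{\textnormal{flow,det}}$. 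Condition (c) says none exist, so the sum is empty. Condition (a) handles the extra term coming from treks with trivial right side at $v$. Since no descendant of $v$ lies in $T\cup\{v\}$, $v$ is not on a cycle. Hence $\omega_{vv}$-terms cannot arise in a way that breaks the correspondence, and the $(I-\tilde\Lambda)^{-T}$ column can be absorbed consistently. I expect this bookkeeping to be the main obstacle: matching the term decomposition exactly with the edge deletions in $\overline{G}^\mathcal{L}_{\textnormal{flow,det}}$, while allowing cycles elsewhere. I would essentially cite the corresponding lemma of \citet{weihs2018determinantal}, applied to the directed graph on $V\cup\mathcal{L}$, and check that its hypotheses hold since there are no bidirected edges.

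Rearranging then gives the displayed formula, provided $\det\Sigma_{S,T\cup\{w_0\}}\neq0$ generically. By the same trek-system expansion, condition (b) supplies a flow of value $k$. This flow gives a non-sided-intersecting trek system from $S$ to $T\cup\{w_0\}$, and its monomial cannot cancel (standard argument in \citet{sullivant2010trek}). So the determinant is a nonzero polynomial on the irreducible $\overline{\Theta}$, and its zero set is a proper algebraic subset. Finally, the formula is rational in $\Sigma$ and in $\lambda_{w_1v},\ldots,\lambda_{w_mv}$, which are rational functions of $\Sigma$ off proper algebraic subsets by assumption. The union of these finitely many exceptional proper subsets is again proper, since $\overline{\Theta}$ is irreducible. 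This yields rational identifiability in the sense of Definition~\ref{def:rat-id}.
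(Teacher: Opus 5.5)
Your proposal is correct and follows essentially the paper's route. You treat $G^\mathcal{L}$ as an ordinary directed graph on $V\cup\mathcal{L}$ without bidirected edges, with error covariance $\textnormal{diag}(\Omega_{\textnormal{diag}},\mathcal{V}_L)$, note that only $V$-indexed entries of $\tilde\Sigma$ enter, and import the determinantal machinery of \citet{weihs2018determinantal}. The difference is where the black box sits: the paper applies their Theorem 3.8 wholesale and spends most of its effort translating between the two notions of rational identifiability, both for the hypotheses on $w_1,\ldots,w_m$ and for the conclusion. You instead use the underlying polynomial identity and do the generic non-vanishing and union-of-proper-subsets argument directly in $\Theta$, which avoids that translation step.
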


We now briefly discuss the following example which  illustrates that already Theorem \ref{theorem_determinantal} alone identifies edges that the LF-HTC identification result (Theorem \ref{theorem_lfhtc}) cannot identify.
		\begin{figure}
		\center
		\begin{subfigure}[t]{0.4\textwidth}
		\scalebox{0.8}{
	\begin{tikzpicture}
	\node[circle, draw, fill = c3, line width=0.4mm] (h1) at (0.75,0) {$h_1$};
	\node[circle, draw, line width=0.4mm] (1) at (-3, -2) {1};
	\node[circle, draw, line width=0.4mm] (2) at (-1.5, -2) {2};
	\node[circle, draw, line width=0.4mm] (3) at (0, -2) {3};
	\node[circle, draw, line width=0.4mm] (4) at (1.5, -2) {4};
	\node[circle, draw, line width=0.4mm] (5) at (3, -2) {5};
	\node[circle, draw, line width=0.4mm] (6) at (4.5, -2) {6};
	
	\path [->, line width = 0.5mm, dashed, color = red] (h1) edge node[left] {} (1);
	\path [->, line width = 0.5mm, dashed, color = red] (h1) edge node[left] {} (2);
	\path [->, line width = 0.5mm, dashed, color = red] (h1) edge node[left] {} (3);
	\path [->, line width = 0.5mm, dashed, color = red] (h1) edge node[left] {} (4);
	\path [->, line width = 0.5mm, dashed, color = red] (h1) edge node[left] {} (5);
	\path [->, line width = 0.5mm, dashed, color = red] (h1) edge node[left] {} (6);
	
	\path [->, line width = 0.5mm, color = blue] (1) edge node[left] {} (2);
	\path [->, line width = 0.5mm, color = blue] (2) edge node[left] {} (3);
	\path [->, line width = 0.5mm, color = blue] (3) edge node[left] {} (4);
	\path [->, line width = 0.5mm, color = blue] (4) edge node[left] {} (5);
	\path [->, line width = 0.5mm, color = blue, bend right] (4) edge node[left] {} (6);
\end{tikzpicture}
}
\subcaption{Latent-factor graph $G^\mathcal{L}$}
\end{subfigure}
\begin{subfigure}[t]{0.4\textwidth}
\scalebox{0.8}{
\begin{tikzpicture}
	\node[circle, draw, fill = c3, line width=0.4mm] (h1) at (-4,-0.8) {$h_1$};
	\node[circle, draw, line width=0.4mm] (1) at (-3, -2) {1};
	\node[circle, draw, line width=0.4mm] (2) at (-1.5, -2) {2};
	\node[circle, draw, line width=0.4mm] (3) at (0, -2) {3};
	\node[circle, draw, line width=0.4mm] (4) at (1.5, -2) {4};
	\node[circle, draw, line width=0.4mm] (5) at (3, -2) {5};
	\node[circle, draw, line width=0.4mm] (6) at (4.5, -2) {6};
	
	\path [->, line width = 0.5mm, dashed, color = red] (1) edge node[left] {} (h1);
	\path [->, line width = 0.5mm, dashed, color = red] (2) edge node[left] {} (h1);
	\path [->, line width = 0.5mm, dashed, color = red] (3) edge node[left] {} (h1);
	\path [->, line width = 0.5mm, dashed, color = red] (4) edge node[left] {} (h1);
	\path [->, line width = 0.5mm, dashed, color = red] (5) edge node[left] {} (h1);
	\path [->, line width = 0.5mm, dashed, color = red] (6) edge node[left] {} (h1);
	
	\path [->, line width = 0.5mm, color = blue] (2) edge node[left] {} (1);
	\path [->, line width = 0.5mm, color = blue] (3) edge node[left] {} (2);
	\path [->, line width = 0.5mm, color = blue] (4) edge node[left] {} (3);
	\path [->, line width = 0.5mm, color = blue] (5) edge node[left] {} (4);
	\path [->, line width = 0.5mm, color = blue, bend right] (6) edge node[left] {} (4);
	
\node[circle, draw, fill = c3, line width=0.4mm] (h1') at (-4,-2.8) {$h'_1$};
	\node[circle, draw, line width=0.4mm] (1') at (-3, -4) {1'};
	\node[circle, draw, line width=0.4mm] (2') at (-1.5, -4) {2'};
	\node[circle, draw, line width=0.4mm] (3') at (0, -4) {3'};
	\node[circle, draw, line width=0.4mm] (4') at (1.5, -4) {4'};
	\node[circle, draw, line width=0.4mm] (5') at (3, -4) {5'};
	\node[circle, draw, line width=0.4mm] (6') at (4.5, -4) {6'};
	
	\path [->, line width = 0.5mm, dashed, color = red] (h1') edge node[left] {} (1');
	\path [->, line width = 0.5mm, dashed, color = red] (h1') edge node[left] {} (2');
	\path [->, line width = 0.5mm, dashed, color = red] (h1') edge node[left] {} (3');
	\path [->, line width = 0.5mm, dashed, color = red] (h1') edge node[left] {} (4');
	\path [->, line width = 0.5mm, dashed, color = red] (h1') edge node[left] {} (5');
	\path [->, line width = 0.5mm, dashed, color = red] (h1') edge node[left] {} (6');
	
	\path [->, line width = 0.5mm, color = blue] (1') edge node[left] {} (2');
	\path [->, line width = 0.5mm, color = blue] (2') edge node[left] {} (3');
	\path [->, line width = 0.5mm, color = blue] (3') edge node[left] {} (4');
	\path [->, line width = 0.5mm, color = blue] (4') edge node[left] {} (5');
	\path [->, line width = 0.5mm, color = blue, bend right] (4') edge node[left] {} (6');
		\path [->, line width = 0.5mm, color = orange] (h1) edge node[left] {} (h1');
		\path [->, line width = 0.5mm, color = orange] (1) edge node[left] {} (1');
		\path [->, line width = 0.5mm, color = orange] (2) edge node[left] {} (2');
		\path [->, line width = 0.5mm, color = orange] (3) edge node[left] {} (3');
		\path [->, line width = 0.5mm, color = orange] (4) edge node[left] {} (4');
		\path [->, line width = 0.5mm, color = orange] (5) edge node[left] {} (5');
		\path [->, line width = 0.5mm, color = orange] (6) edge node[left] {} (6');
\end{tikzpicture}
}
\subcaption{Corresponding flow-graph $G^\mathcal{L}_{\textnormal{flow,det}}$}
\end{subfigure}	
\caption{Example latent-factor graph $G^\mathcal{L}$ and its corresponding flow graph $G^\mathcal{L}_{\textnormal{flow,det}}$ of which the former is not rationally identifiable via the LF-HTC but for which already some edges are identifiable via Theorem \ref{theorem_determinantal} (and all edges are rationally identifiable via an interplay of Theorems \ref{theorem_elfhtc} and \ref{theorem_determinantal}, for example).}
		\label{fig_3}
	\end{figure}
\begin{example}[Determinantal identification example]
Consider the latent-factor graph $G^\mathcal{L}$ given in Figure \ref{fig_3}a with $V=\{1,2,3,4,5,6\}$ and $\mathcal{L}=\{h_1\}$, and also consider the corresponding flow graph $G^\mathcal{L}_{\textnormal{flow,det}}$ in Figure \ref{fig_3}b.
This latent-factor graph $G^\mathcal{L}$ is not LF-HTC identifiable, which one can quickly see, for example, by using the R-package SEMID \citep{Rlanguage, barber2025package} in which the LF-HTC is implemented. However, one can already identify some edges in this graph using Theorem \ref{theorem_determinantal} as we now explain; as a remark, the entire graph $G^\mathcal{L}$ can in fact be identified by a combination of Theorems \ref{theorem_elfhtc} and \ref{theorem_determinantal}, for example, which we will not lay out in detail, however.

	\noindent\underline{$v=5$:} The only observed parent of $v=5$ is $w_0=4$. For the sets $S$ and $T$ take $S=\{2, 3, 4\}$ and $T=\{1, 2\}$. Then, $\lambda_{45}$ is rationally identifiable via
	\begin{align*}
	\lambda_{45} = \det\begin{pmatrix}
			\Sigma_{21} & \Sigma_{22} &\Sigma_{25}\\
			\Sigma_{31} & \Sigma_{32} & \Sigma_{35}\\
			\Sigma_{41} & \Sigma_{42} & \Sigma_{45}
	\end{pmatrix}\biggr/\det\begin{pmatrix}
			\Sigma_{21} & \Sigma_{22} &\Sigma_{24}\\
			\Sigma_{31} & \Sigma_{32} & \Sigma_{34}\\
			\Sigma_{41} & \Sigma_{42} & \Sigma_{44}
		\end{pmatrix}.
	\end{align*}
	
	\noindent\underline{$v=6$:} The only observed parent of $v=6$ is $w_0=4$. For the sets $S$ and $T$ take $S=\{2, 3, 4\}$ and $T=\{1, 2\}$. Then, $\lambda_{45}$ is rationally identifiable via
	\begin{align*}
		\lambda_{46} = \det\begin{pmatrix}
				\Sigma_{21} & \Sigma_{22} &\Sigma_{26}\\
				\Sigma_{31} & \Sigma_{32} & \Sigma_{36}\\
				\Sigma_{41} & \Sigma_{42} & \Sigma_{46}
		\end{pmatrix}\biggr/\det\begin{pmatrix}
				\Sigma_{21} & \Sigma_{22} &\Sigma_{24}\\
				\Sigma_{31} & \Sigma_{32} & \Sigma_{34}\\
				\Sigma_{41} & \Sigma_{42} & \Sigma_{44}
		\end{pmatrix}.
	\end{align*}
\end{example}
\vspace{0.25cm}

We now finish this section with a remark geared towards
a practical implementation.
\begin{remark}[Simplification]
\label{remark_deleted_edges} 
Suppose one wants to identify some edge $w_0\rightarrow v$. Then, it suffices to check Theorem \ref{theorem_determinantal} when deleting all already rationally identified edges $w_1'\rightarrow v',\ldots, w_m'\rightarrow v'$ at once. Deleting only subsets of $w_1'\rightarrow v',\ldots, w_m'\rightarrow v'$ and then checking Theorem \ref{theorem_determinantal} for all different subsets does not yield more direct causal effects that can be identified via Theorem \ref{theorem_determinantal}.

To see this fact, note that the assumptions in Theorem \ref{theorem_determinantal} when deleting all edges $w_1'\rightarrow v',\ldots, w_m'\rightarrow v'$ at once are always satisfied  if they are satisfied when only deleting some subset of $w_1'\rightarrow v',\ldots, w_m'\rightarrow v'$. This implication is true because fewer edges in a flow graph can only reduce (and not increase) the maximum flow.
\end{remark}

    \subsection{Recursively Checking Identifiability in Subgraphs}
\label{sec_recursive}
In this section, we introduce a meta identification result that is about checking different identification results (such as the eLF-HTC identification result from Theorem \ref{theorem_elfhtc} and the determinantal identification result from Theorem \ref{theorem_determinantal}) in subgraphs of the original latent-factor graph $G^\mathcal{L}$. These subgraphs are created by deleting certain edges from $G^\mathcal{L}$ that are rationally identifiable. Our meta approach is 
motivated from the deletion-of-edges procedure already discussed in Section \ref{sec_determinantal_results}, which itself was based on \citet{weihs2018determinantal}.
However, our meta approach is more general and we are not aware of it being discussed in the literature.

When checking identifiability of $G^\mathcal{L}$ using some subgraph $\overline{G}^\mathcal{L}$ of $G^\mathcal{L}$, one must use the covariance matrix $\overline{\Sigma}$ corresponding to that subgraph $\overline{G}^\mathcal{L}$. In principal, $\overline{\Sigma}$ might differ from the original covariance matrix $\Sigma$ corresponding to $G^\mathcal{L}$. However, some entries of $\overline{\Sigma}$ can be calculated using the original entries of $\Sigma$ as the following proposition asserts. We note that similar results about calculable covariances in subgraphs have already occurred in \citet{weihs2018determinantal}.
\begin{proposition}[Calculable covariances in subgraphs]
\label{allowed_covariances}
Suppose that for a vertex $v\in V$ the edges $w_1\rightarrow v,w_2\rightarrow v,\ldots, w_m\rightarrow v$ are rationally identifiable where $\{w_1,\ldots,w_m\}\subseteq \textnormal{pa}_V(v)$. Let $\overline{G}^\mathcal{L}$ be the latent-factor graph with all edges $w_1\rightarrow v,w_2\rightarrow v,\ldots, w_m\rightarrow v$ deleted from $G^\mathcal{L}$ and let $\overline{\Sigma}$ be the covariance matrix corresponding to $\overline{G}^\mathcal{L}$. Let $x, y\in V$ with $x,y$ not necessarily distinct. If
\begin{itemize}
    \item \textbf{Case 1:} $x,y\notin \textnormal{dec}_{G^\mathcal{L}}(v)$ and $x\neq v$ and $y\neq v$, or
        \item \textbf{Case 2:} $x, y\notin \textnormal{dec}_{G^\mathcal{L}}(v)$ and $x\neq v$ and $y=v$, or
        \item \textbf{Case 3:} $x, y\notin \textnormal{dec}_{G^\mathcal{L}}(v)$ and $x= v$ and $y\neq v$, 
\end{itemize}
 then the covariance $\overline{\Sigma}_{xy}$ is given by the rational function
\begin{align*}\overline{\Sigma}_{xy}=
\begin{cases}
    \Sigma_{xy},  \text{ in Case 1, }\\
    \Sigma_{xy}-\sum_{i=1}^m\Lambda_{w_iy}\Sigma_{xw_i}  \text{ in Case 2, }\\
        \Sigma_{xy}-\sum_{i=1}^m\Lambda_{w_ix}\Sigma_{yw_i}  \text{ in Case 3 }.
\end{cases}
\end{align*}
\end{proposition}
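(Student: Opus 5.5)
The plan is to compare the two structural systems directly at the level of random vectors, using the same noise $\epsilon$ and the same latent vector $L$. Let $\overline{\Lambda}$ be $\Lambda$ with the entries $\lambda_{w_1v},\ldots,\lambda_{w_mv}$ set to zero, and keep $\Gamma$, $\Omega_{\textnormal{diag}}$ and $\mathcal{V}_L$ unchanged. Since $\det(I_d-\overline{\Lambda})$ is a polynomial in the parameters that is nonzero at $\Lambda=0$, the matrix $I_d-\overline{\Lambda}$ is invertible outside a proper algebraic subset of $\overline{\Theta}$. Outside that subset we can define $\overline{X}:=(I_d-\overline{\Lambda})^{-T}(\Gamma^TL+\epsilon)$, whose covariance is $\overline{\Sigma}=\chi_{\overline{G}^\mathcal{L}}(\overline{\Lambda},\Gamma,\Omega_{\textnormal{diag}},\mathcal{V}_L)$. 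All three cases then reduce to identities between $\overline{X}$ and $X$ as random vectors.

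\textbf{Step 1: $\overline{X}$ and $X$ agree on non-descendants of $v$.} Let $N:=V\setminus(\textnormal{dec}_{G^\mathcal{L}}(v)\cup\{v\})$. The set $N$ is closed under taking observed parents, for two reasons:
\begin{itemize}
\item a parent of a non-descendant of $v$ is again a non-descendant of $v$;
\item a parent of $x\in N$ cannot be $v$, since otherwise $x\in\textnormal{ch}(v)\subseteq\textnormal{dec}(v)$.
\end{itemize}
In addition, no deleted edge $w_i\rightarrow v$ has its head in $N$. Hence the equations of $X$ and of $\overline{X}$ for the coordinates in $N$ form the same closed subsystem
\begin{align*}
X_N=\Lambda_{NN}^TX_N+\Gamma_{\cdot N}^TL+\epsilon_N .
\end{align*}
Outside the proper algebraic set where $\det(I-\Lambda_{NN})=0$, this subsystem has a unique solution, so $\overline{X}_N=X_N$ almost surely. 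Case 1 follows immediately, since $x,y\in N$ gives $\overline{\Sigma}_{xy}=\Sigma_{xy}$.

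\textbf{Step 2: expressing $\overline{X}_v$ through $X$.} In Cases 2 and 3 the hypothesis gives $v\notin\textnormal{dec}(v)$, so no directed cycle passes through $v$. Therefore every observed parent $p$ of $v$ satisfies $p\neq v$ and $p\notin\textnormal{dec}(v)$, that is, $\textnormal{pa}_V(v)\subseteq N$. Subtracting the structural equation for $v$ in $\overline{G}^\mathcal{L}$ from the one in $G^\mathcal{L}$, and using $\overline{X}_p=X_p$ for all $p\in\textnormal{pa}_V(v)$ from Step 1, gives
\begin{align*}
\overline{X}_v=X_v-\sum_{i=1}^m\lambda_{w_iv}X_{w_i}.
\end{align*}
Now take the covariance with $\overline{X}_x=X_x$ for $x\in N$. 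This yields $\overline{\Sigma}_{xv}=\Sigma_{xv}-\sum_i\lambda_{w_iv}\Sigma_{xw_i}$, which is Case 2, where $\Lambda_{w_iy}=\lambda_{w_iv}$. Case 3 is the same computation with the roles of $x$ and $y$ exchanged, using symmetry of $\Sigma$ and $\overline{\Sigma}$.

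\textbf{Step 3: rationality.} By assumption each $\lambda_{w_iv}$ equals a rational function $\psi_i(\Sigma)$ outside a proper algebraic subset of $\overline{\Theta}$. Substituting these into the formulas of Step 2 gives rational functions of $\Sigma$. These are valid outside the union of finitely many proper algebraic subsets: the exceptional sets of the $\psi_i$, $\{\det(I-\overline{\Lambda})=0\}$ and $\{\det(I-\Lambda_{NN})=0\}$. This union is still proper because $\overline{\Theta}$ is irreducible.

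The main obstacle is Step 1. One has to avoid arguing through path expansions of $(I-\Lambda)^{-1}$, which need not converge when the graph has cycles. The closure of $N$ under taking parents, combined with uniqueness of the solution of the subsystem, handles this; its only cost is the extra generic invertibility condition on $I-\Lambda_{NN}$.
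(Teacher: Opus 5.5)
Your proof is correct, but it takes a different route from the paper.

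\textbf{The paper's route.} The paper argues combinatorially with the trek rule. In Case 1, no trek between $x$ and $y$ can use a deleted edge $w_i\rightarrow v$, since otherwise $x$ or $y$ would equal $v$ or be a descendant of $v$. So the trek sets in $G^{\mathcal{L}}$ and $\overline{G}^{\mathcal{L}}$ coincide. In Case 2, the paper shows that every trek from $x$ to $v$ ends in an edge $p^v\rightarrow v$, which lies on the right side and occurs exactly once. Hence the treks lost by the deletion are exactly those ending in some $w_i\rightarrow v$, and their monomials sum to $\Sigma_{xw_i}\Lambda_{w_iv}$.

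\textbf{Your route.} You work with the structural equations instead. Parent-closedness of $N=V\setminus(\textnormal{dec}(v)\cup\{v\})$ gives a closed linear subsystem shared by both models. Uniqueness of its solution forces $\overline{X}_N=X_N$. The single equation at $v$ then yields $\overline{X}_v=X_v-\sum_i\lambda_{w_iv}X_{w_i}$.

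\textbf{What each buys.} Your argument holds pointwise on the parameter space. It never needs the trek expansion $(I_d-\Lambda)^{-1}=\sum_k\Lambda^k$, which is only a formal-power-series identity when $G^{\mathcal{L}}$ has directed cycles. It also replaces the ``appears exactly once on the right side'' bookkeeping with the one-line inclusion $\textnormal{pa}_V(v)\subseteq N$. In addition, you spell out the rationality step (substituting the identifying maps $\psi_i$ and taking a finite union of exceptional sets), which the paper leaves implicit. The paper's route stays consistent with the trek-based reasoning used throughout its other proofs. It also gives a transparent graphical picture of which covariance contributions the deletion removes.

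\textbf{A small sharpening.} The extra condition $\det(I-\Lambda_{NN})\neq 0$ costs nothing. No edge points from $V\setminus N$ into $N$, so with the ordering $(N,V\setminus N)$ the matrix $I_d-\Lambda$ is block triangular. Hence $\det(I_d-\Lambda)=\det(I-\Lambda_{NN})\det(I-\Lambda_{V\setminus N,V\setminus N})$, and invertibility on $N$ is automatic for every parameter in $\Theta$.
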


Proposition \ref{allowed_covariances} justifies applying any identification result to a subgraph $\overline{G}^\mathcal{L}$ of $G$ as long as the covariances $\overline{\Sigma}_{xy}$ required in that identification result can be calculated by Proposition \ref{allowed_covariances}. What remains to be shown is that rational identifiability of an edge in the subgraph $\overline{G}^\mathcal{L}$ implies rational identifiability of the same edge in the original graph $G^\mathcal{L}$. This implication is indeed true as the following proposition asserts.

\begin{proposition}[Rational identifiability in subgraph implies rational identifiability in the original graph]
\label{proposition_id_notion_subgraph}
Suppose that for a latent-factor graph $G^\mathcal{L} = (V\cup \mathcal{L}, D)$ and a vertex $v\in V$ the edges $w_1\rightarrow v,w_2\rightarrow v,\ldots, w_n\rightarrow v$ are rationally identifiable where $\{w_1,\ldots,w_m\}\subseteq \textnormal{pa}_V(v)$. Let $\overline{G}^\mathcal{L}$ be the latent-factor graph $G^\mathcal{L}$ with all edges $w_1\rightarrow v,w_2\rightarrow v,\ldots, w_m\rightarrow v$ from $G^\mathcal{L}$ deleted.
Let $w_0\rightarrow v_0$ with $w_0,v_0\in V$ be rationally identifiable in $\overline{G}^\mathcal{L}$ by just using covariances as specified by Proposition \ref{allowed_covariances}. Then, $w_0\rightarrow v_0$ is rationally identifiable in $G^\mathcal{L}$.
\end{proposition}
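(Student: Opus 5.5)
Let $\Theta$ and $\overline{\Theta}_{\textnormal{sub}}$ be the parameter spaces of $G^\mathcal{L}$ and $\overline{G}^\mathcal{L}$, and $\pi:\Theta\to\overline{\Theta}_{\textnormal{sub}}$ the map setting $\lambda_{w_1v},\ldots,\lambda_{w_mv}$ to zero and keeping every other coordinate. Because $\overline{G}^\mathcal{L}$ has fewer edges, $\overline{\Theta}_{\textnormal{sub}}$ is the coordinate subspace $\{\lambda_{w_iv}=0\}$ of the ambient space of $\Theta$, intersected with the invertibility condition. Regularity of $I_d-\Lambda$ may be lost under $\pi$, but only on an algebraic subset: $\det(I_d-\pi(\Lambda))$ is a polynomial in the parameters of $\Theta$, and it is not identically zero since it equals $1$ at $\Lambda=0$. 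Adding its zero set to the exceptional set, $\pi$ is defined off a proper algebraic subset of $\overline{\Theta}$.

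By hypothesis there is a rational $\overline{\psi}$ in the entries $\overline{\Sigma}_{xy}$ with $(x,y)$ in the admissible set of Proposition \ref{allowed_covariances}, and a proper algebraic $\overline{A}\subsetneq\overline{\overline{\Theta}_{\textnormal{sub}}}$, such that $\overline{\psi}(\chi_{\overline{G}^\mathcal{L}}(\bar\theta))=\bar\lambda_{w_0v_0}$ for all $\bar\theta\notin\overline{A}$. There are rational $\psi_i$ with $\psi_i(\chi_{G^\mathcal{L}}(\theta))=\lambda_{w_iv}$ off proper algebraic sets $A_i$. Define $\psi$ on $\textnormal{PD}(d)$ by replacing each admissible $\overline{\Sigma}_{xy}$ in $\overline{\psi}$ by the rational expression of Proposition \ref{allowed_covariances}, with $\Lambda_{w_iv}$ replaced by $\psi_i(\Sigma)$. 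This composition of rational functions is rational. By Proposition \ref{allowed_covariances}, for $\theta\notin\bigcup_iA_i$ (and off the singular set above) these expressions equal the corresponding entries of $\chi_{\overline{G}^\mathcal{L}}(\pi(\theta))$. Hence $\psi(\chi_{G^\mathcal{L}}(\theta))=\overline{\psi}(\chi_{\overline{G}^\mathcal{L}}(\pi(\theta)))=\lambda_{w_0v_0}$ whenever in addition $\pi(\theta)\notin\overline{A}$. Here we use that $\pi$ keeps the coordinate $\lambda_{w_0v_0}$, since $w_0\rightarrow v_0$ is an edge of $\overline{G}^\mathcal{L}$.

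It remains to show that the exceptional set $A_\Theta=\bigcup_iA_i\cup\pi^{-1}(\overline{A})\cup\{\det(I-\pi(\Lambda))=0\}$, or rather its Zariski closure, is a proper algebraic subset of $\overline{\Theta}$. This is the main obstacle, because a rational $\overline{\psi}$ may also have poles. Write $\overline{A}=V(S)$ for polynomials $S$ on $\overline{\overline{\Theta}_{\textnormal{sub}}}$ with some $f\in S$ not vanishing identically on it. Then $\pi^{-1}(\overline{A})\subseteq V(f\circ\pi)$, and $f\circ\pi$ is a polynomial, namely $f$ with the extra coordinates ignored. It is not identically zero on $\overline{\Theta}\cong\mathbb{R}^{m_\Theta}$, since restricted to the subspace $\{\lambda_{w_iv}=0\}$ it is just $f$. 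A finite union of proper algebraic subsets of the irreducible $\overline{\Theta}$ is again proper, so $A_\Theta$ is contained in a proper algebraic set, which completes the argument.

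I must also check two domain issues. First, positivity constraints ($\textnormal{diag}^+$) are preserved by $\pi$. Second, the condition $x,y\notin\textnormal{dec}_{G^\mathcal{L}}(v)$ in Proposition \ref{allowed_covariances} refers to $G^\mathcal{L}$, so it is the correct set of entries to use. I also handle denominators of $\overline{\psi}$, and of $\psi$ after substitution, by including the zero set of the denominator polynomial in $A_\Theta$. That polynomial is nonzero by the same restriction argument, since $\overline{\psi}\circ\chi_{\overline{G}^\mathcal{L}}$ is well defined generically.
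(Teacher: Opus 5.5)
Your proposal is correct and takes essentially the same route as the paper. Both substitute the rational expressions of Proposition~\ref{allowed_covariances}, with $\Lambda_{w_iv}$ replaced by $\psi_i(\Sigma)$, into the subgraph formula, and both use the exceptional set $(\mathbb{R}^m\times A_{\textnormal{sub}})\cup\bigcup_i A_i$, which is proper by irreducibility of $\overline{\Theta}$.

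Your additional exclusion of $\{\det(I_d-\pi(\Lambda))=0\}$ is a real refinement that the paper's proof omits. In cyclic graphs, $I_d-\Lambda$ can be invertible while $I_d-\Lambda_{\textnormal{sub}}$ is singular. Without that exclusion, the paper's step evaluating $\chi_{\overline{G}^{\mathcal{L}}}$ at $(\Lambda_{\textnormal{sub}},\Gamma,\Omega_{\textnormal{diag}},\mathcal{V}_L)$ is not justified on all of $\Theta\setminus A$.
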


Propositions \ref{allowed_covariances} and \ref{proposition_id_notion_subgraph} thus give rise to the following procedure: Recursively delete all already rationally identified edges from a given latent-factor graph $G^\mathcal{L}$ and check identifiability in the resulting ever smaller subgraphs using any existing identification result. For each new recursion level, the set of allowed covariances, so the set of covariances that any identification result might use, is thereby calculated by applying Proposition \ref{allowed_covariances} with only the allowed covariances from the previous recursion level as inputs. This notion of ``allowedness" is formalized in the following definition.
\begin{definition}[Allowed covariance]
\label{allowed_def}
Consider a latent-factor graph $G^\mathcal{L}=(V\cup \mathcal{L},D)$. For a sequence of not necessarily distinct vertices $v_1,\ldots, v_n\in V$, consider the ordered sequence of pairwise disjoint edge sets $D_{v_1},\allowbreak\ldots,\allowbreak D_{v_n}\subseteq D_V$, referred to as \emph{deletion sequence}, such that for all $i\in\{1,\ldots, n\}$ every edge in $D_{v_i}$ points into $v_i$. Let $G^\mathcal{L}_{n}$ denote the latent-factor graph that is defined by deleting from $G^\mathcal{L}$ all the edges that are in $\bigcup_{i=1}^n D_{v_i}$.  Denote the covariance matrix corresponding to $G^\mathcal{L}_{n}$ by $\Sigma_{n}$. Consider $x,y\in V$ not necessarily distinct.

If $n=1$, then a covariance $(\Sigma_n)_{xy}$ is called allowed if Proposition \ref{allowed_covariances} applies. If $n>1$, then a covariance $(\Sigma_n)_{xy}$ is called allowed (with respect to the deletion sequence  $D_{v_1},\allowbreak\ldots,\allowbreak D_{v_n}$) if the following three conditions are satisfied:
\begin{itemize}
    \item[(i)] The conditions of one of the three cases from Proposition \ref{allowed_covariances} with $G^{\mathcal{L}}_{n-1}$ as input are satisfied,
    \item[(ii)] $(\Sigma_{n-1})_{xy}$ is allowed with respect to $G^{\mathcal{L}}_{n-1}$, and 
    \item[(iii)] additionally in Case 2 of Proposition \ref{allowed_covariances} all $(\Sigma_{n-1})_{xw}$ with $w\rightarrow v_n\in D_{v_n}$, and additionally in Case 3 of Proposition \ref{allowed_covariances} all $(\Sigma_{n-1})_{yw}$ with $w\rightarrow v_n\in D_{v_n}$ are allowed with respect to $G^{\mathcal{L}}_{n-1}$.
\end{itemize}
\end{definition}
As we show in the next proposition, the set of allowed covariances only depends on the union of all deleted edges and neither on their deletion order nor on the amount of edges deleted per step.
\begin{proposition}
\label{propo_deletion_procedure}
Let $G^\mathcal{L}=(V, D)$ be a latent-factor graph and let $D_{\textnormal{del}}\subseteq D_V$. Write $G^\mathcal{L}_{\textnormal{del}}$ for the subgraph of $G^\mathcal{L}$ with the edges from $D_{\textnormal{del}}$ removed. Then, every deletion sequence $D_{v_1},\ldots,D_{v_n}\subseteq D_V$ (as introduced in Definition \ref{allowed_def}) with $n\in \mathbb{N}$ and $\bigcup_{i=1}^nD_{v_i}=D_{\textnormal{del}}$ yields the same set of allowed covariances with respect to $G^\mathcal{L}_{\textnormal{del}}$.
\end{proposition}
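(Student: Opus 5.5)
Our plan is to give a closed-form, order-independent description of the allowed covariances and then check, by induction on the length $n$ of the deletion sequence, that Definition \ref{allowed_def} produces exactly this set. Write $\mathrm{del}(D_{\textnormal{del}})$ for the set of heads of edges in $D_{\textnormal{del}}$. Our candidate characterization is:
\begin{quote}
$(\Sigma_{\textnormal{del}})_{xy}$ is allowed if and only if $x,y\notin \textnormal{dec}_{G^\mathcal{L}}(u)$ for every $u\in \mathrm{del}(D_{\textnormal{del}})$, and at most one of $x,y$ lies in $\mathrm{del}(D_{\textnormal{del}})$, up to the case $x=y$ being excluded when $x$ is a head.
\end{quote}
This set depends only on $D_{\textnormal{del}}$, which yields the proposition. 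Two points need care. Descendants must always be taken in the \emph{original} graph $G^\mathcal{L}$. Moreover, descendant sets can shrink as edges are deleted, so we must show that the condition computed in the intermediate graphs $G^\mathcal{L}_{n-1}$ collapses to the same condition as in $G^\mathcal{L}$.

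\textbf{Monotonicity step.} Since $\textnormal{dec}_{G^\mathcal{L}_{k}}(u)\subseteq \textnormal{dec}_{G^\mathcal{L}}(u)$, the condition ``not a descendant in $G^\mathcal{L}$'' implies ``not a descendant in $G^\mathcal{L}_{k}$''. For the converse we use an inductive invariant. Suppose $x\notin \textnormal{dec}_{G^\mathcal{L}_{n-1}}(v_n)$, and suppose also, by requirement (ii), that $x\notin\textnormal{dec}_{G^\mathcal{L}}(v_i)$ for all earlier heads $v_i$. Any directed path in $G^\mathcal{L}$ from $v_n$ to $x$ that uses a deleted edge $w\to v_i$ has a final segment $v_i\to\cdots\to x$. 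Taking the last deleted edge on the path, this final segment lies in $G^\mathcal{L}_{n-1}$, so $x\in\textnormal{dec}_{G^\mathcal{L}}(v_i)$, a contradiction. Hence $x\notin \textnormal{dec}_{G^\mathcal{L}}(v_n)$ as well.

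\textbf{Induction.} The induction then runs through requirements (i)--(iii) of Definition \ref{allowed_def}. Requirement (ii) recursively accumulates the non-descendant conditions for all earlier heads. Requirement (i) adds the condition for $v_n$ and forbids $x=y=v_n$. Requirement (iii) asks that $(\Sigma_{n-1})_{xw}$ be allowed for parents $w$ of $v_n$, and this must be shown to be automatic, or else equivalent to the characterization. Here we argue as follows. If $x$ is not a descendant of $v_n$, then for a parent $w$ of $v_n$, every descendant of $w$ is either a descendant of $v_n$ or was already reachable from $w$. We need that $w$ is not a descendant of any deleted head; this fails in general. So the characterization will have to be refined, most likely by a fixed-point or closure description of the allowed set rather than a purely local one.

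\textbf{Main obstacle.} We therefore expect requirement (iii), with its dependence on previously allowed pairs involving parents $w$, to be the crux. Our first attempt would be to prove directly a ``commutation'' lemma: applying two deletion steps $D_a$ then $D_b$ yields the same allowed set as $D_b$ then $D_a$, and the same as the single step $D_a\cup D_b$ when $a=b$. This is checked by comparing the three cases of Proposition \ref{allowed_covariances} in both orders, using the path argument above. Since any two deletion sequences with the same union are related by adjacent swaps together with merging and splitting of same-head blocks, the proposition then follows.
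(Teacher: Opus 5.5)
Your monotonicity step is where the argument breaks, and it cannot be patched: the step is false, and so is the proposition as stated under Definition~\ref{allowed_def}. The problematic case is when the last deleted edge on the directed path from $v_n$ to $x$ is the final edge of that path, i.e.\ an edge $b\to x$ whose head $x=v_i$ is itself an earlier head. Then the ``final segment $v_i\to\cdots\to x$'' is trivial. Since descendants in this paper require at least one edge, you get no contradiction with $x\notin\textnormal{dec}_{G^\mathcal{L}}(v_i)$.

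Concretely, let the observed part be the chain $u\to a\to b\to x$, with no latent edges (so every edge is rationally identifiable by regression), and let $D_{\textnormal{del}}=\{b\to x,\,u\to a\}$.
\begin{itemize}
\item Sequence $D_x=\{b\to x\}$, then $D_a=\{u\to a\}$. At step~1, $\textnormal{dec}_{G^\mathcal{L}}(x)=\emptyset$, so $(x,u)$ falls under Case~3. At step~2 the input graph is $G^\mathcal{L}_1$, where $\textnormal{dec}_{G^\mathcal{L}_1}(a)=\{b\}$, so $(x,u)$ falls under Case~1; (ii) holds and (iii) is vacuous. Hence $(\Sigma_{\textnormal{del}})_{xu}$ is allowed, with value $\Sigma_{xu}-\lambda_{bx}\Sigma_{ub}=0$, which is the correct value.
\item Reversed sequence $D_a$, then $D_x$. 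Already at step~1 we have $x\in\textnormal{dec}_{G^\mathcal{L}}(a)$, so no case applies, and by (ii) $(\Sigma_{\textnormal{del}})_{xu}$ is never allowed.
\end{itemize}
The two allowed sets therefore differ. Your commutation lemma fails on exactly this swap, and your monotonicity step fails with $v_1=x$, $v_2=a$. The paper's own proof uses the same characterization and the same minimal-index argument (``then $x\in\textnormal{dec}_{G^\mathcal{L}}(\{v_1,\ldots,v_{i-1}\})$''), which breaks at the same point. What is missing, both there and in your proposal, is the case where $x$ (or $y$) is itself one of the earlier heads.

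Two further corrections show what your induction does establish.
\begin{itemize}
\item Requirement~(iii) is not the crux. In Case~2 with $y=v_n$ and $w\to v_n\in D_{v_n}\subseteq D$: if $w=v_i$ or $w\in\textnormal{dec}_{G^\mathcal{L}}(v_i)$, then $y=v_n\in\textnormal{dec}_{G^\mathcal{L}}(v_i)$. So as soon as $y$ is a $G^\mathcal{L}$-non-descendant of all heads, $(\Sigma_{n-1})_{xw}$ satisfies the same characterization, and (iii) is automatic. This is how the paper handles it.
\item Your head clause is too strict as written. Proposition~\ref{allowed_covariances} only excludes $x=y=v$, so distinct heads $x=v_1\neq y=v_2$ can be allowed. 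The right clause is ``not $x=y\in\{v_1,\ldots,v_n\}$''.
\end{itemize}
With these fixes, your induction proves that the order-independent set
\[
C=\{(x,y):\,x,y\notin\textnormal{dec}_{G^\mathcal{L}}(v_i)\ \forall i,\ \text{and not } x=y\in\{v_1,\ldots,v_n\}\}
\]
is contained in the allowed set of every deletion sequence. Your path argument gives the reverse inclusion for pairs with $x,y\notin\{v_1,\ldots,v_n\}$: there the last deleted edge has head $\neq x$, so the final segment is non-trivial. Equality would hold in general if condition~(i) of Definition~\ref{allowed_def} were read with descendants in the original $G^\mathcal{L}$ rather than in $G^\mathcal{L}_{n-1}$. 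As the definition is written, pairs involving a head are genuinely order-dependent.
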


Note that it might be the case that edges in subgraphs are not rationally identifiable while they are in the original supergraph even when the same covariances are allowed for the subgraph as for the supergraph---for example, consider the classical instrumental variables setup when removing the edge between the instrument and the treatment. Therefore, and due to Proposition \ref{propo_deletion_procedure}, in practice we create the subgraphs by recursively deleting one rationally identified edge at a time and not by deleting several rationally identified edges at once.
\addtocounter{example}{-1}
\begin{example}[continued]
Consider again the latent-factor graph $G^\mathcal{L}$ from Figure \ref{fig_3}a. As argued in Section \ref{sec_determinantal_results}, $\lambda_{45}$ and $\lambda_{46}$ are rationally identifiable via Theorem \ref{theorem_determinantal}. Now, delete the edge $4\rightarrow 5$ from $G^\mathcal{L}$. In the new graph $\overline{G}^\mathcal{L}$ with corresponding covariance matrix $\overline{\Sigma}$, we can further rationally identify $\lambda_{23}$ (first in $\overline{G}^\mathcal{L}$ and by Proposition \ref{proposition_id_notion_subgraph} then in $G^\mathcal{L}$) by applying Theorem \ref{theorem_determinantal} with $S=\{1, 2\}$ and $T=\{5\}$, namely
\begin{align*}
\label{eq_example_subgraph}
   	\lambda_{23} = \det\begin{pmatrix}
			\overline{\Sigma}_{15} & \overline{\Sigma}_{13}\\
			\overline{\Sigma}_{25} & \overline{\Sigma}_{23}
	\end{pmatrix}\biggr/\det\begin{pmatrix}
			\overline{\Sigma}_{15} & \overline{\Sigma}_{12}\\
			\overline{\Sigma}_{25} & \overline{\Sigma}_{22}
		\end{pmatrix}.\numberthis
\end{align*}
 Here, it holds that all occuring covariances in equation \eqref{eq_example_subgraph} are allowed and that $\overline{\Sigma}_{15}=\Sigma_{15}-\lambda_{45}\Sigma_{14}$, $\overline{\Sigma}_{13}=\Sigma_{13}$, $\overline{\Sigma}_{25}=\Sigma_{25}-\lambda_{45}\Sigma_{24}$, $\overline{\Sigma}_{23}=\Sigma_{23}$, $\overline{\Sigma}_{12}=\Sigma_{12}$, $\overline{\Sigma}_{22}=\Sigma_{22}$.
By applying Theorem \ref{theorem_determinantal} in a similar recursive fashion, we can identify all further remaining observed edges (which we, however, do not explicitly present).
\end{example}
\addtocounter{example}{5}
We finish this section by emphasizing that the recursive subgraph idea is \emph{not} already contained in the determinantal and eLF-HTC identification results: As we show in our numerical experiments in Section \ref{sec_appendix_numerical_experiments}, our recursive subgraph procedure yields strictly more identifiable cases both when combined with the determinantal and eLF-HTC identification results, respectively, and when combined with a combination of the determinantal and eLF-HTC identification results. For the determinantal identification result (Theorems \ref{theorem_determinantal}), a combination with the recursive subgraph procedure might yield more identifiability because for this combination rationally identifiable edges pointing into different vertices are allowed to be deleted instead of just edges pointing into the same vertex as in the original determinantal identification result. For the eLF-HTC identification result, deleting edges other than the ones pointing into the target vertex $v$ might yield smaller sets $W_z$ potentially making it easier to satisfy the requirements from the eLF-HTC identification result.

\section{Computation}
\label{section_computation}
In this section, we present an identification criterion/algorithm, namely Algorithm \ref{algo_combined_identification}, that combines the eLF-HTC identification result from Section  \ref{section_elfhtc},  the determinantal identification result from Section \ref{sec_determinantal_results}, and the recursive subgraph identification result from Section \ref{sec_recursive}.

This combined identification algorithm repeatedly applies these identification results and updates the set of all already rationally identified edges 
 $S_{\textnormal{edges}}$ along the way. If for some node $v\in V$ all incoming observed edges have been rationally identified, then $v$ is called ``solved" and in this case the combined identification algorithm adds $v$ to the set of solved nodes $S_{\textnormal{nodes}}$.

Hereby, computationally checking the determinantal identification result (Algorithm \ref{algo_determinantal}) is rather straightforward. It suffices to say that for a particular vertex $v\in V\setminus S_{\textnormal{nodes}}$ one just checks for all not identified edges $w_0\rightarrow v\in (\textnormal{pa}_V(v)\times \{v\})\setminus S_{\textnormal{edges}}$ and for all $S \subseteq V$ and $T\subseteq V\setminus \{v, w_0\}$ with $|S|=|T|+1$ whether the requirements from Theorem \ref{theorem_determinantal} are satisfied and whether the required covariances are allowed in the sense of Proposition \ref{allowed_covariances}. In particular, recall the suggested simplification from Remark \ref{remark_deleted_edges} in Section \ref{sec_determinantal_results}. The determinantal subprocedure then stops if either all $w_0\rightarrow v\in (\textnormal{pa}_V(v)\times \{v\})\setminus S_{\textnormal{edges}}$ have been identified, or when there is $w_0\rightarrow v\in (\textnormal{pa}_V(v)\times \{v\})\setminus S_{\textnormal{edges}}$ for which all possible sets $S$ and $T$ have been considered.

\begin{algorithm}
\caption{Combined identification algorithm}	
\begin{algorithmic}[1]
	\Require Latent-factor graph $G^{\mathcal{L}}=(V\cup \mathcal{L}, D)$; set of solved nodes $S_{\textnormal{nodes}}$ with $S_{\textnormal{nodes}}=\emptyset$ in the initial call; set of solved edges $S_{\textnormal{edges}}$ with $S_{\textnormal{edges}}=\emptyset$ in the initial call; set of allowed covariances $\Sigma_{\textnormal{allowed}}\subseteq \{\Sigma_{xy}:\;x, y\in V\}$ with equality in the initial call.
	\Repeat 
	\State $S_{\textnormal{edges,old}}\gets S_{\textnormal{edges}}$
	\For{$v\in V\setminus S_{\textnormal{nodes}}$}
	\State $S_{\textnormal{edges}}\gets \texttt{eLF-HTC subprocedure}(G^\mathcal{L}, v, S_{\textnormal{edges}}, S_{\textnormal{nodes}},\Sigma_{\textnormal{allowed}})$
	\If{all edges $w\rightarrow v\in \textnormal{pa}_V(v)\times \{v\}$ have been solved} 
	\State $S_{\textnormal{nodes}} \gets S_{\textnormal{nodes}}\cup \{v\}$
	\State \textbf{continue}
	\EndIf
	\State $S_{\textnormal{edges}}\gets \texttt{Determinantal subprocedure}(G^\mathcal{L}, v, S_{\textnormal{edges}},\Sigma_{\textnormal{allowed}})$
	\If{all edges $w\rightarrow v\in \textnormal{pa}_V(v)\times \{v\}$ have been solved} 
	\State $S_{\textnormal{nodes}} \gets S_{\textnormal{nodes}}\cup \{v\}$
	\EndIf
	
	\EndFor
\If{$S_{\textnormal{nodes}}=V$}
\State \textbf{break}
\EndIf
	\For{$w\rightarrow v\in S_{\textnormal{edges}}$}	\Comment{Recursive part}
	\State $\overline{G}^\mathcal{L}\leftarrow G^\mathcal{L}$ with $w\rightarrow  v$ removed
	\State $S_{\textnormal{edges,sub}}\leftarrow S_{\textnormal{edges}}\setminus \{w\rightarrow v\}$
	\State $\Sigma_{\textnormal{allowed,sub}}\leftarrow$ Proposition \ref{allowed_covariances} with $\Sigma_{\textnormal{allowed}}$ as input
	\State $S_{\textnormal{edges}}\leftarrow \texttt{Combined identification algorithm}(\overline{G}^\mathcal{L}, \Sigma_{\textnormal{allowed,sub}}, S_{\textnormal{nodes}}, S_{\textnormal{edges,sub}})$
	\If{all edges $\Tilde{w}\rightarrow v\in \textnormal{pa}_V(v)\times \{v\}$ have been solved} 
	\State $S_{\textnormal{nodes}} \gets S_{\textnormal{nodes}}\cup \{v\}$
	\EndIf
	\If{$S_{\textnormal{nodes}}=V$} 
\State \textbf{break}
\EndIf
	\EndFor
	\Until{$S_{\textnormal{nodes}}=V$ or $S_{\textnormal{edges}} = S_{\textnormal{edges,old}}$}
\State \Return{$S_{\textnormal{edges}}$}	
\end{algorithmic}
\label{algo_combined_identification}
\end{algorithm}

The algorithmic implementation of the eLF-HTC (Algorithm \ref{algo_elfhtc_sub}) requires more explanation: Similarly to the determinantal identification result, we also make use of max-flow computations to computationally check the eLF-HTC (see, e.g., Section 26 in \citealp{cormen2022introduction} for an introduction). These max-flow computations and accompanying results for the eLF-HTC are modifications of similar max-flow computations and results for the LF-HTC in \citet{barber2022half}, which were itself based on \citet{foygel2012half}, and there are some differences to the max-flow computations for the determinantal results from Section \ref{sec_determinantal_results}.

To begin with, suppose for a moment that we are given fixed sets $H,Z, W_v, (W_z)_{z\in Z}$ as introduced in the eLF-HTC such that $|Z|=|H|$, such that $Z\cap \{v\}=\emptyset$ and such that $Z^{(1)}\cap (W_Z\cup W_v)=\emptyset$, so such that all requirements in condition (i) of the eLF-HTC not including $Y$ are satisfied.
Because the second part of condition (ii) of the eLF-HTC is equivalent to $Y\cap \textnormal{ch}(\textnormal{pa}_\mathcal{L}(Z\cup \{v\})\setminus H)=\emptyset$, the set 
 $A=V\setminus (Z\cup \{v\}\cup \textnormal{ch}(\textnormal{pa}_\mathcal{L}(Z\cup \{v\})\setminus H))$ is the set of ``allowed" nodes that $Y$ is allowed to contain such that condition (ii) of the eLF-HTC is satisfied.

To check condition (iii) in the eLF-HTC, we define the flow graph $G^\mathcal{L}_{\textnormal{flow,eLF-HTC}}(v,\allowbreak A,\allowbreak Z,\allowbreak W_Z,\allowbreak W_v)=(V_f,\allowbreak D_f)$ corresponding to the latent-factor graph $G^\mathcal{L}$ having nodes $V_f=(A\cup \mathcal L)\cup (V'\cup \mathcal{L}')$, where $V'$ and $\mathcal{L}'$ are copies of $V$ and $\mathcal{L}$, respectively. The edge set $D_f$ is comprised of the edges
\begin{enumerate}[label=(\alph*)]
\item $a\rightarrow w$ if $a\in A$ and $w\rightarrow a\in D_{\mathcal{L}V}$,
\item $w\rightarrow w'$ for all $w\in A\cup \mathcal{L}$, and
\item $u'\rightarrow w'$ for all $u\rightarrow w\in D_{\mathcal{L}V}$ and for all $u\rightarrow w\in D_V$ such that $w\notin Z$.
\end{enumerate}
 For a flow graph $G^\mathcal{L}_{\textnormal{flow,eLF-HTC}}(v,\allowbreak A,\allowbreak Z,\allowbreak W_Z,\allowbreak W_v)$ corresponding to Figure \ref{fig_0}a, see Figure \ref{fig_1}.

We now turn $G^\mathcal{L}_{\textnormal{flow,eLF-HTC}}$ into a network by assigning to all edges capacity $\infty$ and to all nodes capacity $1$. In practice, $\infty$ can be replaced by $|W_v\cup Z\cup W_Z|$ as no flow can exceed $|W_v\cup Z\cup W_Z|$ in size.

\begin{algorithm}
\caption{Determinantal subprocedure}	
\begin{algorithmic}[1]
	\Require Latent-factor graph $G^{\mathcal{L}}=(V\cup \mathcal{L}, D)$, set of solved edges $S_{\textnormal{edges}}$, set of solved nodes  $S_{\textnormal{nodes}}$,  target vertex $v$,
	  set of allowed covariances $\Sigma_{\textnormal{allowed}}\subseteq \{\Sigma_{xy}:\;x, y\in V\}$.
	  \For{$w_0\in V$ with $w_0\rightarrow v\notin S_{\textnormal{edges}}$}
	  \For{$S\subseteq V$, $T\subseteq V\setminus \{v, w_0\}$ with $|S|=|T|+1$}
	  \State Construct $G^\mathcal{L}_{\textnormal{flow,det}}$ and $\overline{G}^\mathcal{L}_{\textnormal{flow,det}}$ by removing all edges of the form $\Tilde{w}\rightarrow v\in S_{\textnormal{edges}}$ from $G^\mathcal{L}_{\textnormal{flow,det}}$
	  \State Check conditions from Theorem \ref{theorem_determinantal}
	  \If{Theorem \ref{theorem_determinantal} applies and all covariances in $\Sigma_{S, T\cup \{v,w_0\}}$ and $\Sigma_{S, \Tilde{w}}$ with $\Tilde{w}\rightarrow v\in S_{\textnormal{edges}}$ are allowed}
	  \State Add $w_0\rightarrow v$ to $S_{\textnormal{edges}}$
	  \State \textbf{break}
	  \EndIf
	  \EndFor
	  \EndFor
\end{algorithmic}
\label{algo_determinantal}
\end{algorithm}

\begin{algorithm}
\caption{eLF-HTC subprocedure}	
\begin{algorithmic}[1]
	\Require Latent-factor graph $G^{\mathcal{L}}=(V\cup \mathcal{L}, D)$, set of solved edges $S_{\textnormal{edges}}$, set of solved nodes  $S_{\textnormal{nodes}}$,  target vertex $v$,
	  set of allowed covariances $\Sigma_{\textnormal{allowed}}\subseteq \{\Sigma_{xy}:\;x, y\in V\}$.
	 \Ensure Let $W_v$ contain the not-identified edges in $D_V$ into $v$.
		
			\For{$H\subseteq\mathcal{L}_{\geq 4}$}
				\For{$Z\subseteq \textnormal{ch}(H)\setminus \{v\}$ such that $|Z|=|H|$}
				\For{all $(W_z)_{z\in Z}$ where $W_z\subseteq\textnormal{pa}_V(z)$ is a superset of the observed parents for which the corresponding edge into $z$ has not been identified} 
				    \State $W_Z\gets\cup_{z\in Z}W_z$
				    \State $Z^{(1)}\gets\{z\in Z:\;W_z\subsetneq\textnormal{pa}_V(z)\}$ and $Z^{(2)}\gets \{z\in Z:\;W_z=\textnormal{pa}_V(z)\}$
							   
			\If{$Z^{(1)}\cap (W_Z\cup W_v)\neq \emptyset$}
			\State \textbf{next}
			\EndIf

					\State $A\gets V\setminus (Z\cup \{v\} \cup \textnormal{ch}(\textnormal{pa}_{\mathcal{L}}(Z\cup \{v\})\setminus H) \cup (\textnormal{htr}_H(Z\cup \{v\})\setminus  S_{\textnormal{nodes}}))$.
							\State Delete all elements from $A$ yielding covariances not in $\Sigma_{\textnormal{allowed}}$
						\If{Max-flow in $G_{\textnormal{flow,eLF-HTC}}(v, A, Z, W_Z, W_v)$ from $A$  to $W_v\cup Z\cup W_Z$ equals  $|W_v\cup Z\cup W_Z|$}
						\State Add all edges $p^v\rightarrow v$ with $p^v\in W_v\setminus (Z\cup W_Z)$ to $S_{\textnormal{edges}}$
						\State Update $W_v$ to contain the not-identified edges in $D_V$ into $v$
						\EndIf
						\If{$W_v = \emptyset$}	\State\textbf{break}
						\EndIf
					\EndFor
					\If{$W_v = \emptyset$}
			\State\textbf{break}
			\EndIf
					\EndFor
					\If{$W_v = \emptyset$}
		        	\State\textbf{break}
		        	\EndIf
			\EndFor
	\State\Return{$S_{\textnormal{edges}}$}
\end{algorithmic}
\label{algo_elfhtc_sub}
\end{algorithm}

It is now possible to prove the following theorem, which is similar to Theorem 5.1 in \citet{barber2022half} and which provides a formal connection between the eLF-HTC and the max-flow calculations in $G^\mathcal{L}_{\textnormal{flow,eLF-HTC}}(v, A, Z, W_Z, W_v)$. We want to emphasize at this point that Remark \ref{remark_flow_graph_extension} from Section \ref{sec_determinantal_results} also applies to this following theorem implying that the eLF-HTC max-flow calculations can be easily transferred to a standard  format for which there exist standard algorithms in computer science (see, e.g., \citealp{ford1956maximal} or Section 26 in \citealp{cormen2022introduction}).

		\begin{figure}
		\center
\scalebox{0.8}{
\begin{tikzpicture}
	\node[circle, draw, fill = c3, line width=0.4mm] (h1) at (-4,-0.8) {$h_1$};
	\node[circle, draw, line width=0.4mm] (1) at (-3, -2) {1};
	\node[circle, draw, line width=0.4mm] (2) at (-1.5, -2) {2};
	\node[circle, draw, line width=0.4mm] (3) at (0, -2) {3};
	\node[circle, draw, line width=0.4mm] (5) at (3, -2) {5};

	\path [->, line width = 0.5mm, dashed, color = red] (1) edge node[left] {} (h1);
	\path [->, line width = 0.5mm, dashed, color = red] (2) edge node[left] {} (h1);
	\path [->, line width = 0.5mm, dashed, color = red] (3) edge node[left] {} (h1);

	\path [->, line width = 0.5mm, dashed, color = red] (5) edge node[left] {} (h1);

\node[circle, draw, fill = c3, line width=0.4mm] (h1') at (-4,-2.8) {$h'_1$};
	\node[circle, draw, line width=0.4mm] (1') at (-3, -4) {1'};
	\node[circle, draw, line width=0.4mm] (2') at (-1.5, -4) {2'};
	\node[circle, draw, line width=0.4mm] (3') at (0, -4) {3'};
	\node[circle, draw, line width=0.4mm] (4') at (1.5, -4) {4'};
	\node[circle, draw, line width=0.4mm] (5') at (3, -4) {5'};
	\node[circle, draw, line width=0.4mm] (6') at (4.5, -4) {6'};
	
	\path [->, line width = 0.5mm, dashed, color = red] (h1') edge node[left] {} (1');
	\path [->, line width = 0.5mm, dashed, color = red] (h1') edge node[left] {} (2');
	\path [->, line width = 0.5mm, dashed, color = red] (h1') edge node[left] {} (3');
	\path [->, line width = 0.5mm, dashed, color = red] (h1') edge node[left] {} (4');
	\path [->, line width = 0.5mm, dashed, color = red] (h1') edge node[left] {} (5');
	\path [->, line width = 0.5mm, dashed, color = red] (h1') edge node[left] {} (6');
	
	\path [->, line width = 0.5mm, color = blue] (1') edge node[left] {} (2');
	\path [->, line width = 0.5mm, color = blue] (2') edge node[left] {} (3');
	\path [->, line width = 0.5mm, color = blue] (3') edge node[left] {} (4');
	\path [->, line width = 0.5mm, color = blue] (4') edge node[left] {} (5');
		\path [->, line width = 0.5mm, color = blue, bend right] (1') edge node[left] {} (5');

		\path [->, line width = 0.5mm, color = orange] (h1) edge node[left] {} (h1');
		\path [->, line width = 0.5mm, color = orange] (1) edge node[left] {} (1');
		\path [->, line width = 0.5mm, color = orange] (2) edge node[left] {} (2');
		\path [->, line width = 0.5mm, color = orange] (3) edge node[left] {} (3');
		\path [->, line width = 0.5mm, color = orange] (5) edge node[left] {} (5');

\end{tikzpicture}
}

\caption{Flow-graph $G^\mathcal{L}_{\textnormal{flow,eLF-HTC}}(v=4, A=\{1,2,3,5\}, Z=\{6\}, W_Z=\{5\}, W_v=\{3\})$ corresponding to Figure \ref{fig_0}a.}
		\label{fig_1}
	\end{figure} 

\begin{theorem} [Checking the eLF-HTC with max-flow computations]
\label{theorem_existence_of_Y}
Let $G^{\mathcal{L}}=(V\cup \mathcal{L}, D)$ be a latent-factor graph. For a fixed node $v\in V$, a fixed set $H\subseteq \mathcal{L}$, a fixed set $W_v\subseteq \textnormal{pa}_V(v)$,
a fixed set $Z\subseteq \textnormal{ch}(H)\setminus \{v\}$ such that $|Z|=|H|$, and fixed sets $W_z\subseteq \textnormal{pa}_V(z)$ for all $z\in Z$ such that $Z^{(1)}\cap (W_Z\cup W_v)= \emptyset$, and for the set of allowed nodes $A=V\setminus (Z\cup \{v\} \cup \textnormal{ch}(\textnormal{pa}_\mathcal{L}(Z\cup \{v\})\setminus H))$ it holds:

The max-flow in $G_{\textnormal{flow,eLF-HTC}}(v, A, Z, W_Z, W_v)$ from $A$  to $W'_v\cup Z'\cup W'_Z$ equals  $|W_v\cup Z\cup W_Z|$ if and only if there exists $Y\subseteq A$ such that $(Y, Z, (W_z)_{z\in Z},  H)$ satisfies the eLF-HTC with respect to $(v, W_v)$.
\end{theorem}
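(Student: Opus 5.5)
The plan is to read the max-flow statement through Menger's theorem for vertex-capacitated networks. By the extension mentioned in Remark \ref{remark_flow_graph_extension}, an integral maximum flow of value $k$ from $A$ to $W'_v\cup Z'\cup W'_Z$ is the same as a family of $k$ vertex-disjoint directed paths from distinct sources in $A$ to distinct sinks. The proof then consists of a bijection, in both directions, between such vertex-disjoint path families in $G_{\textnormal{flow,eLF-HTC}}(v,A,Z,W_Z,W_v)$ and systems of latent-factor half-treks from some $Y\subseteq A$ to $W_v\cup Z\cup W_Z$ that have no sided intersection and have the required form at each $z\in Z$.

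\textbf{The "if" direction.} Start from $Y\subseteq A$ satisfying the eLF-HTC, and translate each half-trek in the system into a flow path.
\begin{itemize}
\item A directed half-trek $y\rightarrow v_1\rightarrow\cdots\rightarrow t$ becomes $y\rightarrow y'\rightarrow v_1'\rightarrow\cdots\rightarrow t'$.
\item A half-trek $y\leftarrow h\rightarrow v_1\rightarrow\cdots\rightarrow t$ becomes $y\rightarrow h\rightarrow h'\rightarrow v_1'\rightarrow\cdots\rightarrow t'$.
\end{itemize}
All these edges exist in the flow graph, with one point to check: no edge $u'\rightarrow z'$ with $z\in Z$ is used in the primed copy. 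This holds because every half-trek ending in $z\in Z$ has the form $y\leftarrow h\rightarrow z$, and its step into $z'$ comes from $h'$. Edges $h'\rightarrow z'$ for latent $h$ are kept, since rule (c) only removes observed edges into $Z$. Intermediate observed vertices of a half-trek can never be in $Z$, for the same reason.

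No sided intersection then gives vertex-disjointness. Left sides are $\{y\}$ or $\{y,h\}$ and live on the unprimed level. Right sides live on the primed level, together with $h'$ or $y'$.

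The trivial half-trek, $y\in W_v\cup W_Z$, becomes the single edge path $y\rightarrow y'$. Since $|Y|=|W_v\cup Z\cup W_Z|$, the flow attains this value. It cannot exceed it, because that is the number of sinks.

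\textbf{The "only if" direction.} Take an integral max-flow, decompose it into vertex-disjoint paths, and let $Y$ be the set of their starting points in $A$. Each path leaves the unprimed level exactly once. It therefore has the form $y\rightarrow y'\rightarrow\cdots$ or $y\rightarrow h\rightarrow h'\rightarrow\cdots$, because unprimed edges only go from $A$ to $\mathcal{L}$ and latent vertices have no unprimed out-edges besides $h\rightarrow h'$. Reversing the translation turns each path into a latent-factor half-trek.

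Vertex disjointness in the flow graph gives no sided intersection in $G^{\mathcal{L}}$, and $|Y|$ equals the flow value. For a path ending in $z'$ with $z\in Z$, the last edge cannot be observed, so it is $h'\rightarrow z'$. That latent $h'$ must have been entered from $h$, since $\mathcal{L}'$ has no incoming primed edges, so the path is $y\rightarrow h\rightarrow h'\rightarrow z'$, i.e. $y\leftarrow h\rightarrow z$.

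\textbf{The main obstacle: $h\in H$ and condition (ii).} The required form needs $h\in H$. Since $z\in\textnormal{ch}(h)$, we have $h\in\textnormal{pa}_{\mathcal{L}}(Z)$. If $h\notin H$, then $y\in\textnormal{ch}(\textnormal{pa}_{\mathcal{L}}(Z\cup\{v\})\setminus H)$, which is excluded from $A$. So $h\in H$.

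Condition (ii) holds because $Y\subseteq A$ excludes $Z\cup\{v\}$ and, by the equivalence stated before the theorem, excludes exactly the vertices that would violate $\textnormal{pa}_{\mathcal{L}}(Y)\cap\textnormal{pa}_{\mathcal{L}}(Z\cup\{v\})\subseteq H$.

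Condition (i) consists of the given $|Z|=|H|$, $Z\cap\{v\}=\emptyset$ and $Z^{(1)}\cap(W_Z\cup W_v)=\emptyset$, together with $|Y|=|W_v\cup Z\cup W_Z|$.

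Trivial half-treks need care in both directions. A path $y\rightarrow y'$ with $y'$ a sink requires $y\in W_v\cup W_Z$, not $y\in Z$, which holds because $Z\cap A=\emptyset$.
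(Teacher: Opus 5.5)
Your proof is correct, but it takes a different route from the paper's. The paper never builds the path correspondence itself. It argues as follows:
\begin{itemize}
\item the definition of $A$ already enforces condition (ii);
\item the standing hypotheses give condition (i), except for $|Y|=|W_v\cup Z\cup W_Z|$;
\item after adding a super-source and super-sink, condition (iii) and the flow network coincide with the LF-HTC setting of \citet{barber2022half}, with $W_v\cup W_Z$ as a placeholder for $\textnormal{pa}_V(v)$.
\end{itemize}
It then invokes the proof of their Theorem 5.1 verbatim, and notes that the cardinality condition follows automatically from the existence of the half-trek system.

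Your bijection between vertex-disjoint flow paths and half-trek systems without sided intersection (via Menger's theorem) is essentially what that cited proof does internally, so the mathematical content is the same. What your version gains is self-containedness. It also explicitly checks the two features that are specific to this network:
\begin{itemize}
\item deleting the observed edges $u'\rightarrow z'$ forces every path into $z'$ to be $y\rightarrow h\rightarrow h'\rightarrow z'$;
\item removing $\textnormal{ch}(\textnormal{pa}_{\mathcal{L}}(Z\cup\{v\})\setminus H)$ from $A$ forces $h\in H$.
\end{itemize}
The paper's reduction is shorter, but it leaves the reader to confirm that the placeholder substitution does not disturb the cited argument.

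Two small tightenings. First, intermediate vertices of a half-trek cannot lie in $Z$ because $z\in\textnormal{Right}(y_z\leftarrow h\rightarrow z)$, so any other half-trek with $z$ on its right side would create a sided intersection. That is a consequence of no sided intersection, not of the form of the half-trek ending at $z$ alone, so ``for the same reason'' should say this. Second, a half-trek given as a walk, or as a directed cycle returning to some $y\in W_v\cup W_Z$, can be shortcut to a simple flow path without enlarging its vertex set. Likewise, any cycles in the flow decomposition can simply be discarded.
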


Therefore, to check whether there exists $(Y, Z, (W_z)_{z\in Z},  H)$ and $W_v$ such that the eLF-HTC with respect to $(v, W_v)$ is satisfied, we iterate over all sets $H\subseteq \mathcal{L}$, $Z\subseteq \textnormal{ch}(H)\setminus \{v\}$, equate $W_v$ with the not-already-identified observed parents $w$ of $v$ (which suffices due to Remark \ref{remark_W_v}), and iterate over all supersets $W_z$ of the not-identified observed parents of $z$ for each $z\in Z$ (which is required due to Remark \ref{remark_W_z}). In order to incorporate the recursive identification idea from Section \ref{sec_recursive}, one further needs to delete elements from the set $A$ that yield covariances that are not allowed in the sense of Definition \ref{allowed_def} before applying Theorem \ref{theorem_existence_of_Y}. Also, to ensure that in practice all edges required to be rationally identifiable are already known to be rationally identifiable, we further remove from $A$ the set $(\textnormal{htr}_H(Z\cup \{v\})\setminus  S_{\textnormal{nodes}}))$.

Now, as the following proposition shows (which is again similar to Proposition 5.2 in \citet{barber2022half}), iterating over $H\subseteq \mathcal{L}$ can be simplified to iterating over $H\subseteq \mathcal{L}_{\geq 4}$ where $\mathcal{L}_{\geq 4}:=\{h\in \mathcal{L}:\;|\textnormal{ch}(h)|\geq 4\}$.
\begin{proposition}[Simplification of the eLF-HTC algorithm]
\label{propo_4_children}
Let $G^\mathcal{L}=(V\cup \mathcal{L}, D)$ be a latent-factor graph and let $v\in V$. If $(Y, Z, (W_z)_{z\in Z},  H)$ satisfies the eLF-HTC with respect to $(v, W_v)$ and there is a node $h\in H$ such that $|\textnormal{ch}(h)|\leq 3$, then there are strict subsets $\Tilde{Y}\subsetneq Y$ and $\Tilde{Z}\subsetneq Z$ such that $(\Tilde{Y}, \Tilde{Z}, (W_z)_{z\in \Tilde{z}},  \Tilde{H})$ with $\Tilde{H}=H\setminus \{h\}$ satisfies the eLF-HTC for $(v, W_v)$ as well.
\end{proposition}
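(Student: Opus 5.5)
The plan is to mimic the proof of Proposition 5.2 in \citet{barber2022half}: remove $h$ together with the auxiliary node $z_0\in Z$ that $h$ serves, and remove the source $y_0\in Y$ used to reach $z_0$. The key steps are to pin down $z_0$ and $y_0$, define the reduced tuple, and check conditions (i)--(iii) of the eLF-HTC one at a time.

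\textbf{Step 1 (the latent $h$ is used).} Fix a system $\Pi$ of latent-factor half-treks as in condition (iii). Each $z\in Z$ is reached by a half-trek $y\leftarrow h'\rightarrow z$ with $h'\in H$, and $h'$ is the top node, so $h'\in\textnormal{Left}\cap\textnormal{Right}$. No sided intersection therefore forces distinct $z$ to use distinct latents. Since $|Z|=|H|$, this gives a bijection $Z\to H$. Hence there are $z_0\in Z$ and $y_0\in Y$ with $y_0\leftarrow h\rightarrow z_0\in\Pi$. In particular $y_0,z_0\in\textnormal{ch}(h)$, and since $|\textnormal{ch}(h)|\le 3$, $h$ has at most one further child $c$.

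\textbf{Step 2 (the reduced tuple).} Set $\Tilde Z:=Z\setminus\{z_0\}$ and $\Tilde H:=H\setminus\{h\}$, keeping $(W_z)_{z\in\Tilde Z}$. Let $\Tilde Y\subseteq Y$ consist of the sources of those half-treks in $\Pi$ that end in $W_v\cup\Tilde Z\cup W_{\Tilde Z}$, and let $\Tilde\Pi\subseteq\Pi$ be this subsystem.
\begin{itemize}
\item Condition (iii): $\Tilde\Pi$ has no sided intersection. The half-treks into $\Tilde Z$ still have the form $y\leftarrow h'\rightarrow z$ with $h'\in\Tilde H$, by the bijection of Step 1.
\item Condition (i): $|\Tilde Z|=|\Tilde H|$ and $|\Tilde Y|=|W_v\cup\Tilde Z\cup W_{\Tilde Z}|$ hold by construction. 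Moreover $\Tilde Z^{(1)}\subseteq Z^{(1)}$ and $W_{\Tilde Z}\subseteq W_Z$, so $\Tilde Z^{(1)}\cap(W_{\Tilde Z}\cup W_v)=\emptyset$ is inherited.
\item Condition (ii): $\Tilde Y\cap(\Tilde Z\cup\{v\})=\emptyset$ is inherited. For the latent condition, $\textnormal{pa}_\mathcal{L}(\Tilde Y)\cap\textnormal{pa}_\mathcal{L}(\Tilde Z\cup\{v\})\subseteq H$ already holds, so it suffices to show that $h$ is not in this intersection. That would require one child of $h$ in $\Tilde Y$ and another in $\Tilde Z\cup\{v\}$; these are distinct because $Y\cap(Z\cup\{v\})=\emptyset$. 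Neither can be $z_0$, which lies outside $\Tilde Z\cup\{v\}$ and outside $Y$. At most one of them can be $y_0$, since $y_0\notin\Tilde Z\cup\{v\}$. So we would need $\textnormal{ch}(h)=\{y_0,z_0,c\}$ with $c$ filling the other role, and then the pair would have to be $y_0\in\Tilde Y$ together with $c\in\Tilde Z\cup\{v\}$.
\end{itemize}
This last configuration is excluded exactly when $y_0\notin\Tilde Y$, so the crux reduces to Step 3.

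\textbf{Step 3 (strictness, the expected main obstacle).} I need $y_0\notin\Tilde Y$. This holds precisely when the target $z_0$ does not survive in $W_v\cup\Tilde Z\cup W_{\Tilde Z}$, i.e.\ when $z_0\notin W_v\cup W_{\Tilde Z}$.
\begin{itemize}
\item If $z_0\in Z^{(1)}$, then $z_0\notin W_v\cup W_Z$ by condition (i), and we are done. In this case $\Tilde Y\subseteq Y\setminus\{y_0\}\subsetneq Y$.
\item The delicate case is $z_0\in Z^{(2)}$ with $z_0\in W_v\cup W_{\Tilde Z}$. Then $y_0$ stays in $\Tilde Y$, and $z_0$ is reached by a half-trek through $h$.
\end{itemize}
For the delicate case I would first try to reroute. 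Any child $c$ of $h$ lying in $\Tilde Z\cup\{v\}$ must be $v$ or an element of $\Tilde Z$. In the latter case $c$ is reached via its own latent $h_c\neq h$, so I would try to show that no problematic intersection arises. If this fails, the fallback is a counting argument: use $|\textnormal{ch}(h)|\le3$ to show that $z_0\in W_v\cup W_{\Tilde Z}$ together with $c\in\Tilde Z\cup\{v\}$ cannot occur. If needed, I would instead replace the half-trek to $z_0$ by the trivial one, or drop $z_0$'s target and the corresponding $y_0$ from $\Tilde Y$.
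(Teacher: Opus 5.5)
Up to the case you leave open, your argument is sound and is essentially the paper's. You delete $h$, the node $z_0$ it serves, and the source $y_0$ of $y_0\leftarrow h\rightarrow z_0$, and you settle (ii) via $|\textnormal{ch}(h)|\le 3$ and $\tilde{Y}\cap(\tilde{Z}\cup\{v\})=\emptyset$. Your choice of $\tilde{Y}$ as the sources of the surviving half-treks is actually better than the paper's. The paper takes $\tilde{Y}=Y\setminus\{y_0\}$ and asserts that (i) and (iii) ``clearly'' hold, which already fails when $W_{z_0}$ contains a node outside $W_v\cup\tilde{Z}\cup W_{\tilde{Z}}$.

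The genuine gap is your delicate case $z_0\in Z^{(2)}\cap(W_v\cup W_{\tilde{Z}})$. There $z_0$ remains a target, so $y_0\in\tilde{Y}$. Condition (ii) then fails for your tuple as soon as the third child of $h$ lies in $\tilde{Z}\cup\{v\}$. Strictness fails whenever $W_{z_0}\subseteq W_v\cup\tilde{Z}\cup W_{\tilde{Z}}$. None of your fallbacks (rerouting, counting, trivial half-treks) can repair this, because the proposition is false in this case.

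Here is a counterexample. Take $V=\{a,y_0,z_0,v\}$ and $\mathcal{L}=\{h\}$, with edges $h\rightarrow y_0$, $h\rightarrow z_0$, $h\rightarrow v$, $a\rightarrow v$, $z_0\rightarrow v$. Set $W_v=\{a,z_0\}$, $Z=\{z_0\}$, $W_{z_0}=\emptyset=\textnormal{pa}_V(z_0)$, $H=\{h\}$, $Y=\{a,y_0\}$.
Conditions (i)--(iii) hold:
\begin{itemize}
\item[] $Z^{(1)}=\emptyset$;
\item[] $\textnormal{pa}_{\mathcal{L}}(Y)\cap\textnormal{pa}_{\mathcal{L}}(Z\cup\{v\})=\{h\}\subseteq H$;
\item[] the system is $\{a,\ y_0\leftarrow h\rightarrow z_0\}$.
\end{itemize}
Moreover $|\textnormal{ch}(h)|=3$. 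Since $\textnormal{htr}_H(Z\cup\{v\})=\{v\}$, Theorem~\ref{theorem_elfhtc} applied to this tuple identifies $a\rightarrow v$ with no further assumptions.

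Yet $\tilde{H}=\emptyset$ forces $\tilde{Z}=\emptyset$, so (i) demands $|\tilde{Y}|=|W_v|=2$. Condition (ii) forbids children of $h$ in $\tilde{Y}$, which leaves only $\tilde{Y}\subseteq\{a\}$. So no reduced tuple exists, strict or not, and the paper's ``clearly'' hides exactly this.

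What your Steps 1--3 do prove is the proposition under an extra hypothesis: the $z_0$ matched to $h$ satisfies $z_0\notin W_v\cup W_Z$. This is automatic when $z_0\in Z^{(1)}$, and it is the hypothesis you should add.
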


We now briefly go back to the combined identification algorithm (Algorithm \ref{algo_combined_identification}) and discuss its computational complexity: Some parts of the combined identification algorithm do clearly not scale polynomially with increasing observed node set $V$ or latent node set $\mathcal{L}$. For example, already in the determinantal subprocedure the iteration over the sets $S$ and $T$ scales exponentially in the worst case because one essentially iterates over all potential subsets of $V$. Similarly, in line 1 of the eLF-HTC subprocedure, one iterates over different sets $H\subseteq \mathcal{L}_{\geq 4}$, and in line 3 of the eLF-HTC subprocedure, one iterates over the $W_z$'s which also both scales (at least) exponentially. Recursively deleting edges from a subgraph also does not scale polynomially as there are already exponentially many subgraphs.

However, as Theorem \ref{theorem_complexity} in the following states, the time complexity of the combined identification algorithm (Algorithm \ref{theorem_complexity}) still is much better than the computer algebra methods that typically have double exponential time complexity and which even for simple graphs do not finish in ``reasonable amounts of time" on a standard computer. Furthermore, as we also show in Theorem \ref{theorem_complexity}, if one is willing to make some simplifying algorithmic assumptions which still yield a sufficient condition for rational identifiability, one is in fact also able to achieve a polynomial time complexity. For example and as we also numerically illustrate in Section \ref{sec_appendix_numerical_experiments}, simplifications of Algorithm \ref{algo_combined_identification} not including or simplifying the determinantal subprocedure shave off significant computation time while still being effective.

In the following, let $r\leq |D_V|/2$ be the number of reciprocal edge pairs in $D_V$, that is, the number of $v,w\in V$ for which $v\rightarrow w \in D_V$ and $w\rightarrow v\in D_V$.

\begin{theorem} [Computational Complexity]
\label{theorem_complexity}
Without any simplifications, Algorithm \ref{algo_combined_identification} has a time complexity of $\mathcal{O}(|V|^{4|V|^2+3}2^{|\mathcal{L}|}2^{|V|^2+|V|}(|V|+|\mathcal{L}|+r)^3)$.
If
 \begin{itemize}
 \item one restricts the number of to-be-searched pairs of sets $S$ and $T$ in determinantal subprocedure to $c_S$,
     \item only allows for sets $|H|\leq c_H$ in line 1 of the eLF-HTC subprocedure,
     \item one only considers the case where each $W_z$ equals the already identified observed parents of $z$ in line 3 of the eLF-HTC subprocedure, and
     \item bounds the number of edges that can be deleted from $G^\mathcal{L}$ for the recursive part in lines 14--25 of Algorithm \ref{algo_combined_identification} by $c_{\textnormal{rec}}$, 
 \end{itemize}
 then, this simplified version of Algorithm \ref{algo_combined_identification} has a time complexity of $\mathcal{O}(c_S|\mathcal{L}|^{c_H}|V|^{4c_{\textnormal{rec}}+c_H+3}\allowbreak(|V|+|\mathcal{L}|+r)^3)$.
\end{theorem}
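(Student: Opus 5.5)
The plan is to bound the cost of one non-recursive pass of Algorithm \ref{algo_combined_identification} (lines 2--13) and then multiply by the number of repeat iterations and by the number of recursive calls.

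\textbf{Max-flow cost.} Both $G^\mathcal{L}_{\textnormal{flow,det}}$ and $G^\mathcal{L}_{\textnormal{flow,eLF-HTC}}$ have $\mathcal{O}(|V|+|\mathcal{L}|)$ nodes. After the standard reduction of Remark \ref{remark_flow_graph_extension}, each node is split into two, a super-source and a super-sink are added, and every reciprocal pair is resolved by one auxiliary node. The resulting network therefore has $\mathcal{O}(|V|+|\mathcal{L}|+r)$ nodes. A textbook max-flow algorithm, e.g.\ push–relabel, costs $\mathcal{O}(n^3)$ on $n$ nodes. Hence every single max-flow check costs $\mathcal{O}((|V|+|\mathcal{L}|+r)^3)$. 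Building the flow graphs, computing $A$, $\textnormal{htr}_H$, and checking allowedness of covariances are all polynomial of no higher order, so they are absorbed.

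\textbf{One pass.} The pass loops over at most $|V|$ target nodes $v$.
\begin{itemize}
\item In the eLF-HTC subprocedure, $H$ ranges over at most $2^{|\mathcal{L}|}$ sets. For each $H$, the number of $Z\subseteq\textnormal{ch}(H)$ with $|Z|=|H|$ is at most $|V|^{|H|}$; I would bound this crudely by $2^{|V|}$ in the unsimplified case. The tuples $(W_z)_{z\in Z}$ are bounded by $\prod_z 2^{|\textnormal{pa}_V(z)|}\le 2^{|V|^2}$. The product is $2^{|\mathcal{L}|}2^{|V|^2+|V|}$ max-flow calls.
\item In the determinantal subprocedure, there are at most $|V|$ choices of $w_0$ and at most $\sum_k\binom{|V|}{k}^2\le |V|^{\mathcal{O}(|V|)}$ pairs $(S,T)$, each needing two max-flows. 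I need to check that this is dominated by, or fits within, the stated $|V|^{4|V|^2+3}$ factor.
\end{itemize}

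\textbf{Repeat loop.} Each non-terminal iteration identifies at least one new edge, so there are at most $|D_V|\le|V|^2$ iterations.

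\textbf{Recursion.} This is the main obstacle. I would bound the recursion tree by the number of deletion sequences of solved edges. Each recursion level removes one edge, so the depth is at most $|D_V|\le |V|^2$. Each node of the tree branches over at most $|V|^2$ edges, and each recursive call itself contains up to $|V|^2$ repeat iterations. Naively this gives at most $(|V|^2)^{2|V|^2}=|V|^{4|V|^2}$ calls. I would make this counting precise by induction on the number of remaining edges, writing $T(m)\le |V|^2\cdot(\text{pass}+|V|^2 T(m-1))$, and unrolling. Multiplying the call count by $|V|$ targets, the pass cost, and the extra $|V|^2$ iteration factor should give the exponent $4|V|^2+3$ together with the $2^{|\mathcal{L}|}2^{|V|^2+|V|}(|V|+|\mathcal{L}|+r)^3$ factor.

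\textbf{Simplified version.} The same accounting applies with the following replacements:
\begin{itemize}
\item $(S,T)$ pairs are bounded by $c_S$.
\item $H$ choices are bounded by $\sum_{j\le c_H}\binom{|\mathcal{L}|}{j}=\mathcal{O}(|\mathcal{L}|^{c_H})$.
\item $Z$ choices are bounded by $|V|^{c_H}$.
\item There is a single choice of $(W_z)_{z\in Z}$.
\item The recursion depth is at most $c_{\textnormal{rec}}$, so the call count is bounded by $(|V|^2\cdot|V|^2)^{c_{\textnormal{rec}}}=|V|^{4c_{\textnormal{rec}}}$.
\end{itemize}
Together with the factor $|V|^3$ from the target loop and the repeat iterations, this yields $\mathcal{O}(c_S|\mathcal{L}|^{c_H}|V|^{4c_{\textnormal{rec}}+c_H+3}(|V|+|\mathcal{L}|+r)^3)$.
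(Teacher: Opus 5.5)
Your plan follows the paper's proof essentially step for step. It uses the same $\mathcal{O}((|V|+|\mathcal{L}|+r)^3)$ cost per max-flow and the same loop counts $2^{|\mathcal{L}|}$, $2^{|V|}$, $2^{|V|^2}$ (respectively $\mathcal{O}(|\mathcal{L}|^{c_H})$, $\mathcal{O}(|V|^{c_H})$, $1$). It has the same factor $|V|^3$ from targets and repeat iterations, and handles the recursion the same way: a per-level factor $|V|^4$ over depth at most $|V|^2$ (respectively $c_{\textnormal{rec}}$).

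The one check you left open closes at once, but it belongs to the per-target cost, not to the $|V|^{4|V|^2+3}$ call-count factor. The number of $(S,T)$ pairs is at most $\sum_k\binom{|V|}{k}^2=\binom{2|V|}{|V|}\le 4^{|V|}$, so the $\mathcal{O}(|V|4^{|V|})$ determinantal max-flows are dominated additively by the $2^{|V|^2+|V|}$ eLF-HTC term. Your count is in fact more accurate than the paper's, which states only $\mathcal{O}(2^{|V|})$ pairs, though the final bound is unaffected.
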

In practice, it is thus advisable to start checking identifiability of a given latent-factor graph $G^\mathcal{L}$ with the not-simplified version of Algorithm \ref{algo_combined_identification}. If this not-simplified version of Algorithm \ref{algo_combined_identification} does not finish in a reasonable amount of time, then one might try with some or all of the previously mentioned simplifications. We remark that our provided implementation allows for such simplifications. We investigate several such simplifications in terms of runtime and number of identifiable graphs in Section \ref{sec_appendix_numerical_experiments}.

\section{Numerical Experiments}
\label{sec_numerical_experiments}
		\begin{figure}
		\center
		\begin{subfigure}[t]{0.4\textwidth}
		\scalebox{0.8}{
	\begin{tikzpicture}
	\node[circle, draw, fill = c3, line width=0.4mm] (h1) at (0.75,0) {$h_1$};
	\node[circle, draw, line width=0.4mm] (1) at (-3, -2) {\color{white}1\color{black}};
	\node[circle, draw, line width=0.4mm] (2) at (-1.5, -2) {\color{white}2\color{black}};
	\node[circle, draw, line width=0.4mm] (3) at (0, -2) {\color{white}3\color{black}};
	\node[circle, draw, line width=0.4mm] (4) at (1.5, -2) {\color{white}4\color{black}};
	\node[circle, draw, line width=0.4mm] (5) at (3, -2) {\color{white}5\color{black}};
	\node[circle, draw, line width=0.4mm] (6) at (4.5, -2) {\color{white}6\color{black}};
	
	\path [->, line width = 0.5mm, dashed, color = red] (h1) edge node[left] {} (1);
	\path [->, line width = 0.5mm, dashed, color = red] (h1) edge node[left] {} (2);
	\path [->, line width = 0.5mm, dashed, color = red] (h1) edge node[left] {} (3);
	\path [->, line width = 0.5mm, dashed, color = red] (h1) edge node[left] {} (4);
	\path [->, line width = 0.5mm, dashed, color = red] (h1) edge node[left] {} (5);
	\path [->, line width = 0.5mm, dashed, color = red] (h1) edge node[left] {} (6);

\end{tikzpicture}
}
\subcaption{Latent structure of unlabeled latent-factor graphs for Table \ref{table1}}
\end{subfigure}
\begin{subfigure}[t]{0.4\textwidth}
\scalebox{0.8}{
	\begin{tikzpicture}
	\node[circle, draw, fill = c3, line width=0.4mm] (h1) at (-0.75,0) {$h_1$};
	\node[circle, draw, fill = c3, line width=0.4mm] (h2) at (3,0) {$h_2$};
	\node[circle, draw, line width=0.4mm] (1) at (-3, -2) {\color{white}1\color{black}};
	\node[circle, draw, line width=0.4mm] (2) at (-1.5, -2) {\color{white}2\color{black}};
	\node[circle, draw, line width=0.4mm] (3) at (0, -2) {\color{white}3\color{black}};
	\node[circle, draw, line width=0.4mm] (4) at (1.5, -2) {\color{white}4\color{black}};
	\node[circle, draw, line width=0.4mm] (5) at (3, -2) {\color{white}5\color{black}};
	\node[circle, draw, line width=0.4mm] (6) at (4.5, -2) {\color{white}6\color{black}};
	
	\path [->, line width = 0.5mm, dashed, color = red] (h1) edge node[left] {} (1);
	\path [->, line width = 0.5mm, dashed, color = red] (h1) edge node[left] {} (2);
	\path [->, line width = 0.5mm, dashed, color = red] (h1) edge node[left] {} (3);
	\path [->, line width = 0.5mm, dashed, color = red] (h1) edge node[left] {} (4);
		\path [->, line width = 0.5mm, dashed, color = red] (h2) edge node[left] {} (4);
	\path [->, line width = 0.5mm, dashed, color = red] (h2) edge node[left] {} (5);
	\path [->, line width = 0.5mm, dashed, color = red] (h2) edge node[left] {} (6);

\end{tikzpicture}
}
\subcaption{Latent structure of unlabeled latent-factor graphs for Table \ref{table2}}
\end{subfigure}	
\caption{Considered latent structures for Section \ref{sec_numerical_experiments} and Section \ref{sec_appendix_numerical_experiments}.}
		\label{fig_3.5}
	\end{figure}
In this section, we numerically 
compare our criterion to the LF-HTC, which is, to our knowledge, the state-of-the-art criterion for rational identifiability in latent-factor graphs. 

We consider the same simulation setup as in \citet{barber2022half}. 
In the first setting (see Figure \ref{fig_3.5}a), there is one latent variable and 6 observed variables and there is a directed edge from the latent variable to each observed variable. For the different subcases $|D_V|\in\{0,1,\ldots, 9\}$, we then consider all latent-factor graphs (up to topological equivalence) with that respective number of observed edges and with that just mentioned latent structure. We then present the number of rationally identifiable latent-factor graphs for the LF-HTC and for the combined identification algorithm (Algorithm \ref{algo_combined_identification}) in Table \ref{table1}. In this table, we also present the number of rationally identifiable graphs which one can obtain using computer algebra, see Section C in \citet{barber2022supplement}.\footnote{\label{footnote_computer_algebra}While the approach discussed in Section C in \citet{barber2022supplement} states with reasonable accuracy whether a given latent-factor graph is rationally identifiable or not, this approach is still not able to generate rational formulas. In contrast, the LF-HTC and our identification results are able to produce such identification formulas.}   In Table \ref{table1_app} in Section \ref{sec_appendix_numerical_experiments}, we consider further amalgamations of the eLF-HTC, determinantal and recursive subgraph identification approaches, to illustrate the individual contributions each identification result makes. In Figure \ref{fig_runtime} in Section \ref{sec_appendix_numerical_experiments}, we also present the runtimes of these different amalgations.

We then also consider a second setting (see Figure \ref{fig_3.5}b), where there are again 6 observed variables but now 2 latent variables. The first latent variable points to 4 of the observed variables, the second latent variable points to the remaining two observed variables and to one of the other 4 observed variables. Then, similarly as in the first setting, we consider for different $|D_V|$ the number of identifiable cases. We present a condensation of the results for the second setting in Table \ref{table2} and in more detail in Table \ref{table2_app} and Figure \ref{fig_runtime} in Section \ref{sec_appendix_numerical_experiments}.

For both numerical settings, one can see that our combined identification algorithm identifies significantly more latent-factor graphs, especially, when $|D_V|$ is rather large. We emphasize that we can identify approximately 98\% of all identifiable graphs in both of these two settings, in comparison to 25\% and 71\% for the LF-HTC, respectively. For dense settings, the improvement is even stronger. For example, consider the case $|D_V|=9$ in Table \ref{table1}, where the LF-HTC identifies less than 1\% of all identifiable graphs compared to 96\% for our combined approach.
In the aforementioned tables in Section \ref{sec_appendix_numerical_experiments}, one can also see that each of the three discussed identification results, so the eLF-HTC, determinantal and recursive subgraph identification results, yield a strict improvement when compared to just using the other two results. In addition, one can see that algorithmic simplifications based on Theorem \ref{theorem_complexity} still identify a large number of graphs.
Moreover, one can see that all three identification results combined are better than just taking the union of the individual results \emph{after} applying them which hints at synergies between the different results. 
Regarding runtime, one can further see that particularly the determinantal subprocedure is responsible for longer runtimes; when simplifying the determinantal subprocedure or when not using it at all, runtimes are significantly shorter.

\definecolor{c100}{rgb}{0, 0.5, 0}
\definecolor{c99}{rgb}{0, 0.49, 0}
\definecolor{c98}{rgb}{0, 0.48, 0}
\definecolor{c97}{rgb}{0, 0.47, 0}
\definecolor{c96}{rgb}{0, 0.46, 0}
\definecolor{c93}{rgb}{0, 0.43, 0}
\definecolor{c87}{rgb}{0, 0.37, 0}
\definecolor{c84}{rgb}{0, 0.34, 0}
\definecolor{c78}{rgb}{0, 0.28, 0}
\definecolor{c71}{rgb}{0, 0.21, 0}
\definecolor{c65}{rgb}{0, 0.15, 0}
\definecolor{c63}{rgb}{0, 0.13, 0}
\definecolor{c31}{rgb}{0.19, 0, 0}
\definecolor{c25}{rgb}{0.25, 0, 0}
\definecolor{c7}{rgb}{0.43, 0, 0}
\definecolor{c0}{rgb}{0.5, 0, 0}
\begin{table}
\begin{center}
\resizebox{0.7\textwidth}{!}{%
	\begin{tabular}{ ccccc }
		\hline
		 $|D_V|$ & Total & Rationally &  LF-HTC  & Det + eLF-HTC + rec \\
		\hline
			0 & 1 & 1 & 1 \footnotesize\color{c100}(100\%) & 1 \footnotesize\color{c100}(100\%) \\
		1 & 1 & 1 & 1 \footnotesize\color{c100}(100\%) & 1 \footnotesize\color{c100}(100\%) \\
		2 & 4 & 4 & 4 \footnotesize\color{c100}(100\%) & 4 \footnotesize\color{c100}(100\%) \\
	3 & 13 & 13 & 13 \footnotesize\color{c100}(100\%) & 13 \footnotesize\color{c100}(100\%) \\
	4 & 51 & 51 & 50 \color{c98}\footnotesize($\approx$98\%) & 51 \footnotesize\color{c100}(100\%) \\
	5 & 163 & 159 & 134 \color{c84}\footnotesize($\approx$84\%) & 159 \footnotesize\color{c100}(100\%) \\
	6 & 407 & 398 & 250 \color{c63}\footnotesize($\approx$63\%) & 398 \color{c100}\footnotesize(100\%) \\
	7 & 796 & 747 & 234 \color{c31}\footnotesize($\approx$31\%) & 743 \color{c99}\footnotesize($\approx$99\%) \\
	8 & 1169 & 956 & 64 \color{c7}\footnotesize($\approx$7\%) & 938 \color{c98}\footnotesize($\approx$98\%) \\
	9 & 1291 & 631 & 4 \color{c0}\footnotesize($<$1\%) & 606 \color{c96}\footnotesize($\approx$96\%) \\
	Total & 3896 & 2961 & 755 \color{c25}\footnotesize($\approx$25\%) & 2914 \color{c98}\footnotesize($\approx$98\%)
	\end{tabular}
	}
\end{center}
\caption{Number of total, rationally identifiable, LF-HTC identifiable, and Algorithm \ref{algo_combined_identification} identifiable unlabeled DAGs (up to topological equivalence) for the latent structure presented in Figure \ref{fig_3.5}a. The percentages are relative to the number of rationally identifiable latent-factor graphs.}
\label{table1}
\end{table}
\begin{table}
	\begin{center}
	\resizebox{0.7\textwidth}{!}{%
		\begin{tabular}{ ccccc }
		\hline
		 $|D_V|$ & Total & Rationally &  LF-HTC  & Det + eLF-HTC + rec\\
		\hline
	0 & 1 & 1 & 1 \color{c100}\footnotesize(100\%)& 1 \color{c100}\footnotesize(100\%)\\
	1 & 8 & 6 & 6 \color{c100}\footnotesize(100\%) & 6 \color{c100}\footnotesize(100\%)\\
	2 & 63 & 45 & 43 \color{c96}\footnotesize($\approx$96\%)& 45 \color{c100}\footnotesize(100\%)\\
	3 & 391 & 255 & 236 \color{c93}\footnotesize($\approx$93\%)& 255 \color{c100}\footnotesize(100\%)\\
	4 & 1983 & 1171 & 1018 \color{c87}\footnotesize($\approx$87\%)& 1168 \color{c100}\footnotesize($\approx$100\%)\\
	5 & 7570 & 3898 & 3028 \color{c78}\footnotesize($\approx$78\%)& 3850 \color{c99}\footnotesize($\approx$99\%)\\
	6 & 21,029 & 8960 & 5861 \color{c65}\footnotesize($\approx$65\%) & 8675 \color{c97}\footnotesize($\approx$97\%)\\
	Total & 31,045 & 14,336 & 10,193 \color{c71}\footnotesize($\approx$71\%) & 14,000 \color{c98}\footnotesize($\approx$98\%)\\
	\end{tabular}
	}
	\end{center}
	\caption{Number of total, rationally identifiable, LF-HTC identifiable, and Algorithm \ref{algo_combined_identification} identifiable DAGs (up to topological equivalence) for the latent structure presented in Figure \ref{fig_3.5}b. The percentages are relative to the number of rationally identifiable latent-factor graphs.}
	\label{table2}
\end{table}

\section{Conclusion}
\label{sec_conclusion}
In this paper, we have provided a new identification criterion that strictly improves all existing criteria that are known to us. As we have shown, combining all these extensions into one algorithm yields significantly more latent-factor graphs that one can identify. For the considered latent-factor graphs, a combination of our presented graphical criteria is able to rationally identify nearly all rationally identifiable graphs. To the best of our knowledge, our results are the most general graphical results for latent-factor graphs. From a computational point of view, these graphical results also  scale better than classical computer-algebra-based methods, and under some simplifying assumptions we even achieve polynomial time complexity. Besides, a practitioner with a latent-factor graph at hand might also directly apply our provided implementation or implement the provided pseudo-code themselves. 

In the future, it will be interesting to find (rational) formulas for identifying the remaining rationally identifiable graphs (which is typically not possible thus far with computer algebra methods due to the very long runtimes). Furthermore, one could combine our presented ideas with the recent ideas from \citet{sturma2025trek} who provide identification results for more general latent variables structures where the latent variables are not restricted to be source nodes in the graph. 
One could also derive more elaborate necessary results than the rather simple necessary results already derived in \citet{barber2022half}.
Besides, one could study finite-sample properties of different valid estimators for a given direct causal effect and study some notion of ``optimal estimation" (similar in spirit to \citet{runge2021necessary} or \citet{henckel2022graphical, henckel2024graphical}).

\section*{Acknowledgements and Disclosure of Funding}
T.H.\ is also affiliated with Technische Universität Berlin as a registered PhD-student. J.R.\ received funding from the European Research Council (ERC) Starting Grant CausalEarth
under the European Union’s Horizon 2020 research and
innovation program (Grant Agreement No.\ 948112). M.D. has received funding from the European Research Council (ERC) under the European Union’s Horizon 2020 research and innovation programme (grant agreement No.\ 883818).


\newpage

\appendix

\section{Appendix to Section \ref{sec_preliminaries}}
\label{app_id_notions}
In Section \ref{sec_terminologies_implications}, we introduce the terminology from \citet{barber2022half} of
rational identifiability and compare it to our Definition~\ref{def:rat-id}. 
In Section \ref{app_further_exp}, we then give more detailed explanations on why the Zariski closures of the involved parameter spaces are irreducible.

\subsection{Comparison of Different Definitions of Identifiability}
\label{sec_terminologies_implications}

The notion of rational identifiability from \citet{barber2022half} does not directly operate on the parameters $(\Lambda, \Gamma,\Omega_{\textnormal{diag}}, \mathcal{V}_L)$, but rather on $(\Lambda, \Omega_{\textnormal{diag}}+\Gamma^T\mathcal{V}_L\Gamma)$.
To formalize their approach, define the map
\begin{align*}
    \tau&:\mathbb{R}^{D_{\mathcal{L}V}}\times \textnormal{diag}^+_d\times \textnormal{
    diag}^+_\ell\rightarrow \textnormal{PD}(d),\\
    &(\Gamma,\Omega_{\textnormal{diag}}, \mathcal{V}_L)\mapsto \Omega_{\textnormal{diag}} + \Gamma^T\mathcal{V}_L \Gamma
\end{align*}
and let $\Theta_{2022}:= \mathbb{R}^{D_V}_{\textnormal{reg}}\times \textnormal{Im}(\tau)$, where the index 2022 indicates the publishing year. It holds that $\overline{\Theta_{2022}}$ is irreducible because $\Theta_{2022}$ is the polynomial image of an open set; 
for a more detailed explanation of this fact see Section \ref{app_further_exp}.
Note that \citet{barber2022half} assume without loss of generality that $\mathcal{V}_L=I_\ell$. However, their identification results also apply to the slightly more general setting considering arbitrary $\mathcal{V}_L$, as the image of the map $\tau$ does not change. 
Next, define the map
\begin{align*}
    \phi_{G^{\mathcal{L}}}&:\Theta_{2022} \longrightarrow \textnormal{PD}(d),\\
    &(\Lambda, \Omega)\mapsto (I_d-\Lambda)^{-T}\Omega(I_d-\Lambda)^{-1}.
\end{align*}
The definition of rational identifiability from \citet{barber2022half} is now as follows. Since the definition operates directly on $\Theta_{2022}$, we use the term  \emph{$\Theta_{2022}$-rational identifiability} to differentiate from Definition~\ref{def:rat-id}.

\begin{definition}
\mbox{ }
\begin{enumerate}[label=(\alph*)]
    \item     The latent-factor graph $G^\mathcal{L}$ is $\Theta_{2022}$-rationally identifiable if there exists a proper algebraic subset $A_{2022}\subsetneq \overline{\Theta_{2022}}$ and a rational map $\psi_{2022}:\textnormal{PD}(d)\rightarrow \mathbb{R}^{D_V}_{\textnormal{reg}}\times \textnormal{PD}(d)$ such that $(\psi_{2022} \circ \phi_{G^\mathcal{L}})(\Lambda,\Omega)=(\Lambda, \Omega)$ for all $(\Lambda,\Omega)\in \Theta_{2022}\setminus A_{2022}$.
    \item The direct causal effect $\lambda_{vw}$, or also simply the edge $v\rightarrow w\in D_V$, or if the vertex $w$ is clear from the context simply the parent $v$, is $\Theta_{2022}$-rationally identifiable if there exists a proper algebraic subset $A_{2022}\subsetneq \overline{\Theta_{2022}}$ and a rational map $\psi_{2022}:\textnormal{PD}(d)\rightarrow \mathbb{R}$ such that ($\psi_{2022} \circ\phi_{G^\mathcal{L}})(\Lambda, \Omega)=\lambda_{vw}$ for all $(\Lambda,\Omega)\in \Theta_{2022}\setminus A_{2022}$.
\end{enumerate}
\end{definition}

Let $\textnormal{diag}_d$ and $\textnormal{diag}_\ell$ denote the set of all $d\times d$- and $\ell\times\ell$-diagonal matrices, respectively. To state our result relating the two definitions of rational identifiability, we need to extend $\tau$ to a mapping from $\mathbb{R}^m\rightarrow \mathbb{R}^n$ where $m:= |D_{\mathcal{L}V}|+d+\ell$ and where $n:=d(d+1)/2$ (and where we slightly abuse notation and identify $\mathbb{R}^{D_{\mathcal{L}V}}\times \textnormal{diag}_d\times \textnormal{
    diag}_\ell$ with $\mathbb{R}^m$ and $\textnormal{PD}(d)$ as a subset of $\mathbb{R}^n$). We denote this extension by $\overline{\tau}$ and define it in a trivial way by $\overline{\tau}: \mathbb{R}^m\rightarrow \mathbb{R}^n, \;(\Gamma,\Omega_{\textnormal{diag}},\mathcal{V}_L)\mapsto \Omega_{\textnormal{diag}} + \Gamma^T \mathcal{V}_L\Gamma$. Note that $\overline{\tau}$ restricted to $\Theta_{2022}$ equals $\tau$. We also use the function $(\textnormal{id},\tau):\Theta\rightarrow \Theta_{2022}$, for which the first $|D_V|$-components are the identity and the last $|D_{\mathcal{L}V}|+d+\ell$-components are given by $\tau$. The function $(\textnormal{id},\tau)$ is important as $\chi_{G^\mathcal{L}}(\Lambda,\Gamma,\Omega_{\textnormal{diag}},\mathcal{V}_L)=(\phi_{G^\mathcal{L}}\circ(\textnormal{id},\tau))(\Lambda, \Gamma,\Omega_{\textnormal{diag}},\mathcal{V}_L)$ for all $(\Lambda,\Gamma,\Omega_{\textnormal{diag}},\mathcal{V}_L)\in \Theta$. Finally, we consider the extension
$ \overline{(\textnormal{id},\tau)}:= (\textnormal{id},\overline{\tau})$.

\begin{proposition}
\label{proposition_notions_identifiability}

Let $G^\mathcal{L}= (V\cup \mathcal{L},D)$ be a latent-factor graph. If $G^\mathcal{L}$ is $\Theta_{2022}$-rationally identifiable, then it is rationally identifiable. In particular, if $A_{2022} \subseteq \overline{\Theta_{2022}}$ is a proper algebraic subset, then $A_\Theta:=\overline{(\textnormal{id},\tau)}^{-1}(A_{2022}) \subseteq \overline{\Theta}$ is a proper algebraic subset. 

Vice versa, if $G^\mathcal{L}$ is rationally identifiable, then it is $\Theta_{2022}$-rationally identifiable if the proper algebraic subset $A_\Theta \subseteq \overline{\Theta}$ is given by  $A_\Theta=   \{(\Lambda, \Gamma, \Omega_{\textnormal{diag}},\mathcal{V}_L)\in \overline{\Theta}:\;f(\Lambda,\allowbreak \overline{\tau}(\Gamma, \Omega_{\textnormal{diag}},\allowbreak\mathcal{V}_L))=0\textnormal{ for all } f \in S\}$, where $S$ is a finite set of polynomials in the entries of $\Lambda$ and $\Omega$.

\end{proposition}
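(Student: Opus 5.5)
The plan is to prove the two directions separately. Both rest on the factorization $\chi_{G^\mathcal{L}}=\phi_{G^\mathcal{L}}\circ(\textnormal{id},\tau)$ on $\Theta$, together with the fact that $\overline{(\textnormal{id},\tau)}$ is a polynomial map $\mathbb{R}^{m_\Theta}\to\mathbb{R}^{|D_V|}\times\mathbb{R}^n$ extending $(\textnormal{id},\tau)$. I also use that $\overline{\Theta}\cong\mathbb{R}^{m_\Theta}$ is irreducible and that $\overline{\Theta_{2022}}$ is irreducible.

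\emph{First direction.} Suppose $G^\mathcal{L}$ is $\Theta_{2022}$-rationally identifiable with proper algebraic $A_{2022}=V(S_{2022})\subsetneq\overline{\Theta_{2022}}$ and rational $\psi_{2022}$. Take $\psi:=\pi_\Lambda\circ\psi_{2022}$, the projection onto the $\Lambda$-component, and set $A_\Theta:=\overline{(\textnormal{id},\tau)}^{-1}(A_{2022})\cap\overline{\Theta}$.

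1. $A_\Theta$ is algebraic. It is the zero set of the polynomials $f\circ\overline{(\textnormal{id},\tau)}$ for $f\in S_{2022}$, and composing polynomials with a polynomial map gives polynomials.
2. For $\theta\in\Theta\setminus A_\Theta$ we have $(\textnormal{id},\tau)(\theta)\in\Theta_{2022}\setminus A_{2022}$. Hence $\psi(\chi(\theta))=\pi_\Lambda\psi_{2022}\phi((\textnormal{id},\tau)(\theta))=\Lambda$.
3. Properness is the main obstacle. If $A_\Theta=\overline{\Theta}$, then every point of $\Theta$ maps into $A_{2022}$, so $\Theta_{2022}=(\textnormal{id},\tau)(\Theta)\subseteq A_{2022}$. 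Since $A_{2022}$ is closed, this gives $\overline{\Theta_{2022}}\subseteq A_{2022}$, contradicting properness. The point to check carefully is that $(\textnormal{id},\tau)$ maps $\Theta$ \emph{onto} $\Theta_{2022}$. This holds by the very definition $\Theta_{2022}=\mathbb{R}^{D_V}_{\textnormal{reg}}\times\textnormal{Im}(\tau)$, which is what makes the argument work.

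\emph{Second direction.} Assume $A_\Theta$ is of the stated form, namely the common zero set in $\overline{\Theta}$ of the polynomials $f\circ\overline{(\textnormal{id},\tau)}$ with $f\in S$, and assume $\psi$ identifies $\Lambda$ off $A_\Theta$.

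1. Set $A_{2022}:=V(S)\cap\overline{\Theta_{2022}}$, which is algebraic.
2. Properness. If $A_{2022}=\overline{\Theta_{2022}}$, then every $f\in S$ vanishes on $\Theta_{2022}$. Composing with $(\textnormal{id},\tau)$, every $f\circ\overline{(\textnormal{id},\tau)}$ vanishes on $\Theta$, hence on $\overline{\Theta}$. That gives $A_\Theta=\overline{\Theta}$, a contradiction.
3. Identify $\Lambda$. For $(\Lambda,\Omega)\in\Theta_{2022}\setminus A_{2022}$, pick a preimage $\theta$ with $(\textnormal{id},\tau)(\theta)=(\Lambda,\Omega)$. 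Then $\theta\notin A_\Theta$, so $\psi(\phi(\Lambda,\Omega))=\psi(\chi(\theta))=\Lambda$.
4. Identify $\Omega$. Use $\Omega=(I_d-\Lambda)^T\Sigma(I_d-\Lambda)$, which is rational in $\Sigma$ once $\Lambda=\psi(\Sigma)$ is substituted. So $\psi_{2022}(\Sigma):=(\psi(\Sigma),(I_d-\psi(\Sigma))^T\Sigma(I_d-\psi(\Sigma)))$ works.

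This second direction is routine. The only delicate point in the whole proof is the surjectivity of $(\textnormal{id},\tau)$ onto $\Theta_{2022}$ and the passage between zero sets and Zariski closures. I will handle that through the fact that a polynomial vanishing on a set vanishes on its Zariski closure.
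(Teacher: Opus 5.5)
Your proposal is correct and follows essentially the same route as the paper. It uses the factorization $\chi_{G^\mathcal{L}}=\phi_{G^\mathcal{L}}\circ(\textnormal{id},\tau)$, surjectivity of $(\textnormal{id},\tau)$ onto $\Theta_{2022}$ for properness in the first direction, and in the second direction $A_{2022}:=V(S)\cap\overline{\Theta_{2022}}$ with a preimage argument and $\psi_{2022}(\Sigma)=(\psi(\Sigma),(I_d-\psi(\Sigma))^T\Sigma(I_d-\psi(\Sigma)))$. The only cosmetic difference is in proving $A_{2022}$ proper: you pull the vanishing of $S$ back to $\Theta$ and then to $\overline{\Theta}$, whereas the paper pushes a point of $\overline{\Theta}\setminus A_\Theta$ forward using $\overline{(\textnormal{id},\tau)}(\overline{\Theta})\subseteq\overline{\Theta_{2022}}$, which is the same Zariski-closure fact in an equivalent form.
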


\begin{proof}
First note that from a standard topological fact about continuous functions (e.g., Theorem 2.9 in \citealp{armstrong1983basic}), it follows that
\begin{align*}
\label{subset_relation_image}
    \overline{(\textnormal{id},\tau)}(\overline{\Theta})\subseteq \overline{\overline{(\textnormal{id},\tau)}(\Theta)}=\overline{(\textnormal{id},\tau)(\Theta)}=\overline{\Theta_{2022}}.\numberthis
\end{align*} 
Also, note that $\overline{(\textnormal{id},\tau)}$ restricted to $\Theta$ equals $(\textnormal{id},\tau)$.

Now, to show the first direction, suppose that $A_{2022} \subseteq \overline{\Theta_{2022}}$ is a proper algebraic subset.

Define $A_\Theta:=\overline{(\textnormal{id},\tau)}^{-1}(A_{2022})$ and observe that $A_\Theta\subseteq \overline{\Theta}$ due to \eqref{subset_relation_image}. 
Since preimages of algebraic sets under polynomial maps are again algebraic sets, it 
follows that $A_\Theta$ is an \emph{algebraic} subset of $\overline{\Theta}$.

It remains to show the properness of $A_\Theta$. First, suppose that $\Theta_{2022}\subseteq A_{2022}$. Then, $\overline{\Theta_{2022}}\subseteq \overline{A_{2022}}=A_{2022}\subsetneq \overline{\Theta_{2022}}$, which is a contradiction, and thus, $\Theta_{2022} \not\subseteq A_{2022}$, which implies that there exists
$(\Lambda,\Omega)\in \Theta_{2022} \setminus A_{2022}$. For this $(\Lambda,\Omega)$ it holds that 
$\overline{(\textnormal{id},\tau)}^{-1}(\Lambda, \Omega)=(\textnormal{id},\tau)^{-1}(\Lambda, \Omega)\neq \emptyset$, thus we can take $(\Lambda, \Gamma, \Omega_{\textnormal{diag}},\mathcal{V}_L)\in \overline{(\textnormal{id},\tau)}^{-1}(\Lambda, \Omega)\subseteq \overline{(\textnormal{id},\tau)}^{-1}(\Theta_{2022})= (\textnormal{id},\tau)^{-1}(\Theta_{2022})=\Theta$. Observe that if $(\Lambda, \Gamma, \Omega_{\textnormal{diag}},\mathcal{V}_L)$ were in $A_{\Theta}$, then, by the definition of $A_{\Theta}$, the tuple $(\Lambda, \Omega)$ would be in $A_{2022}$, which is a contradiction. Thus, we conclude that 
$(\Lambda, \Gamma, \Omega_{\textnormal{diag}},\mathcal{V}_L)\notin A_\Theta$, which implies that $A_\Theta$ is a \emph{proper} algebraic subset of $\overline{\Theta}$.

If $G^\mathcal{L}$ is $\Theta_{2022}$-rationally identifiable, then there exists a proper algebraic subset $A_{2022}\subsetneq \overline{\Theta_{2022}}$ and a rational map $\psi_{2022}:\textnormal{PD}(d)\rightarrow \mathbb{R}^{D_V}_{\textnormal{reg}}\times \textnormal{PD}(d)$ such that $(\psi_{2022} \circ \phi_{G^\mathcal{L}})(\Lambda,\Omega)=(\Lambda, \Omega)$ for all $(\Lambda,\Omega)\in \Theta_{2022}\setminus A_{2022}$.
Define $\psi$ by $\psi:=\pi\circ\psi_{2022}$ where $\pi:\mathbb{R}^{D_V}_{\textnormal{reg}}\times \textnormal{PD}(d)\rightarrow \mathbb{R}^{D_V}_{\textnormal{reg}}$ is the projection onto $\mathbb{R}^{D_V}_{\textnormal{reg}}$. Also, recall that $\chi_{G^\mathcal{L}}(\Lambda,\Gamma,\Omega_{\textnormal{diag}},\mathcal{V}_L)=(\phi_{G^\mathcal{L}}\circ(\textnormal{id},\tau))(\Lambda, \Gamma,\Omega_{\textnormal{diag}},\mathcal{V}_L)$ for all $(\Lambda,\Gamma,\Omega_{\textnormal{diag}},\mathcal{V}_L)\in \Theta$. Since we have already seen that $A_\Theta:=\overline{(\textnormal{id},\tau)}^{-1}(A_{2022})$ is a proper algebraic subset of $\overline{\Theta}$, we conclude that, for all $(\Lambda, \Gamma, \Omega_{\textnormal{diag}},\mathcal{V}_L)\in \Theta\setminus A_\Theta$,  it holds that
\begin{align*}
    (\psi\circ \chi_{G^\mathcal{L}})(\Lambda,\Gamma,\Omega_{\textnormal{diag}},\mathcal{V}_L) &=(\pi\circ\psi_{2022}\circ\phi_{G^\mathcal{L}}\circ(\textnormal{id},\tau))(\Lambda, \Gamma,\Omega_{\textnormal{diag}},\mathcal{V}_L)\\
    &=(\pi\circ\psi_{2022}\circ \phi_{G^\mathcal{L}})(\Lambda,\allowbreak \tau(\Gamma,\allowbreak\Omega_{\textnormal{diag}},\allowbreak\mathcal{V}_L)) \\
    &=\pi(\Lambda,\allowbreak \tau(\Gamma,\allowbreak\Omega_{\textnormal{diag}},\allowbreak\mathcal{V}_L))=\Lambda.
\end{align*}

For the other direction of the proposition, suppose first that $A_\Theta \subsetneq \overline{\Theta}$ is a proper algebraic subset that is defined by
\begin{align*}
    A_\Theta     &=\{(\Lambda, \Gamma, \Omega_{\textnormal{diag}},\mathcal{V}_L)\in \overline{\Theta}:\;f(\Lambda, \overline{\tau}(\Gamma, \Omega_{\textnormal{diag}},\mathcal{V}_L))=0\textnormal{ for all } f \in S\}\\
   &= \{(\Lambda, \Gamma, \Omega_{\textnormal{diag}},\mathcal{V}_L)\in \overline{\Theta}:\;\Tilde{f}(\Lambda, \Gamma, \Omega_{\textnormal{diag}},\mathcal{V}_L)=0\textnormal{ for all } \Tilde{f} \in \Tilde{S}\}
\end{align*}
for some sets of polynomials $S$ and
$\Tilde{S}=\{f\circ \overline{(\textnormal{id},\tau)}:\;f\in S\}$. 
Note that $\Tilde{S}$ is a finite set of polynomials because $\overline{(\textnormal{id},\tau)}$ is a polynomial. Now, define
\begin{equation} \label{eq:def-of-A}
\begin{aligned}
   A_{2022}&:=\{(\Lambda, \Omega)\in\overline{\Theta_{2022}}:\;f(\Lambda, \Omega)=0 \textnormal{ for all } f \in S \} \\
   &=\{(\Lambda, \Omega)\in\mathbb{R}^{|D_V| + d(d+1)/2}:\;f(\Lambda, \Omega)=0 \textnormal{ for all } f \in S \}\cap \overline{\Theta_{2022}},
\end{aligned}
\end{equation}
where we in the last line slightly abuse notation and interpret $(\Lambda,\Omega)$ as an element of $\mathbb{R}^{|D_V| + d(d+1)/2}$.
Because $\overline{\Theta_{2022}}$ is an algebraic set by definition and because  the intersection of two algebraic sets is an algebraic set, $A_{2022}$ is an algebraic set. It remains to show the properness of $A_{2022}$. Because $A_\Theta$ is proper, it follows that there exists $(\Lambda, \Gamma, \Omega_{\textnormal{diag}}, \mathcal{V}_L)\in \overline{\Theta}\setminus A_\Theta$, that is, there exists $(\Lambda, \Gamma, \Omega_{\textnormal{diag}}, \mathcal{V}_L)\in \overline{\Theta}$ and an $\Tilde{f}\in \Tilde{S}$ such that $\Tilde{f}(\Lambda, \Gamma, \Omega_{\textnormal{diag}}, \mathcal{V}_L)\neq 0$. By construction of $\Tilde{S}$, it now follows that $f(\Lambda, \overline{\tau}(\Gamma, \Omega_{\textnormal{diag}}, \mathcal{V}_L))\neq 0$ for at least one $f\in S$, and thus, it follows that $(\Lambda, \overline{\tau}(\Gamma, \Omega_{\textnormal{diag}}, \mathcal{V}_L))\notin A_{2022}$. Since $(\Lambda, \Gamma, \Omega_{\textnormal{diag}}, \mathcal{V}_L)\in \overline{\Theta}$ and because of Equation~\eqref{subset_relation_image} it also holds that $(\Lambda, \overline{\tau}(\Gamma, \Omega_{\textnormal{diag}}, \mathcal{V}_L))\in \overline{\Theta_{2022}}$, we have that $A_{2022}$ is a proper subset of $\overline{\Theta_{2022}}$.

Now, suppose that $G^\mathcal{L}$ is rationally identifiable, that is, there exists a proper algebraic subset $A_\Theta\subsetneq \overline{\Theta}$ and a rational map $\psi:\textnormal{PD}(d)\rightarrow \mathbb{R}^{D_V}_{\textnormal{reg}}$ such that $(\psi\circ \chi_{G^\mathcal{L}})(\Lambda,\Gamma,\Omega_{\textnormal{diag}},\mathcal{V}_L)=\Lambda$ for all $(\Lambda,\Gamma, \Omega_{\textnormal{diag}},\mathcal{V}_L)\in \Theta\setminus A_\Theta$. Moreover, suppose that $A_\Theta$ is of the required form, which implies that $A_{2022} \subseteq \overline{\Theta_{2022}}$ defined as in~\eqref{eq:def-of-A} is a proper algebraic subset. Consider a parameter pair $(\Lambda, \Omega)\in \Theta_{2022}\setminus A_{2022}$ and a corresponding triple $(\Gamma, \Omega_{\textnormal{diag}}, \mathcal{V}_L)\in \tau^{-1}(\Omega)$ in the preimage. 
Since $(\Lambda, \Gamma, \Omega_{\textnormal{diag}}, \mathcal{V}_L)\in \Theta\setminus A_\Theta$, it holds that 
 \begin{align*}
     (\psi\circ \phi_{G^\mathcal{L}})(\Lambda, \Omega) &= (\psi \circ \phi_{G^\mathcal{L}} \circ (\textnormal{id},\tau))(\Lambda, \Gamma,\Omega_{\textnormal{diag}}, \mathcal{V}_L)\\
     &=(\psi \circ \chi_{G^\mathcal{L}})(\Lambda, \Gamma,\Omega_{\textnormal{diag}}, \mathcal{V}_L)\\
     &= \Lambda.
 \end{align*}
 Setting $\psi_{2022}(\Sigma):=(\psi(\Sigma), (I_d-\psi(\Sigma))^{T}	\Sigma (I_d-\psi(\Sigma)))$, we conclude that 
 \begin{align*}
   &(\psi_{2022}\circ \phi_{G^\mathcal{L}})(\Lambda, \Omega)\\
   &= (\psi( \phi_{G^\mathcal{L}}(\Lambda, \Omega)), (I_d-\psi(\phi_{G^\mathcal{L}}(\Lambda, \Omega)))^{T}	\phi_{G^\mathcal{L}}(\Lambda, \Omega) (I_d-\psi(\phi_{G^\mathcal{L}}(\Lambda, \Omega))))\\
   &=(\Lambda, (I_d-\Lambda)^{T}	\phi_{G^\mathcal{L}}(\Lambda, \Omega) (I_d-\Lambda))\\
   &= (\Lambda, \Omega),
 \end{align*}
 where we used Equation~\eqref{eq_Sigma} for the second to last equality.
\end{proof}

We now state a widely applicable condition under which both directions of Proposition \ref{proposition_notions_identifiability} hold.
\begin{corollary} \label{cor_implications}
Let $G^\mathcal{L}= (V\cup \mathcal{L},D)$ be a latent-factor graph. Suppose that $G^\mathcal{L}$ is rationally identifiable and the proper algebraic subset $A_{\Theta}$ 
can be rewritten in terms of polynomials of the observed covariance matrix $\Sigma(\Lambda, \Gamma, \Omega_{\textnormal{diag}},\mathcal{V}_L) := (I_d-\Lambda)^{-T}\overline{\tau}(\Gamma,\Omega_{\textnormal{diag}},\mathcal{V}_L)(I_d-\Lambda)^{-1}$, that is,   
\[
    A_\Theta:=\{(\Lambda, \Gamma, \Omega_{\textnormal{diag}},\mathcal{V}_L)\in \overline{\Theta}:\;f(\Sigma(\Lambda, \Gamma, \Omega_{\textnormal{diag}},\mathcal{V}_L))=0\textnormal{ for all } f \in S\}
\] 
for some set of polynomials $S\subseteq \mathbb{R}[x_1,\ldots,x_{d(d+1)/2}]$ (where we again slightly abuse notation and identify $\Sigma(\Lambda, \Gamma, \Omega_{\textnormal{diag}},\mathcal{V}_L)$ as an element of $\mathbb{R}^{d(d+1)/2}$). Then, $G^\mathcal{L}$ is $\Theta_{2022}$-rationally identifiable. 
\end{corollary}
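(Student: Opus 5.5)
The plan is to reduce Corollary~\ref{cor_implications} to the second direction of Proposition~\ref{proposition_notions_identifiability}. That direction needs $A_\Theta$ to be cut out by polynomials $f(\Lambda,\overline{\tau}(\Gamma,\Omega_{\textnormal{diag}},\mathcal{V}_L))$ in the entries of $\Lambda$ and $\Omega$. So the only work is to rewrite the defining polynomials of $A_\Theta$, which are stated in terms of $\Sigma$, in that form.

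Write $\Omega:=\overline{\tau}(\Gamma,\Omega_{\textnormal{diag}},\mathcal{V}_L)$, so that $\Sigma(\Lambda,\Gamma,\Omega_{\textnormal{diag}},\mathcal{V}_L)=(I_d-\Lambda)^{-T}\Omega(I_d-\Lambda)^{-1}$. By Cramer's rule, $(I_d-\Lambda)^{-1}=\operatorname{adj}(I_d-\Lambda)/\det(I_d-\Lambda)$. Hence each entry of $\Sigma$ is $p(\Lambda,\Omega)/\det(I_d-\Lambda)^2$ for a polynomial $p$. For $f\in S$ of degree $k_f$, set
\[
g_f(\Lambda,\Omega):=\det(I_d-\Lambda)^{2k_f}\,f\bigl((I_d-\Lambda)^{-T}\Omega(I_d-\Lambda)^{-1}\bigr).
\]
Clearing denominators in this way makes $g_f$ a polynomial in the entries of $\Lambda$ and $\Omega$. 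Take $S':=\{g_f: f\in S\}$ and
\[
A'_\Theta:=\{\theta\in\overline{\Theta}: g_f(\Lambda,\overline{\tau}(\Gamma,\Omega_{\textnormal{diag}},\mathcal{V}_L))=0\ \forall f\in S'\}.
\]

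\textbf{Main obstacle.} $A_\Theta$ and $A'_\Theta$ agree only where $\det(I_d-\Lambda)\neq 0$. On $\overline{\Theta}\cong\mathbb{R}^{m_\Theta}$ that determinant can vanish, and then $A'_\Theta$ may pick up extra points and even fail to be proper. I would repair this by multiplying each $g_f$ by $\det(I_d-\Lambda)$ (a polynomial in $\Lambda$ alone), or equivalently by enlarging the exceptional set to $A_\Theta\cup\{\det(I_d-\Lambda)=0\}$. The enlarged set is still an algebraic subset of $\overline{\Theta}$. It is proper because $\overline{\Theta}$ is irreducible, so a union of two proper algebraic subsets is proper; and $\{\det(I_d-\Lambda)=0\}$ is proper since $\Lambda=0$ avoids it. On $\Theta$ itself, $\det(I_d-\Lambda)\neq0$ holds automatically, so $\Theta\setminus A''_\Theta=\Theta\setminus A_\Theta$, and $\psi\circ\chi_{G^\mathcal{L}}=\Lambda$ still holds off the new exceptional set. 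The new set $A''_\Theta$ is defined by the polynomials $\det(I_d-\Lambda)\,g_f(\Lambda,\overline{\tau}(\cdot))$, which have exactly the form required.

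Finally I would apply the second part of Proposition~\ref{proposition_notions_identifiability} with $A''_\Theta$ and the polynomial set $\{\det(I_d-\Lambda)g_f\}$. This gives a proper algebraic $A_{2022}$ and the rational map $\psi_{2022}(\Sigma)=(\psi(\Sigma),(I_d-\psi(\Sigma))^T\Sigma(I_d-\psi(\Sigma)))$, and hence $\Theta_{2022}$-rational identifiability. The only detail left to check is that the properness argument in that proposition uses only the existence of some point of $\overline{\Theta}$ outside the exceptional set, which we have ensured.
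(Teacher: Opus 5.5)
Your proposal is correct and follows essentially the same route as the paper: the paper also rewrites each $f\circ\Sigma$ as a rational function of $(\Lambda,\overline{\tau}(\Gamma,\Omega_{\textnormal{diag}},\mathcal{V}_L))$, replaces it by its numerator to get a polynomially defined proper algebraic set $B_\Theta\supseteq A_\Theta$, and then invokes the second part of Proposition~\ref{proposition_notions_identifiability}. Your explicit clearing of denominators via $\det(I_d-\Lambda)^{2k_f}$ and your irreducibility-based properness check spell out steps the paper only asserts. The extra factor $\det(I_d-\Lambda)$ is harmless but unnecessary: $A'_\Theta$ already agrees with $A_\Theta$ on the Zariski-dense set $\{\det(I_d-\Lambda)\neq 0\}\supseteq\Theta$, so it is automatically proper and has the same complement in $\Theta$.
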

\begin{proof}
Suppose that $G^\mathcal{L}$ is rationally identifiable, that is, there exists a proper algebraic subset $A_\Theta\subsetneq \overline{\Theta}$ and a rational map $\psi:\textnormal{PD}(d)\rightarrow \mathbb{R}^{D_V}_{\textnormal{reg}}$ such that $(\psi\circ \chi_{G^\mathcal{L}})(\Lambda,\Gamma,\Omega_{\textnormal{diag}},\mathcal{V}_L)=\Lambda$ for all $(\Lambda,\Gamma, \Omega_{\textnormal{diag}},\mathcal{V}_L)\in \Theta\setminus A_\Theta$. By assumption, we can rewrite $A_\Theta$ in terms of \emph{rational} functions of $\Lambda$ and $\overline{\tau}(\Gamma, \Omega_{\textnormal{diag}},\mathcal{V}_L)$, that is, as $A_\Theta=\{(\Lambda, \Gamma, \Omega_{\textnormal{diag}},\mathcal{V}_L)\in \overline{\Theta}:\;g(\Lambda, \overline{\tau}(\Gamma, \Omega_{\textnormal{diag}},\mathcal{V}_L))=0\textnormal{ for all } g \in \Tilde{S}\}$ where $\Tilde{S}:=\{g\equiv f\circ \Sigma\}$. It remains to show that we can replace the rational functions $g \in \Tilde{S}$ by polynomial functions. For every $g\in \Tilde{S}$ consider the representation $g=g_{\textnormal{num}}/g_{\textnormal{denom}}$ for some polynomials $g_{\textnormal{num}}$ and $g_{\textnormal{denom}}$, and define $S':=\{h \equiv g_{\textnormal{num}}:\; g\in \Tilde{S}\}$. Then, the set 
    \begin{align*}
         B_\Theta&:
         =\{(\Lambda, \overline{\tau}(\Gamma, \Omega_{\textnormal{diag}},\mathcal{V}_L))\in \overline{\Theta}:\;h(\Lambda, \overline{\tau}(\Gamma, \Omega_{\textnormal{diag}},\mathcal{V}_L))=0 \textnormal{ for all } h \in S'\}
    \end{align*}
is a proper algebraic subset of $\overline{\Theta}$ that contains $A_{\Theta}$. Hence, it holds that $(\psi\circ \chi_{G^\mathcal{L}})(\Lambda,\Gamma,\allowbreak\Omega_{\textnormal{diag}},\allowbreak\mathcal{V}_L)=\Lambda$ for all $(\Lambda,\Gamma, \Omega_{\textnormal{diag}},\mathcal{V}_L)\in \Theta\setminus B_\Theta$, and we conclude by   
Proposition~\ref{proposition_notions_identifiability} that $G^\mathcal{L}$ is $\Theta_{2022}$-rationally identifiable.
\end{proof}

By Corollary~\ref{cor_implications}, all discussed identification results from our main paper, that is, Theorems \ref{theorem_lfhtc}, \ref{theorem_elfhtc} and \ref{theorem_determinantal}  and Proposition \ref{proposition_id_notion_subgraph},  are also true when replacing rational identifiability with $\Theta_{2022}$-rational identifiability and vice versa. This follows directly from inspecting the proofs.

\subsection{Further Explanations on the Irreducibility of $\overline{\Theta}$ and $\overline{\Theta_{2022}}$}
\label{app_further_exp}
We first recall a standard fact about Cartesian products and Zariski closures: For two sets $X\subseteq \mathbb{R}^{n_x}$ and $Y\subseteq \mathbb{R}^{n_y}$ for some $n_x, n_y\in \{1,2,3,\ldots\}$, we have $\overline{X\times Y} = \overline{X} \times \overline{Y}$. The ``$\subseteq$" follows because $\overline{X} \times \overline{Y}$ is an algebraic set (see, for example, exercise 15d in Section 2 in \citet{cox2015ideals}) containing $X\times Y$. Vice versa for the ``$\supseteq$", write $\overline{X\times Y}=V(S)$. Then, for every $f\in S$, it holds that $f(x,y)=0$ for all $(x,y)\in X\times Y$. Now, define for all $y\in Y$ and $f\in S$ the polynomial $g_{y,f}(x):=f(x, y)$. Clearly, for all $x\in X$, $y\in Y$ and $f\in S$ it holds that $g_{y,f}(x)=0$, and hence, $\{x\in \mathbb{R}^{n_x}:\;g_{y,f}(x)=0,\;y\in Y,\;f\in S\}$ is a closed set containing $X$ which implies that $\overline{X}\subseteq \{x\in \mathbb{R}^{n_x}:\;g_{y,f}(x)=0,\;y\in Y,\;f\in S\}$ because $\overline{X}$ is the smallest closed set containing $X$ by definition. 
Thus, $g_{y,f}(x)=0$ for all $x\in \overline{X},y\in Y$ and $f\in S$, which implies $f(x, y)=0$ for all $x\in \overline{X}, y\in Y$ and $f\in S$. Analogously, defining $h_{x,f}(y):=f(x,y)$ for all $x\in \overline{X}, y\in Y$ and $f\in S$, we conclude that $f(x,y)=0$ for all $x\in \overline{X}, y\in \overline{Y}$ and $f\in S$ which implies that $(x,y)\in V(S)=\overline{X\times Y}$ for all $(x,y)\in \overline{X}\times \overline{Y}$. \\

 Now we show the irreducibility of $\overline{\Theta}$. For that, we write $\textnormal{diag}_d$ and $\textnormal{diag}_\ell$ to denote the set of all $d\times d$- and $\ell\times\ell$-diagonal matrices, respectively. As already mentioned in the main paper, it holds that $\overline{\Theta}\cong\mathbb{R}^{m_\Theta}$. This fact is true firstly because $\overline{\Theta} =\overline{\mathbb{R}^{D_V}_{\textnormal{reg}}\times \mathbb{R}^{D_{\mathcal{L}V}}\times \textnormal{diag}^+_d\times \textnormal{diag}^+_\ell}= \overline{\mathbb{R}^{D_V}_{\textnormal{reg}}}\times \overline{\mathbb{R}^{D_{\mathcal{L}V}}}\times \overline{\textnormal{diag}^+_d}\times \overline{\textnormal{diag}^+_\ell}$ from the fact about Cartesian products in the previous paragraph. Secondly, we have $\overline{\mathbb{R}^{D_V}_{\textnormal{reg}}}=\mathbb{R}^{D_V}\cong \mathbb{R}^{|D_V|}$ and $\overline{\textnormal{diag}^+_d}=\textnormal{diag}_d\cong\mathbb{R}^d$ and $\textnormal{diag}^+_\ell=\textnormal{diag}_\ell\cong\mathbb{R}^\ell$ as the Zariski closure of a non-empty open set (in the classical Euclidean topology) is the entire space because a polynomial that is zero on open balls (in the classical Euclidean topology) must be zero everywhere as the polynomial restricted to every line intersecting this ball is a one-dimensional polynomial and as every such intersection contains infinitely many points and in one dimension the Zariski closed sets are exactly all the sets with finitely many points except for the real numbers itself. We now explain (the also rather standard fact) that $\mathbb{R}^{m_\Theta}$ is irreducible (which then immediately implies that $\overline{\Theta}\cong\mathbb{R}^{m_\Theta}$ is irreducible):  Let $\mathbb{R}^{m_\Theta}\subseteq A_1\cup A_2$ for two proper algebraic subsets $A_1, A_2$ of $\mathbb{R}^{m_\Theta}$. Then, we can write $A_1=V(S_1)$ and $A_2=V(S_2)$ for some set of polynomials $S_1,S_2\subseteq \mathbb{R}[x_1,\ldots, x_{m_\Theta}]$. Hence, we can write $\mathbb{R}^{m_\Theta}=A_1\cup A_2 = V(S_1\cdot S_2)$ where $S_1S_2=\{f_1f_2:\;f_1 \in S_1,\; f_2\in S_2\}$. Therefore, $f_1f_2\equiv 0$ for all $f_1\in S_1$ and $f_2\in S_2$, which implies that $f_1\equiv 0$ for all $f_1\in S_1$ or $f_2\equiv 0$ for all $f_2\in S_2$. Therefore, either $V(S_1)=\mathbb{R}^{m_\Theta}$ or $V(S_2)=\mathbb{R}^{m_\Theta}$, contradiction. \\

Next, we show the irreducibility of $\overline{\Theta_{2022}}$ defined in Section \ref{sec_terminologies_implications}. Suppose $\overline{\Theta_{2022}}\subseteq C_1\cup C_2$, where $C_1$ and $C_2$ are (in slight abuse of notation) Zariski closed subsets of $\mathbb{R}^{m_{2022}}$, where $m_{2022}:=|D_V| + d(d+1)/2$, and where $C_1$ and $C_2$ are proper subsets of $\overline{\Theta_{2022}}$. Then, $\Theta_{2022}\subseteq \overline{\Theta_{2022}} \subseteq C_1 \cup C_2$. Now, recall the definition of $\overline{(\textnormal{id},\tau)}$ in Section \ref{sec_terminologies_implications}.
Because $\Theta_{2022} =(\textnormal{id},\tau)(\Theta)= \overline{(\textnormal{id},\tau)}(\Theta)$, it follows that $\Theta\subseteq \overline{(\textnormal{id},\tau)}^{-1}(\overline{(\textnormal{id},\tau)}(\Theta))= \overline{(\textnormal{id},\tau)}^{-1}(\Theta_{2022})\subseteq\overline{(\textnormal{id},\tau)}^{-1}(C_1\cup C_2)=\overline{(\textnormal{id},\tau)}^{-1}(C_1)\cup\overline{(\textnormal{id},\tau)}^{-1}(C_2)$. Because $\overline{(\textnormal{id},\tau)}$ is a polynomial and thus continuous with respect to the Zariski topology, it follows that $\overline{(\textnormal{id},\tau)}^{-1}(C_1)$ and $\overline{(\textnormal{id},\tau)}^{-1}(C_2)$ are also Zariski closed with respect to the Zariski topology on $\overline{\Theta}\cong\mathbb{R}^{m_\Theta}$. Therefore, $\overline{\Theta}\subseteq \overline{(\textnormal{id},\tau)}^{-1}(C_1)\cup\overline{(\textnormal{id},\tau)}^{-1}(C_2)$. As argued in the previous paragraph,
$\overline{\Theta}\cong\mathbb{R}^{m_\Theta}$ is irreducible, and thus,
without loss of generality,  $\overline{(\textnormal{id},\tau)}^{-1}(C_1) = \overline{\Theta}\cong\mathbb{R}^{m_\Theta}$. Hence, we have that $C_1=\overline{(\textnormal{id},\tau)}(\mathbb{R}^{m_\Theta})\supseteq \overline{(\textnormal{id},\tau)}(\Theta) =(\textnormal{id},\tau)(\Theta)= \Theta_{2022}$, which implies $C_1\supseteq \overline{\Theta_{2022}}$ because $C_1$ is Zariski closed by assumption, contradiction.

\section{Appendix to Section \ref{section_elfhtc}}
In Section \ref{sec_proof_elfhtc}, we provide the proof of the eLF-HTC identification result (Theorem \ref{theorem_elfhtc}) from Section \ref{section_elfhtc}. Afterwards, in Sections \ref{sec_claim2}
and \ref{sec_modified_lemma_2}, we provide statements and proofs of lemmas
occuring in Section \ref{sec_proof_elfhtc}. Then, in Section \ref{sec_extended_example_elfhtc}, we provide the rational identification formulas for Example \ref{example_elfhtc}. Finally, in Section \ref{sec_rat_two_proxies}, we provide the rational identification formula for Example \ref{ex_two_proxies}.

\subsection{Proof of the eLF-HTC Identification Result}
\label{sec_proof_elfhtc}

\begin{proof}
We first show rational identifiability with respect to the definition from \citet{barber2022half}, so with respect to $ \Theta_{2022}$, which we introduce in Section \ref{sec_terminologies_implications}.  Proposition \ref{proposition_notions_identifiability} then implies rational identifiability with respect to our notion.

We divide the proof of Theorem \ref{theorem_elfhtc} into three steps: In \textbf{Step 1}, we introduce the linear equation system which one can uniquely solve in order to obtain the coefficients $\Lambda_{p^vv}$ with $p^v\in W_v\setminus (Z^{(2)}\cup W_Z)$ that Theorem \ref{theorem_elfhtc} claims to identify. In Step 1, we also introduce some required notation and give some general remarks. In \textbf{Step 2}, we then prove that the claimed unique solution of this linear equation system in Step 1 indeed is a solution for generic $(\Lambda, \Omega)\in \Theta_{2022}$, that is, for all $(\Lambda, \Omega)\in \Theta_{2022}\setminus A_{2022}$ for some proper algebraic subset $A_{2022}\subsetneq \overline{\Theta_{2022}}$. In \textbf{Step 3}, we then prove that this solution is unique for generic $(\Lambda, \Omega)\in \Theta_{2022}$.

\noindent\underline{\textbf{Step 1 (introducing the linear equation system of interest):}}

Regarding notation: To ease notation, for a matrix $M$, we write $M_{rs}$ with $r,s\in V$ to indicate the entry of $M$ where the row corresponds to $r$ and the column corresponds to $s$, similarly for a vector $q$ we write $q_r$ to denote the component corresponding to $r$ (what ``corresponds" means will be clear from context). Similarly, we write $M_{R, S}$ with $R=\{r_1,\ldots,r_m\},S=\{s_1,\ldots,s_n\}\subseteq V$ with $m$ not necessarily equal to $n$ to indicate the submatrix of $M$ whose $(i,j)$-th entry with $i\in\{1,\ldots, m\}$ and $j\in \{1,\ldots, n\}$ is $M_{r_is_j}$, similarly for a vector $q_R$. When $R=\{r\}$, we sometimes also write $M_{r,S}$. When writing down a matrix-vector product $M_{R, S}q_S$, we implicitly assume that the $s$-columns of $M_{R,S}$ match the corresponding $s$-components of $q_S$.

Now, in order to prove Theorem \ref{theorem_elfhtc}, we construct a linear equation system of the form
	\begin{align}
	\label{eq_lin}
		d = (A\;\;\; B\;\;\;C) \cdot \begin{pmatrix}
			\alpha\\
			\beta\\
			\gamma
		\end{pmatrix},
	\end{align}
	where $\alpha$ has components 
	\begin{align}
	\label{def_alplha}
	    \alpha_{p^v}=\Lambda_{p^vv}\text{ for }p^v\in W_v\setminus (Z^{(2)}\cup W_Z)
	\end{align}
	and where we introduce the ``byproducts" $\beta$ and $\gamma$ which are not relevant for the remainder of this work in Step 2.

	  For defining the respective rows of $A$, $B$, $C$ and $d$, we distinguish between two cases:
	  
	  	\noindent \textbf{Case 1:} $y\not \in \textnormal{htr}_H(Z\cup\{v\})$: 
	In this case, define
	\begin{align*}
		d_y&:=\Sigma_{yv} - \sum_{p^v\in \textnormal{pa}_V(v)\setminus W_v}\Sigma_{yp^v}\Lambda_{p^vv},\\
		A_{yp^v}&:=\Sigma_{yp^v},\;\;\;p^v\in W_v\setminus (Z^{(2)}\cup W_Z),\\
		B_{yz}&:= \Sigma_{yz} - \sum_{p^z\in \textnormal{pa}_V(z)\setminus W_z}\Sigma_{yp^z}\Lambda_{p^zz},\;\;\;z\in Z^{(1)},\text{ and }\\
	C_{yw}&:=\Sigma_{yw},\;\;\;w\in Z^{(2)}\cup W_Z.
	\end{align*}
Note that $d_y$ and each $B_{yz}$ is rationally identifiable as all occuring covariances are between observed variables and as all occuring parameters from the matrix $\Lambda$ are rationally identifiable by the (additional) assumptions in Theorem \ref{theorem_elfhtc}. Similarly, each $A_{yp^v}$ and each $C_{yw}$ is trivially rationally identifiable from $\Sigma$.

	\noindent \textbf{Case 2:} $y \in \textnormal{htr}_H(Z\cup\{v\})$:

		In this case, define
	\begin{align*}
		d_y&:=\Sigma_{yv}-\sum_{p^y\in \textnormal{pa}_V(y)}\Lambda_{p^yy}\Sigma_{p^yv}-\sum_{p^v\in \textnormal{pa}_{V}(v)\setminus W_v}\left(\Sigma_{yp^v}-\sum_{p^y\in \textnormal{pa}_V(y)}\Lambda_{p^yy}\Sigma_{p^yp^v}\right)\Lambda_{p^vv},\\
		A_{yp^v} &:= \Sigma_{yp^v}-\sum_{p^y\in \textnormal{pa}_V(y)}\Lambda_{p^yy}\Sigma_{p^yp^v},\;\;\;p^v\in W_v\setminus (Z^{(2)}\cup W_Z),\\
		B_{yz}&:=\Sigma_{yz}-\sum_{p^y\in \textnormal{pa}_V(y)}\Lambda_{p^yy}\Sigma_{p^yz}-\sum_{p^z\in \textnormal{pa}_{V}(z)\setminus W_z}\left(\Sigma_{yp^z}-\sum_{p^y\in \textnormal{pa}_V(y)}\Lambda_{p^yy}\Sigma_{p^yp^z}\right)\Lambda_{p^zz},\;\;z\in Z^{(1)},\\
		C_{yw}&:=	\Sigma_{yw}-\sum_{p^y\in \textnormal{pa}_V(y)}\Lambda_{p^yy}\Sigma_{p^yw},\;\;\;w\in Z^{(2)}\cup W_Z.
	\end{align*}
 Note that again every $d_y$, $A_{yp^v}$, $B_{yz}$ and $C_{yw}$ is rationally identifiable as all occuring covariances are between observed variables and as all occuring parameters are rationally identifiable by assumption.

	\noindent\underline{ \textbf{Step 2 (proposed solution is indeed a solution):}}
	
Write $\textnormal{pa}_V(v)=\{p^v_1,\allowbreak\ldots,\allowbreak p^v_n\}$ and $Z=\{z_1,\allowbreak\ldots,\allowbreak z_r\}$ and $Y=\{y_1,\ldots,y_{k}\}$ where $n:=|\textnormal{pa}_V(v)|$ and $r:=|H|$ and $k:=|W_v\cup W_Z\cup Z|$. Define $n_1:=|W_v\setminus (Z^{(2)}\cup W_Z)|$ and without loss of generality, assume that $p^v_1,\ldots,p^v_{n_1}\in W_v\setminus (Z^{(2)}\cup W_Z)$ and that $p^v_{n_1+1},\ldots,p^v_n\in \textnormal{pa}_V(v)\setminus (W_v\setminus (Z^{(2)}\cup W_Z))$. Define $r_1:=|Z^{(1)}|$ and also assume without loss of generality that $z_1,z_2\ldots,z_{r_1}\in Z^{(1)}$ and $z_{r_1+1},\ldots, z_r\in Z^{(2)}$.
Now, define the matrix $A'\in \mathbb{R}^{k\times n}$ and $B'\in \mathbb{R}^{k\times r}$ and a vector $c'\in \mathbb{R}^{k}$ by
\begin{align*}
    A'_{yp^v} := \begin{cases}
        [(I_d-\Lambda)^T\Sigma]_{yp^v} &\text{ if } y\in \textnormal{htr}_H(Z\cup\{v\}),\\
        \Sigma_{yp^v} &\text{ if } y\not\in \textnormal{htr}_H(Z\cup\{v\})
    \end{cases}
\end{align*}
for all $y\in Y$ and $p^v\in \textnormal{pa}_V(v)$, and
\begin{align*}
    B'_{yz}:= \begin{cases}
        [(I_d-\Lambda)^T\Sigma(I_d-\Lambda)]_{yz} &\text{ if } y\in \textnormal{htr}_H(Z\cup\{v\}),\\
        [\Sigma(I_d-\Lambda)]_{yz} &\text{ if } y\not\in \textnormal{htr}_H(Z\cup\{v\})
    \end{cases}
\end{align*}
for all $y\in Y$ and $z\in Z$,	and
	\begin{align*}
	    c'_y:= \begin{cases}
        [(I_d-\Lambda)^T\Sigma]_{yv} &\text{ if } y\in \textnormal{htr}_H(Z\cup\{v\}),\\
        \Sigma_{yv} &\text{ if } y\not\in \textnormal{htr}_H(Z\cup\{v\})
    \end{cases}
	\end{align*}
	for all $y\in Y$.
Next, similarly as in the proof of the LF-HTC in \citet{barber2022half}, note that from condition (iii) in the eLF-HTC it follows that there exists $Y_Z\subseteq Y$ such that there is a system of latent-factor half-treks with no-sided intersection from $Y_Z$ to $Z$. Every latent-factor half-trek in this system of treks thereby has the form $y\leftarrow h\rightarrow z$ for $y\in Y$, $z\in Z$ and $h\in H$. From Proposition 3.4 in \citet{sullivant2010trek}, it now follows that $\det(\Omega_{Y_Z, Z})\neq 0$ for generic $(\Lambda, \Omega)\in \Theta_{2022}$ and hence that $\Omega_{Y, Z}$ has full column rank for generic $(\Lambda, \Omega)\in \Theta_{2022}$.

We now invoke Claim 2 from \citet{barber2022half} which we more formally state in Section \ref{sec_claim2}  and which implies that there exists $\psi \in \mathbb{R}^r$ for generic $(\Lambda, \Omega)\in \Theta_{2022}$ such that
\begin{align*}
\label{eq_claim_2}
    \begin{pmatrix}
        A' & B'
    \end{pmatrix} \cdot
    \begin{pmatrix}
        \Lambda_{\textnormal{pa}_V(v) ,v}\\
        \psi
    \end{pmatrix} = c'.\numberthis
\end{align*}
Now, note that $A'_{yp^v}=A_{yp^v}$ for all $y\in Y$ and $p^v\in W_v\setminus (Z^{(2)}\cup W_Z)$. Similarly, for all $y\in Y$ and $z\in Z$,
\begin{align*}
  B'_{yz}= \begin{cases}
  B_{yz}-\sum_{w\in W_{z}}C_{yw}\Lambda_{wz}  & \text{if } z\in Z^{(1)} \\
   C_{yz}-\sum_{w\in W_{z}}C_{yw}\Lambda_{wz} & \text{if } z\in Z^{(2)}.
  \end{cases}
\end{align*}
Therefore, equation \eqref{eq_claim_2} implies that  
\begin{align*}
\label{eq_claim_2_1}
    &\begin{pmatrix}
        A  & A'_{Y,\textnormal{pa}_V(v)\setminus (W_v\setminus (Z^{(2)}\cup W_Z))}& (B_{Yz} - \sum_{w\in W_z}C_{Yw}\Lambda_{wz})_{z\in Z^{(1)}}& (C_{Yz} - \sum_{w\in W_z}C_{Yw}\Lambda_{wz})_{z\in Z^{(2)}} 
    \end{pmatrix}\\
    &\cdot
    \begin{pmatrix}
        \Lambda_{W_v\setminus (Z^{(2)}\cup W_Z), v}\\
        \Lambda_{\textnormal{pa}_V(v)\setminus(W_v\setminus (Z^{(2)}\cup W_Z)), v}\\
        \psi_{Z^{(1)}}\\
        \psi_{Z^{(2)}}
    \end{pmatrix} = c'.\numberthis
\end{align*}

Now, let $\alpha$ be as in equation \eqref{def_alplha} and define $\beta := \psi_{Z^{(1)}}$ and $\gamma$ for all $w\in Z^{(2)}\cup W_Z$ by $\gamma_w:=\psi_w1_{w\in Z^{(2)}}+\Lambda_{w,v}1_{w\in (\textnormal{pa}_V(v)\setminus (W_v\setminus (Z^{(2)}\cup W_z)))}-\sum_{z\in Z:w\in W_z}\Lambda_{wz}\psi_z$. Note that (in particular) for $w\in (\textnormal{pa}_V(v)\setminus (W_v\setminus (Z^{(2)}\cup W_z)))\cap (Z^{(2)}\cup W_z)$ it holds that $A'_{yw}=C_{yw}$. Also note that $(\textnormal{pa}_V(v)\setminus (W_v\setminus (Z^{(2)}\cup W_Z)))\setminus (Z^{(2)}\cup W_Z)=\textnormal{pa}_V(v)\setminus W_v$. Besides, recall that $Z^{(1)}\cap (W_Z\cup W_v\cup Z^{(2)})=\emptyset$ by condition (i) in the eLF-HTC and by definition. Using these three facts, and rearranging \eqref{eq_claim_2_1} then yields
\begin{align*}
\label{eq_claim_2_2}
    \begin{pmatrix}
        A  & B & C 
    \end{pmatrix}   \cdot
    \begin{pmatrix}
        \alpha\\
        \beta\\
        \gamma 
    \end{pmatrix}
    = c'-       A'_{Y,\textnormal{pa}_V(v)\setminus W_v}\cdot  \Lambda_{W_v\setminus \textnormal{pa}_V(v), v}=d.\numberthis
\end{align*}

\noindent\underline{\textbf{Step 3 (unique solvability):}}

	Similarly to \citet{barber2022half}, we will set some parameters equal to zero and then show that the considered matrix $(A\;\;\;B\;\;\;C)$ is invertible for generic choices of the remaining free parameters of $(\Lambda, \Omega)\in \Theta_{2022}$. Doing so then implies that the considered matrix $(A\;\;\;B\;\;\;C)$ is invertible for generic choices of all parameters of $(\Lambda, \Omega)\in \Theta_{2022}$ (see Proposition \ref{proposition_id_notion_subgraph} in Section \ref{sec_recursive} for a formal proof of (a modification of) this fact).
	
	In particular, we set $\Lambda_{p^zz}=0$ for all $(p^z,z)$ with $z\in Z^{(1)}$ and  $p^z\in \textnormal{pa}_V(z)\setminus W_z$.
	Under this modification and for both Case 1 and Case 2, the $A$- and $C$-matrices do not change.
	However, for the $B$-matrix we get in Case 1 $B_{yz}=\Sigma_{yz}$ with $z\in Z^{(1)}$ and in Case 2 $B_{yz}=\Sigma_{yz}-\sum_{p^y\in \textnormal{pa}_V(y)}\Lambda_{p^yy}\Sigma_{p^yz}$ with $z\in Z^{(1)}$.

	By definition, the matrices $A$ and $C$ do not share any matrix-entries (where the entries are interpreted as indeterminates). Under the previously described modification, by assumption (i) in the eLF-HTC implying that $Z^{(1)}\cap (W_Z\cup W_v)=\emptyset$ and also the trivial fact
	$Z^{(1)}\cap Z^{(2)} = \emptyset$ which follows by definition, it follows that also the $B$-matrix does not share any matrix entry with either the $A$- or $C$-matrix (where again the entries are interpreted as indeterminates).
	
	Note that under the modification from the beginning of Step 3, if $Z^{(1)}\cap (W_Z\cup W_v)\neq \emptyset$ from assumption (i) in the eLF-HTC was violated, then either the $B$- and $C$-matrix or the $A$- and $B$-matrix would share an entire column.

We now define a new graph $\hat{G}$ as follows: Start with the induced subgraph $\hat{G}=(V,D_V)$ of the original graph $G^\mathcal{L}$ to the observed vertices $V$. Then, remove all edges $p^z\rightarrow z$ in $\hat{G}$ with $z\in Z^{(1)}$ and $p^z\in \textnormal{pa}_V(z)\setminus W_z$. Next, whenever two observed vertices  $w_1,w_2\in V$ are confounded by some $h\in \mathcal{L}$ in $G^\mathcal{L}$, so whenever $\Omega_{w_1w_2}\neq 0$ (where $\Omega_{w_1w_2}$ is again interpreted as an indeterminant), add the bidirected edge $w_1\leftrightarrow w_2$ to $\hat{G}$.

 Now, define $\hat{\Lambda}$  by starting with $\Lambda$ and then setting $\Lambda_{p^zz}=0$ for all $(p^z,z)$ with $z\in Z^{(1)}$ and  $p^z\in \textnormal{pa}_V(z)\setminus W_z$, so exactly as explained at the beginning of Step 3, and define $\hat{\Omega}:=\Omega$. 
 Note that the LSEM in the latent-projection framework
given by the direct causal effect matrix $\hat{\Lambda}$ and the covariance matrix for the dependent error terms between the observed variables $\hat{\Omega}$ induces the mixed graph $\hat{G}$ by construction. Also note that $\Sigma =(I_d-\hat{\Lambda})^{-T}\hat{\Omega}(I_d-\hat{\Lambda})^{-1}$ under this modification (which follows from the same reasoning as for equation \eqref{eq_Sigma} in Section \ref{sec_intro}).

Also, by definition of the new graph $\hat{G}$, the selected system of latent-factor half-treks from $Y$ to $W_v\cup W_Z\cup Z$ induces a system of half-treks in $\hat{G}$. (A half-trek in a mixed graph either is a directed path or a bidirected edge followed by a directed path, see \citet{foygel2012half} for a formal definition).
In this system of half-treks in $\hat{G}$, any latent-factor half-trek in $G^\mathcal{L}$ that begins with $y\leftarrow h\rightarrow w$ has these two edges replaced in its corresponding half-trek in $\hat{G}$ with the bidirected edge $y\leftrightarrow w$. 
Because the left-side of a latent-factor half-trek consists of at most an edge linking latent variables and observed variables, none of the latent-factor half-treks in $G^\mathcal{L}$ ending in $(W_v\cup W_Z)\setminus Z$ uses one of the to-be-deleted edges into $Z^{(1)}$ as these to-be-deleted edges must be on the right side of the respective latent-factor half-treks and hence, there would be a sided-intersection with the latent-factor half-treks ending in $Z$.  Furthermore, by assumption, the latent-factor half-treks ending in $Z$ also do not use such deleted edges into $Z^{(1)}$ as they have the form $y\leftarrow h\rightarrow z$ for $y\in Y$ and $z\in Z$ in $G^\mathcal{L}$ and thus the form $y\leftrightarrow z$ in $\hat{G}$. Therefore, this construction is possible, and by this construction, the resulting system of half-treks in $\hat{G}$ has also no sided intersection.

Under this modification, we can now rewrite the row of $(A\;\;\; B\;\;\;C)$ indexed by $y\in Y$ as
\begin{align*}
    &(A\;\;\; B\;\;\;C)_{y}\\
    &=
    \begin{cases}
        \Sigma_{y,W_v\setminus (Z^{(2)}\cup W_Z)}, \Sigma_{y,Z^{(1)}},\Sigma_{y,Z^{(2)}\cup W_Z}\hspace{4.8cm}  \text{ if }y\notin \textnormal{htr}_H(Z\cup \{v\})\\
        [(I_d-\hat{\Lambda})^T\Sigma]_{y,W_v\setminus (Z^{(2)}\cup W_Z)}, [(I_d-\hat{\Lambda})^T\Sigma]_{y,Z^{(1)}}, [(I_d-\hat{\Lambda})^T\Sigma]_{y,Z^{(2)}\cup W_Z} \hspace{2cm} \text{ else}.
    \end{cases}
\end{align*}

As argued in the previous paragraph, there is a system of half-treks in $\hat{G}$ from $Y$ to $W_v\cup Z^{(1)}\cup Z^{(2)}\cup W_Z$ with no sided intersection. We can now apply a modification of Lemma 2 from \citet{foygel2012half} (which we state as Lemma \ref{lemma_modified_lemma_2} in Section \ref{sec_modified_lemma_2}) to $(A\;\;\;B\;\;\;C)$ to get invertibility of $(A\;\;\; B\;\;\;C)$ for generic $(\Lambda, \Omega)\in \Theta_{2022}$.

\end{proof}
\subsection{Statement of Claim 2 from \citet{barber2022half}}
\label{sec_claim2}
\begin{proposition}[Claim 2 from \citet{barber2022half}]
Consider the latent-factor graph setting as introduced in Sections \ref{sec_intro} and \ref{sec_preliminaries}. Furthermore, assume condition (ii) from the eLF-HTC (or equivalently condition (ii) from the LF-HTC) and $|Z|=|H|$ and that $\Omega_{Y,Z}$ has full column rank. Then, there exists $\psi \in \mathbb{R}^r$ such that equation \eqref{eq_claim_2} in Section \ref{sec_proof_elfhtc} is true for generic $(\Lambda, \Omega)\in \Theta_{2022}$, that is, for all $(\Lambda, \Omega)\in \Theta_{2022}\setminus A_{2022}$ where $A_{2022}$ is a proper algebraic subset of $\overline{\Theta_{2022}}$.
\end{proposition}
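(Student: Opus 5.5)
The plan is to rewrite everything through the factor parametrization $\Omega=\Omega_{\textnormal{diag}}+\Gamma^T\mathcal{V}_L\Gamma$ and show that both the residual $c'-A'\Lambda_{\textnormal{pa}_V(v),v}$ and the columns of $B'$ factor through the same $|Y|\times|H|$ matrix. Fix $(\Lambda,\Omega)\in\Theta_{2022}$ and pick any preimage $(\Gamma,\Omega_{\textnormal{diag}},\mathcal{V}_L)\in\tau^{-1}(\Omega)$. Since $\mathcal{V}_L$ is diagonal and positive, we may absorb it into $\Gamma$ and assume $\mathcal{V}_L=I_\ell$. The candidate is
\[
\psi:=-\Gamma_{H,Z}^{-1}\Gamma_{H,v}.
\]
This $\psi$ need not be rational in $\Sigma$, but the claim only asks for existence.

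\textbf{Step 1 (rewrite the residual).} Since $\Lambda_{wv}=0$ for $w\notin\textnormal{pa}_V(v)$, we have $\Sigma_{yv}-\Sigma_{y,\textnormal{pa}_V(v)}\Lambda_{\textnormal{pa}_V(v),v}=[\Sigma(I_d-\Lambda)]_{yv}$. Using $\Sigma(I_d-\Lambda)=(I_d-\Lambda)^{-T}\Omega$ and $(I_d-\Lambda)^T\Sigma(I_d-\Lambda)=\Omega$, the $y$-th entry of $c'-A'\Lambda_{\textnormal{pa}_V(v),v}$ becomes:
\begin{itemize}
\item $\Omega_{yv}$ if $y\in\textnormal{htr}_H(Z\cup\{v\})$;
\item $[(I_d-\Lambda)^{-T}\Omega]_{yv}=\sum_{u}[(I_d-\Lambda)^{-1}]_{uy}\Omega_{uv}$ otherwise.
\end{itemize}
The columns of $B'$ indexed by $z\in Z$ have exactly the same form with $v$ replaced by $z$. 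So it suffices to show that, for every $w\in Z\cup\{v\}$,
\[
\bigl(\text{$y$-entry for }w\bigr)=\sum_{h\in H}M_{yh}\gamma_{hw},
\]
with a matrix $M\in\mathbb{R}^{Y\times H}$ that does not depend on $w$.

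\textbf{Step 2 (only $H$ contributes), the main obstacle.} For $y\in\textnormal{htr}_H(Z\cup\{v\})$: since $y\notin Z\cup\{v\}$ by (ii), $\Omega_{yw}=\sum_h\gamma_{hy}\gamma_{hw}$, and by (ii) only $h\in H$ survive. So we take $M_{yh}=\gamma_{hy}$.

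For $y\notin\textnormal{htr}_H(Z\cup\{v\})$, I would argue via the directed-path expansion of $(I_d-\Lambda)^{-1}$ (trek rule) that every nonzero term $[(I_d-\Lambda)^{-1}]_{uy}\Omega_{uw}$ with a latent $h\notin H$ would produce a latent-factor half-trek from $w$ to $y$ avoiding $H$. There are three cases:
\begin{itemize}
\item $u=w$ and the diagonal term $\Omega_{\textnormal{diag},ww}$ is hit: then there is a directed path $w\to\cdots\to y$, which is a half-trek avoiding $H$.
\item $u=y$: then $\Omega_{yw}$ only involves $h\in H$, by (ii).
\item $u\neq y$ and $h\notin H$ with $\gamma_{hu}\gamma_{hw}\neq0$: then $w\leftarrow h\rightarrow u\rightarrow\cdots\rightarrow y$ is such a half-trek.
\end{itemize}
In each case either the term vanishes or $y\in\textnormal{htr}_H(w)$, contradicting the case assumption. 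Making this rigorous requires care with directed cycles, since $(I_d-\Lambda)^{-1}$ is a sum over walks. I would argue at the level of polynomial identities: check the statement generically where the Neumann series converges, and then extend by rationality of both sides, or use the path-polynomial expansion of \citet{foygel2012half}. I expect this step to be the main obstacle. It leaves $M_{yh}=\sum_u[(I_d-\Lambda)^{-1}]_{uy}\gamma_{hu}$.

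\textbf{Step 3 (solve for $\psi$).} By Steps 1 and 2, $B'=M\Gamma_{H,Z}$ and $c'-A'\Lambda_{\textnormal{pa}_V(v),v}=M\Gamma_{H,v}$. I would verify the sign convention against \eqref{eq_claim_2}: the sign of the $B'$ block is what fixes $\psi=-\Gamma_{H,Z}^{-1}\Gamma_{H,v}$ rather than $+\Gamma_{H,Z}^{-1}\Gamma_{H,v}$.

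The same computation shows $\Omega_{Y,Z}$ equals $\Gamma_{H,Y}^T\Gamma_{H,Z}$. Indeed, every $y\in Y$ lies outside $Z$ by (ii), so there are no diagonal terms, and the latent parents are restricted to $H$. By assumption $\Omega_{Y,Z}$ has full column rank $r=|H|$, so the square matrix $\Gamma_{H,Z}$ is invertible. With $\psi$ as above,
\[
A'\Lambda_{\textnormal{pa}_V(v),v}+B'\psi=c'.
\]

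\textbf{Step 4 (genericity).} The exceptional set is contained in $\{\det\Omega_{Y_Z,Z}=0\}$, which is a proper algebraic subset of $\overline{\Theta_{2022}}$ by the argument preceding \eqref{eq_claim_2}. This gives the claim for all $(\Lambda,\Omega)\in\Theta_{2022}\setminus A_{2022}$.
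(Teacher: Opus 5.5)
Your argument is correct and is essentially the proof of Claim 2 in \citet{barber2022half}, which the paper invokes without restating it. That proof splits $\Omega$ into $\Omega_{\textnormal{diag}}$ plus rank-one latent pieces, and uses condition (ii) and $y\notin\textnormal{htr}_H(Z\cup\{v\})$ to see that only $h\in H$ contribute on the rows $Y$ in the columns $Z\cup\{v\}$. It then obtains $\psi$ from invertibility of $\Gamma_{H,Z}$, which full column rank of $\Omega_{Y,Z}=\Gamma_{H,Y}^T\Gamma_{H,Z}$ forces. Your absorption of $\mathcal{V}_L$ into $\Gamma$ is the same as the paper's replacement of $\Omega_h$ by $(\Gamma_h)^T(\mathcal{V}_L)_{hh}\Gamma_h$.

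Fix the sign of $\psi$, though. In \eqref{eq_claim_2} the block $B'$ enters with a plus sign, so your factorization $B'=M\Gamma_{H,Z}$ and $c'-A'\Lambda_{\textnormal{pa}_V(v),v}=M\Gamma_{H,v}$ forces $\psi=+\Gamma_{H,Z}^{-1}\Gamma_{H,v}$. With your choice of sign you get $A'\Lambda_{\textnormal{pa}_V(v),v}+B'\psi=c'-2M\Gamma_{H,v}$ instead of $c'$.

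Two smaller remarks. First, the cycle issue you flag in Step 2 is not a real obstacle. For every $\Lambda\in\mathbb{R}^{D_V}_{\textnormal{reg}}$, $[(I_d-\Lambda)^{-1}]_{uy}=0$ whenever $u\neq y$ and there is no directed path from $u$ to $y$, because $I_d-\Lambda$ is block triangular with respect to the set of nodes reachable from $u$. Second, in Step 4 take the exceptional set to be the common zero set of the maximal minors of $\Omega_{Y,Z}$, rather than $\{\det\Omega_{Y_Z,Z}=0\}$. The set $Y_Z$ comes from condition (iii), which is not a hypothesis of this proposition. In fact your identities hold at every point where $\Omega_{Y,Z}$ has full column rank.
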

\begin{proof}

In the original proof of \citet{barber2022half}, replace the definition of $\Omega_h=(\Gamma_h)^T\Gamma_h$ for all $h\in \mathcal{L}$, whereby $\Gamma_h$ indicates the row of $\Gamma$ indexed by $h$, by $\Omega_h=(\Gamma_h)^T\allowbreak(\mathcal{V}_L)_{hh}\allowbreak\Gamma_h$ to incorporate our slightly more general setting of an arbitrary latent covariance matrix. The rest of the proof is exactly as in \citet{barber2022half}, we do not restate it.
\end{proof}

\subsection{Statement of Modified Version of Lemma 2 from \citet{foygel2012half}}
\label{sec_modified_lemma_2}
In this section, we consider mixed graphs  $G=(V,D, B)$ as introduced in \citet{foygel2012half} with vertex set $V$ and $|V|=d$, directed edge set $D$ and bidirected edge set $B$. These mixed graphs represent LSEMs with observed variables indexed by $V$ and latent confounding represented by dependent error terms with covariance matrix $\Omega$. (For a more detailed introduction, see \citet{foygel2012half}).
We now have the following modification of Lemma 2 in \citet{foygel2012half}.
\begin{lemma}
\label{lemma_modified_lemma_2}
Consider a mixed graph $G=(V,D, B)$.
 Let $Y, P\subseteq V$ with $|Y|=|P|$, consider a partition $Y=Y_1\dot{\cup}Y_2$ of $Y$, and define the matrix $\mathbf{A}$ for $y\in Y$ and $p\in P$ by 
\begin{align*}
    \mathbf{A}_{yp}:=\begin{cases}
        [(I_d-\Lambda)^T\Sigma]_{yp}, & \text{ if } y\in Y_1\\
        \Sigma_{yp}, & \text{ if } y\in Y_2.
    \end{cases}
\end{align*}
If there exists a system of half-treks from $Y$ to $P$ with no sided intersection, then $\mathbf{A}$ is invertible for generic $(\Lambda, \Omega)\in \Theta_{2022}$, that is, for all $(\Lambda, \Omega)\in \Theta_{2022}\setminus A_{2022}$ where $A_{2022}$ is a proper algebraic subset of $\overline{\Theta_{2022}}$.
\end{lemma}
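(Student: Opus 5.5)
The plan is to follow the proof of Lemma 2 in \citet{foygel2012half}. $\det(\mathbf{A})$ is a polynomial in the entries of $\Lambda$ and $\Omega$ after clearing the common denominator $\det(I_d-\Lambda)^2$, which is nonzero on $\Theta_{2022}$. Hence it suffices to show that this polynomial is not identically zero on $\overline{\Theta_{2022}}$. Its zero set is then a proper algebraic subset $A_{2022}$, proper because $\overline{\Theta_{2022}}$ is irreducible (Section \ref{app_further_exp}). To show non-vanishing, I will exhibit a single parameter point in $\Theta_{2022}$ (or at least in its closure, on which the polynomial is still defined) where $\det(\mathbf{A})\neq 0$.

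\textbf{Step 1 (rewriting the entries).} Using $\Sigma=(I_d-\Lambda)^{-T}\Omega(I_d-\Lambda)^{-1}$, I rewrite the rows. For $y\in Y_1$ the row is $[\Omega(I_d-\Lambda)^{-1}]_{y,P}$, which by the trek rule is a sum over half-treks from $y$ to $p$: a trivial or bidirected start $\Omega_{yw}$ followed by a directed path $w\to\cdots\to p$. For $y\in Y_2$ the row is $\Sigma_{y,P}$, a sum over all treks. Thus every entry is a sum over treks, and half-treks from $y$ to $p$ appear in both cases.

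\textbf{Step 2 (specialisation).} Fix the given system of half-treks $\{\pi_y\}_{y\in Y}$ with no sided intersection and let $\sigma: Y\to P$ be the induced bijection.
\begin{itemize}
\item Set to zero every directed edge weight not used on the right side of some $\pi_y$.
\item Set every off-diagonal $\Omega_{ab}$ to zero except $\Omega_{y w}$ when $\pi_y$ begins with the bidirected edge $y\leftrightarrow w$.
\item Keep the diagonal of $\Omega$ positive. This point is in $\overline{\Theta_{2022}}$ (it need not be in $\Theta_{2022}$ itself), which suffices for a polynomial.
\end{itemize}
Because right sides are disjoint, the surviving directed edges form vertex-disjoint directed paths, so $I_d-\Lambda$ is invertible and the trek sums become explicit.

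\textbf{Step 3 (triangularity).} Following Foygel et al., scale the surviving parameters by powers of an indeterminate $t$ and look at the lowest-degree term, or alternatively order $Y$ by half-trek length. The aim is to show that $\det\mathbf{A}$, viewed as a polynomial in these parameters, contains the monomial $\prod_y(\text{weight of }\pi_y)$ with nonzero coefficient.

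For $y\in Y_1$ the only surviving contributions to $\mathbf{A}_{yp}$ are half-treks, and these are exactly as in the original lemma. For $y\in Y_2$, $\Sigma_{yp}$ additionally contains treks with a nontrivial left side. Under the specialisation, a left side must go backwards along a surviving path and start with a diagonal $\Omega_{ww}$ or a surviving bidirected edge. I plan to show that no other permutation of $Y\to P$ reproduces the monomial $\prod_y \mathrm{wt}(\pi_y)$. The argument uses no sided intersection: any competing trek system realising the same multiset of edges and $\Omega$-entries would have to reuse right-side edges, and because the left sides use each $\Omega_{ww}$ factor with a fixed multiplicity, the sided-intersection-free condition pins down $\sigma$. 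Equivalently, I can invoke the Gessel--Viennot/trek-system determinant expansion of \citet{sullivant2010trek} (Proposition 3.4) for the $Y_2$ rows combined with the half-trek expansion for the $Y_1$ rows, and check that the term of the given system appears uncancelled.

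\textbf{Main obstacle.} The expected difficulty is the mixed nature of the rows, half-treks for $Y_1$ and full treks for $Y_2$. The monomial argument must rule out cancellations from $Y_2$-rows whose treks have nontrivial left sides producing the same monomial. The plan for this is to choose the specialisation with distinct generic values and to use a degree/ordering argument, or to treat the $Y_2$ rows as half-treks in an auxiliary graph where the left part is absorbed into a modified $\Omega$. That absorption reduces the lemma to the original Lemma 2 of \citet{foygel2012half}, and it is the route I would try first.
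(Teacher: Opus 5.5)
You set things up the same way as the paper. You rewrite the $Y_1$-rows as half-trek sums and the $Y_2$-rows as trek sums, expand $\det\mathbf{A}$ over systems, and aim to show that the monomial of the given system survives.

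The decisive step is not argued, though. You need that no other system produces the same monomial with a cancelling sign, in particular no system using a trek with nontrivial left side in a $Y_2$-row, possibly with a different bijection $Y\to P$. Step 3 only announces this (``I plan to show\ldots'', ``pins down $\sigma$'').

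Moreover, the obstacle you single out is not new. The matrix in Lemma 2 of \citet{foygel2012half} already has exactly this mixed form: rows $[(I_d-\Lambda)^T\Sigma]_{y,\textnormal{pa}(v)}$ for $y\in\textnormal{htr}(v)$ and rows $\Sigma_{y,\textnormal{pa}(v)}$ otherwise. Their non-cancellation argument already handles $\Sigma$-rows whose treks have nontrivial left sides. The paper's proof is precisely this observation: after the trek-rule rewriting, the original argument never uses that $Y_1=Y\cap\textnormal{htr}(v)$, that $P=\textnormal{pa}(v)$, or the remaining hypotheses of that lemma, so it goes through verbatim. That is the idea missing from your proposal.

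The routes you suggest for the obstacle do not close it.
\begin{itemize}
\item Absorbing the left parts into $\tilde\Omega:=[(I_d-\Lambda)^{-T}\Omega]_{Y_2,\cdot}$ does not give an instance of the original lemma. $\tilde\Omega$ is neither symmetric nor independent of $\Lambda$: its entries involve the same $\lambda$'s that occur in $(I_d-\Lambda)^{-1}$. So $(\Lambda,\tilde\Omega)$ ranges over a proper subvariety of any auxiliary parameter space, and a genericity statement on the ambient space says nothing there.
\item Proposition 3.4 of \citet{sullivant2010trek} concerns minors of $\Sigma$ itself, where every row is of trek type. It cannot be applied to only some rows of a determinant; the interaction between the two row types is exactly what is at issue.
\item The scaling/ordering alternatives are only named, not carried out.
\end{itemize}
So either invoke Foygel et al.'s argument together with the observation that it is insensitive to the partition and to $P$, or actually prove uniqueness of the monomial for the mixed matrix.

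A smaller point: your claim that the specialised point lies in $\overline{\Theta_{2022}}$ is not automatic, because off-diagonal entries of $\Omega\in\textnormal{Im}(\tau)$ are not free. Suppose there is a single latent factor and the system uses $y_1\leftrightarrow w_1$ and $y_2\leftrightarrow w_2$ on four distinct vertices. Then every point of $\overline{\Theta_{2022}}$ satisfies $\Omega_{y_1w_1}\Omega_{y_2w_2}=\Omega_{y_1w_2}\Omega_{y_2w_1}$, which your point violates. It is fine in the application in the eLF-HTC proof, where the bidirected edges come from pairwise distinct $h\in H$, but this has to be said. The paper is equally implicit here, treating the entries of $\Omega$ as independent indeterminates as in \citet{foygel2012half}.
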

\begin{proof}

Let $\mathcal{H}(y, p)$ denote the set of all half-treks from $y\in Y$ to $p\in P$ in $G$, and let $\pi(\Lambda, \Omega)$ denote the trek monomial of a trek $\pi$ in the mixed graph $G$.
Note that we can write
\begin{align*}
       \mathbf{A}_{yp}:=\begin{cases}
        \sum_{\pi \in \mathcal{H}(y, p)}\pi(\Lambda, \Omega), & \text{ if } y\in Y_1\\
         \sum_{\pi \in \mathcal{T}(y, p)}\pi(\Lambda, \Omega), & \text{ if } y\in Y_2,
    \end{cases} 
\end{align*}
which is exactly the starting point in the original proof of Lemma 2 in \citet{foygel2012half}. In the proof of Lemma 2 in \citet{foygel2012half}, the specific choices of $Y_1$ and $Y_2$ never occur from this point on in the proof, as do the further assumptions in the statement of Lemma 2 in \citet{foygel2012half}. Our proof from here on is hence exactly as the original proof of Lemma 2 from \citet{foygel2012half}, we do not restate it.
\end{proof}

\subsection{Rational Formulas for Identification for Example \ref{example_elfhtc}}
\label{sec_extended_example_elfhtc}
In the following, we write $(x_1,\dots)^T=A^{-1}b$ if $x_1$ equals the first component of the vector $x=(x_1,\dots)^T$ that uniquely solves the linear equation system $Ax=b$. With this notation, the rational formulas for the presented choices of $W_v$ and $(Y, Z, (W_z)_{z\in Z}, H)$ are
\begin{align*}
	&\begin{pmatrix}
		\lambda_{34} \\
		\vdots
	\end{pmatrix} =
	\begin{pmatrix}
		\Sigma_{13} & \Sigma_{16} & \Sigma_{15} \\
		\Sigma_{23} & \Sigma_{26} & \Sigma_{25} \\
		\Sigma_{33} & \Sigma_{36} & \Sigma_{35}
	\end{pmatrix}^{-1}
\cdot 
\begin{pmatrix}
	\Sigma_{14} \\
	\Sigma_{24} \\
	\Sigma_{34}
\end{pmatrix},\\
	&\begin{pmatrix}
		\lambda_{56} \\
		\vdots
	\end{pmatrix} =
	\begin{pmatrix}
		\Sigma_{15} & \Sigma_{14} & \Sigma_{13} \\
		\Sigma_{25} & \Sigma_{24} & \Sigma_{23} \\
		\Sigma_{35} & \Sigma_{34} & \Sigma_{33}
	\end{pmatrix}^{-1}
	\cdot 
	\begin{pmatrix}
		\Sigma_{16} \\
		\Sigma_{26} \\
		\Sigma_{36}
	\end{pmatrix},\\
	&\begin{pmatrix}
		\lambda_{12} \\
		\vdots
	\end{pmatrix} =
	\begin{pmatrix}
		\Sigma_{11} & \Sigma_{14}-\Lambda_{34}\Sigma_{13}\\
		\Sigma_{61}-\Lambda_{56}\Sigma_{51} & \Sigma_{64}-\Lambda_{56}\Sigma_{54}-(\Sigma_{63}-\Lambda_{56}\Sigma_{53})\Lambda_{34}
	\end{pmatrix}^{-1}
	\cdot 
	\begin{pmatrix}
		\Sigma_{12} \\
		\Sigma_{62} - \Lambda_{56}\Sigma_{52}
	\end{pmatrix},\\
	&\begin{pmatrix}
		\lambda_{23} \\
		\vdots
	\end{pmatrix} =
	\begin{pmatrix}
		\Sigma_{22} - \Lambda_{12}\Sigma_{12} & \Sigma_{21}-\Lambda_{12}\Sigma_{11}\\
	\Sigma_{42} - \Lambda_{34}\Sigma_{32} & \Sigma_{41}-\Lambda_{34}\Sigma_{31}
	\end{pmatrix}^{-1}
	\cdot 
	\begin{pmatrix}
		\Sigma_{23} - \Lambda_{12}\Sigma_{13}\\
		\Sigma_{43} - \Lambda_{34}\Sigma_{33}
	\end{pmatrix}, \textnormal{ and }\\
&(	\lambda_{15},
		\lambda_{45},\ldots)^T\\
		&=
	\begin{pmatrix}
		\Sigma_{11} & \Sigma_{14} & \Sigma_{12} - \Lambda_{12}\Sigma_{11}\\
		\Sigma_{31} - \Lambda_{23}\Sigma_{21} & \Sigma_{34} - \Lambda_{23}\Sigma_{24} & \Sigma_{32} - \Lambda_{23}\Sigma_{22} - (\Sigma_{31} - \Lambda_{23}\Sigma_{21})\Lambda_{12}\\
		\Sigma_{41} - \Lambda_{34}\Sigma_{31} & \Sigma_{44} - \Lambda_{34}\Sigma_{34} & \Sigma_{42} - \Lambda_{34}\Sigma_{32} - (\Sigma_{41}-\Lambda_{34}\Sigma_{31})\Sigma_{12}
	\end{pmatrix}^{-1}
\hspace{-0.2cm}	\cdot \hspace{-0.1cm}
	\begin{pmatrix}
		\Sigma_{15}\\
		\Sigma_{35} - \Lambda_{23}\Sigma_{25}\\
		\Sigma_{45} - \Lambda_{34}\Sigma_{35}
	\end{pmatrix}.
\end{align*}
\subsection{Rational Formulas for Identification for Example \ref{ex_two_proxies}}
\label{sec_rat_two_proxies}
Using the same notation as in Section \ref{sec_extended_example_elfhtc}, the rational formula for the presented choices of $W_v$ and $(Y, Z, (W_z)_{z\in Z}, H)$ is
\begin{align*}
&\begin{pmatrix}
\lambda_{23}\\
\vdots
\end{pmatrix}
=\begin{pmatrix}
\Sigma_{12} & \Sigma_{14}\\
\Sigma_{22} & \Sigma_{24}
\end{pmatrix}^{-1}
\begin{pmatrix}
\Sigma_{13}\\
\Sigma_{23}
\end{pmatrix}.\end{align*}

\section{Appendix to Section \ref{sec_further_complementary_results}}
We begin this section by proving Theorem \ref{theorem_determinantal} in Section \ref{sec_proof_determinantal}. We then show Proposition \ref{allowed_covariances} in Section \ref{sec_recursive_proof_1}. In Section \ref{sec_recursive_proof_2}, we then prove Proposition \ref{proposition_id_notion_subgraph}. Finally, we prove Proposition \ref{propo_deletion_procedure} in Section \ref{sec_deletion_procedure}. 

\subsection{Proof of Theorem \ref{theorem_determinantal}}
\label{sec_proof_determinantal}
\begin{proof}
To prove the result, we apply Theorem 3.8 from \citet{weihs2018determinantal} to the graph $G^\mathcal{L}$ (which does not contain bidirected edges) and only allow for sets $S, T$ containing observed nodes. As we will explain, 
rational identifiability of $\lambda_{w_0v}$ as in Theorem 3.8 from \citet{weihs2018determinantal} then implies rational identifiability of $\lambda_{w_0v}$ with respect to our employed notion.
    
    For defining rational identifiability as in \citet{weihs2018determinantal}, first note that because
    	\begin{align*}
		\begin{pmatrix}
			L\\
			X
		\end{pmatrix} = 
		\begin{pmatrix}
				0 & 0 \\
				\Gamma^T & \Lambda^T
		\end{pmatrix}
		\cdot 
		\begin{pmatrix}
			L \\
			X
		\end{pmatrix} +
		\begin{pmatrix}
			L\\
			\epsilon
		\end{pmatrix},
	\end{align*}
	and hence that
	    	\begin{align*}
	\begin{pmatrix}
			L\\
			X
		\end{pmatrix} =
			\left(I_{\ell + d} - \begin{pmatrix}
				0 & \Gamma \\
				0 & \Lambda
		\end{pmatrix}\right)^{-T}
		\begin{pmatrix}
			L\\
			\epsilon
		\end{pmatrix},
	\end{align*}
	where invertibility holds for all $(\Lambda, \Gamma) \in \mathbb{R}^{D_V}_{\textnormal{reg}}\times \mathbb{R}^{D_{\mathcal{L}V}}$ as argued further down below,
     the vector $(L,X)^T$ has covariance matrix
    \begin{align*}
   \Sigma^{L, X}&= \chi^{L, X}_{G^\mathcal{L}}\left(\begin{pmatrix}
        0 &\Gamma \\
        0 &\Lambda
    \end{pmatrix}, \Omega_{\textnormal{diag}}, \mathcal{V}_L\right)\\
   &:= \begin{pmatrix}
			\mathcal{V}_L & \textnormal{Cov}(L, X) \\
			\textnormal{Cov}(X, L) & \Sigma
		\end{pmatrix}\\
		&=
    \left(I_{\ell+d}-\begin{pmatrix}
        0 &\Gamma \\
        0 &\Lambda
    \end{pmatrix}\right)^{-T}\underbrace{\begin{pmatrix}
        \mathcal{V}_L & 0\\
        0 & \Omega_{\textnormal{diag}}
    \end{pmatrix}}_{=\textnormal{Cov}(L, \epsilon)}
    \left(I_{\ell+d}-\begin{pmatrix}
        0 &\Gamma \\
        0 &\Lambda
    \end{pmatrix}\right)^{-1}\\
    &=\begin{pmatrix}
       I_\ell &-\Gamma \\
        0 &I_d-\Lambda
    \end{pmatrix}^{-T}\begin{pmatrix}
        \mathcal{V}_L & 0\\
        0 & \Omega_{\textnormal{diag}}
    \end{pmatrix}
    \begin{pmatrix}
        I_\ell &-\Gamma \\
        0 &I_d-\Lambda
    \end{pmatrix}^{-1}\\
     &=\begin{pmatrix}
        I_\ell &\Gamma(I_d-\Lambda)^{-1} \\
        0 &(I_d-\Lambda)^{-1}
    \end{pmatrix}^{T}\begin{pmatrix}
      \mathcal{V}_L & 0\\
        0 & \Omega_{\textnormal{diag}}
    \end{pmatrix}
\begin{pmatrix}
        I_\ell &\Gamma(I_d-\Lambda)^{-1} \\
        0 &(I_d-\Lambda)^{-1}
    \end{pmatrix}\\
    &=\begin{pmatrix}
        \mathcal{V}_L & \mathcal{V}_L\Gamma(I_d-\Lambda)^{-1}\\
        (I_d-\Lambda)^{-T}\Gamma^T\mathcal{V}_L & (I_d-\Lambda)^{-T}(\Omega_{\textnormal{diag}}+\Gamma^T\mathcal{V}_L\Gamma)(I_d-\Lambda)^{-1}
    \end{pmatrix}.
\end{align*}
Here, $ \chi^{L, X}_{G^\mathcal{L}}$ has domain $\Theta^{L, X}:=\mathbb{R}^D_{\textnormal{reg}}\times \textnormal{diag}^+_d\times \textnormal{diag}^+_\ell$, where $\textnormal{diag}^+_d\times \textnormal{diag}^+_\ell$ equals the set of positive definite covariance matrices as defined in \citet{weihs2018determinantal} corresponding to the bidirected edges between the components of $(L, X)$ because there are no bidirected edges in our case; and where furthermore, $\mathbb{R}^D_{\textnormal{reg}}$ is the space of all real-valued matrices which have the same number of rows and columns and the same zero structure as
\begin{align*}
    \begin{pmatrix}
        0 &\Gamma \\
        0 &\Lambda
    \end{pmatrix}
\end{align*}
and for which 
\begin{align*}
    \left(I_{\ell+d}-\begin{pmatrix}
        0 &\Gamma \\
        0 &\Lambda
    \end{pmatrix}\right)
\end{align*}
is invertible. Note that because 
\begin{align*}
    \det\left(I_{\ell+d}-\begin{pmatrix}
        0 &\Gamma \\
        0 &\Lambda
    \end{pmatrix}\right)=\det(I_{l})\cdot \det(I_d-\Lambda),
\end{align*}
it follows that $\mathbb{R}^D_{\textnormal{reg}}\cong\mathbb{R}^{D_V}_{\textnormal{reg}}\times \mathbb{R}^{D_{\mathcal{L}V}}$ and hence that 
$\Theta^{L, X}\cong \Theta$. Therefore and by definition of $\chi^{L, X}_{G^\mathcal{L}}$, we have
\begin{align*}
    \chi_{G^\mathcal{L}}(\Lambda, \Gamma, \Omega_{\textnormal{diag}}, \mathcal{V}_L)=  \left(\chi^{L, X}_{G^\mathcal{L}}\left(\begin{pmatrix}
        0 &\Gamma \\
        0 &\Lambda
    \end{pmatrix}, \Omega_{\textnormal{diag}},\mathcal{V}_L\right)\right)_{4,4}
\end{align*}
 for all $(\Lambda, \Gamma,\Omega_{\textnormal{diag}}, \mathcal{V}_L)\in \mathbb{R}^{D_V}_{\textnormal{reg}}\times \mathbb{R}^{D_{\mathcal{L}V}}\times \textnormal{diag}^+_d\times \textnormal{diag}^+_\ell=\Theta$, and vice versa 
   for all 
\begin{align*}
   \left(\begin{pmatrix}
        0 &\Gamma \\
        0 &\Lambda
    \end{pmatrix}, \Omega_{\textnormal{diag}},\mathcal{V}_L\right)\in \Theta^{L, X}.
\end{align*}

\citet{weihs2018determinantal} now say that some edge $x\rightarrow y\in D_V$ is rationally identifiable if there exists a rational function $\psi^{L, X}$ such that 
\begin{align*}
   (\psi^{L, X}\circ \chi^{L, X}_{G^\mathcal{L}})\left(\begin{pmatrix}
        0 &\Gamma \\
        0 &\Lambda
    \end{pmatrix}, \Omega_{\textnormal{diag}}, \mathcal{V}_L\right)=\lambda_{xy}
\end{align*}
for all 
\begin{align*}
    \left(\begin{pmatrix}
        0 &\Gamma \\
        0 &\Lambda
    \end{pmatrix}, \Omega_{\textnormal{diag}}, \mathcal{V}_L\right)\in\Theta^{L, X}\setminus A^{L, X}_\Theta
\end{align*}
 for some proper algebraic subset $A^{L, X}_\Theta$ of $\Theta^{L, X}$. It holds that our notion of rational identifiability implies the notion of rational identifiability from \citet{weihs2018determinantal}: To see this fact, let $\psi$ the rational function in our employed notion of rational identifiability, let $A_\Theta$ be the occuring algebraic set, and define $\psi^{L, X}$ by
\begin{align*}
    \psi^{L, X}\left(\begin{pmatrix}
			\mathcal{V}_L & \textnormal{Cov}(L, X) \\
			\textnormal{Cov}(X, L) & \Sigma
		\end{pmatrix}\right) := \psi(\Sigma).
\end{align*}
It then immediately follows that for all 
\begin{align*}
    \left(\begin{pmatrix}
        0 &\Gamma \\
        0 &\Lambda
    \end{pmatrix}, \Omega_{\textnormal{diag}}, \mathcal{V}_L\right)\in\Theta^{L, X}\setminus A^{L, X}_\Theta,
\end{align*}
where $A^{L, X}_\Theta$ is the isomorphic correspondent of $A_\Theta$, that
\begin{align*}
  (\psi^{L, X}\circ  \chi^{L, X}_{G^\mathcal{L}})\left(\begin{pmatrix}
        0 &\Gamma \\
        0 &\Lambda
    \end{pmatrix}, \Omega_{\textnormal{diag}},\mathcal{V}_L\right) &=(\psi\circ(\chi^{L, X}_{G^\mathcal{L}})_{4,4})\left(\begin{pmatrix}
        0 &\Gamma \\
        0 &\Lambda
    \end{pmatrix}, \Omega_{\textnormal{diag}},\mathcal{V}_L\right)\\ &=(\psi\circ\chi_{G^\mathcal{L}})(\Lambda, \Gamma, \Omega_{\textnormal{diag}}, \mathcal{V}_L) = \lambda_{xy}.
\end{align*}
Therefore, the assumption in our theorem that $\lambda_{w_1v},\ldots, \lambda_{w_mv}$ are rationally identifiable (according to our notion) implies rational identifiable of these same coefficients as in \citet{weihs2018determinantal}, so Theorem 3.8 from \citet{weihs2018determinantal} applies.

Theorem 3.8 from \citet{weihs2018determinantal} now yields rational identifiability with respect to the notion of rational identifiability from \citet{weihs2018determinantal}. Writing $A^{L, X}_{\Theta, 1},\ldots, A^{L, X}_{\Theta, m}$ for the isormorphic correspondents of the algebraic sets $A_{\Theta, 1},\ldots, A_{\Theta, m}$ occuring in the rational identifiability assumptions of the edges $w_1\rightarrow v,\ldots, w_m\rightarrow v$ with respect to our employed terminology,  there hence exists a rational function $\psi^{L, X}$ and a proper algebraic subset $A^{L, X}_\Theta\subseteq \Theta^{L, X}$,
where $A^{L, X}_\Theta$ is given by
\begin{align*}
    A^{L, X}_\Theta&:=\underbrace{\left\{\left(\begin{pmatrix}
        0 &\Gamma \\
        0 &\Lambda
    \end{pmatrix}, \Omega_{\textnormal{diag}},\mathcal{V}_L\right)\in \Theta^{L, X}:\;\det(\Sigma_{S, T\cup \{w_0\}})=0\right\}}_{=:A^{L, X}_{\Theta, 0}}
    \cup \bigcup_{i=1}^m A^{L, X}_{\Theta, i},
\end{align*}
such that 
\begin{align*}
 (\psi^{L, X}\circ \chi^{L, X}_{G^\mathcal{L}})\left(\begin{pmatrix}
        0 &\Gamma \\
        0 &\Lambda
    \end{pmatrix}, \Omega_{\textnormal{diag}}, \mathcal{V}_L\right) = \lambda_{w_0v}   
\end{align*}
    for all 
\begin{align*}
   \left(\begin{pmatrix}
        0 &\Gamma \\
        0 &\Lambda
    \end{pmatrix}, \Omega_{\textnormal{diag}},\mathcal{V}_L\right)\in \Theta^{L, X}\setminus A^{L,X}_\Theta.
\end{align*}
In particular, it holds that
\begin{align*}
     (\psi^{L, X}\circ \chi^{L, X}_{G^\mathcal{L}})\left(\begin{pmatrix}
        0 &\Gamma \\
        0 &\Lambda
    \end{pmatrix}, \Omega_{\textnormal{diag}}, \mathcal{V}_L\right)= \frac{\det{(\Sigma_{S, T\cup \{v\}})} - \sum_{i=1}^m\lambda_{w_iv}\det{(\Sigma_{S, T\cup \{w_i\}})}}{\det{(\Sigma_{S, T\cup \{w_0\}})}},
\end{align*}
where the $\lambda_{w_iv}$'s are rational functions in terms of $\Sigma$ on $\Theta^{L, X}\setminus A^{L, X}_\Theta$ and hence, $\psi^{L, X}$ is indeed a rational function.

The stated rational formula $\psi^{L, X}$ in Theorem 3.8 of \citet{weihs2018determinantal}, however, just contains entries of $\Sigma$ (where we use the assumption that the coefficients $\lambda_{w_1v},\ldots, \lambda_{w_mv}$ are rationally identifiable according to our notion which implies that $\lambda_{w_1v},\ldots, \lambda_{w_mv}$ are also rationally identifiable according to the notion of \citet{weihs2018determinantal} by just entries of $\Sigma$), and so in particular, the occuring rational function $\psi^{L, X}$ can be rewritten to a function $\psi$ that just depends on the arguments of $\Sigma$, that is, we can write
\begin{align*}
    \psi(\Sigma) = \psi^{L, X}\left(\begin{pmatrix}
			\mathcal{V}_L & \textnormal{Cov}(L, X) \\
			\textnormal{Cov}(X, L) & \Sigma
		\end{pmatrix}\right)
\end{align*}
which implies that
\begin{align*}
   &(\psi\circ\chi_{G^\mathcal{L}})\left(\Lambda,\Gamma, \Omega_{\textnormal{diag}}, \mathcal{V}_L\right)\\
   &=(\psi\circ(\chi^{L, X}_{G^\mathcal{L}})_{4, 4})\left(\begin{pmatrix}
        0 &\Gamma \\
        0 &\Lambda
    \end{pmatrix}, \Omega_{\textnormal{diag}}, \mathcal{V}_L\right)\\
    &= (\psi^{L, X}\circ \chi^{L, X}_{G^\mathcal{L}})\left(\begin{pmatrix}
        0 &\Gamma \\
        0 &\Lambda
    \end{pmatrix}, \Omega_{\textnormal{diag}}, \mathcal{V}_L\right)=\lambda_{w_0v}
\end{align*}
  for all 
\begin{align*}
   \left(\begin{pmatrix}
        0 &\Gamma \\
        0 &\Lambda
    \end{pmatrix}, \Omega_{\textnormal{diag}},\mathcal{V}_L\right)\in \Theta^{L, X}\setminus A^{L,X}_\Theta.
\end{align*}
respectively, for all $(\Lambda, \Gamma, \Omega_{\textnormal{diag}}, \mathcal{V}_L)\in \Theta\setminus A_\Theta$ where $A_\Theta$ is the isomorphic correspondent of $A^{L, X}_\Theta$.
\end{proof}

\subsection{Proof of Proposition \ref{allowed_covariances}}
\label{sec_recursive_proof_1}
\
\begin{proof}
\underline{Case 1:} In this case, no trek from $x$ to $y$ uses one of the deleted edges, as otherwise, either $x$ or $y$ would either be a descendant of $v$ in $G^\mathcal{L}$ or would equal $v$, contradiction.
 Therefore,
 every trek from $x$ to $y$ in $\overline{G}^\mathcal{L}$ occurs in $G^\mathcal{L}$, and vice versa, every trek from $x$ to $y$ in $G^\mathcal{L}$ occurs in $\overline{G}^\mathcal{L}$. From the trek rule (see, e.g., Section 2 in \citet{sullivant2010trek} or Section 2 in \citet{foygel2012half}) then follows the result.

\underline{Case 2:} 
In this case, every trek from $x$ to $y=v$ in $G^\mathcal{L}$ ends with an edge $p^v\rightarrow v$ with $p^v\in \textnormal{pa}(v)$; otherwise, either $x=v$ or $x\in \textnormal{dec}_{G^\mathcal{L}}(v)$. 
Also, in every trek from $x$ to $y=v$ in $G^\mathcal{L}$ an edge of the type $p^v\rightarrow v$ with $p^v\in \textnormal{pa}(v)$ can only occur on the right-side of the trek, as otherwise $x\in \textnormal{dec}_{G^\mathcal{L}}(v)$.
Furthermore, in every trek from $x$ to $y=v$ in $G^\mathcal{L}$ an edge of the type $p^v\rightarrow v$ with $p^v\in \textnormal{pa}(v)$ appears at most (and hence exactly) once, as otherwise $v=y\in\textnormal{dec}_{G^\mathcal{L}}(v)$ (as all these edges of the type $p^v\rightarrow v$ with $p^v\in \textnormal{pa}(v)$ must appear on the right-side of the trek as previously argued), contradiction.

Therefore,
the set of all the treks in $\overline{G}^\mathcal{L}$ from $x$ to $y$ equals the set of all the treks from $x$ to $y$ in $G^\mathcal{L}$ minus the set of all the treks in $G^\mathcal{L}$ from $x$ to $y$ that end via any of the deleted edges $w_i\rightarrow v=y$. From the trek rule  (see, e.g., Section 2 in \citet{sullivant2010trek} or Section 2 in \citet{foygel2012half}), it follows that $\Sigma_{xw_i}$ equals the sum of all trek monomials over all treks from $x$ to $w_i$ and it thus also follows that $\Sigma_{xw_i}\Lambda_{w_iy}$ equals the sum of all trek monomials over all treks from $x$ to $v=y$ in $G$ having as last edge $w_i\rightarrow v=y$. From the trek rule, it also follows that $\Sigma_{xy}$ equals the sum of all trek monomials over all treks from $x$ to $v=y$. Therefore, Case 2 follows.

\underline{Case 3:} Switch roles of $x$ and $y$ in the proof of Case 2.
\end{proof}

\subsection{Proof of Proposition \ref{proposition_id_notion_subgraph}}
\label{sec_recursive_proof_2}

\begin{proof}
Write $\Theta_{\textnormal{sub}}:=\mathbb{R}^{D^{\textnormal{sub}}_V}_{\textnormal{reg}}\times \mathbb{R}^{D_{\mathcal{L}V}}\times \textnormal{diag}^+_d\times \textnormal{diag}^+_\ell$, where $D^{\textnormal{sub}}_V:=D_V\setminus \{w_1\rightarrow v,\ldots, w_m\rightarrow v\}$.
Let $A_{\textnormal{sub}}\subsetneq \overline{\Theta_{\textnormal{sub}}}$ be the proper algebraic subset for which rational identifiability of $\Lambda_{w_0v_0}$ with respect to $\overline{G}^\mathcal{L}$ holds. For all $i\in\{1,\ldots,m\}$, let $A_i\subsetneq \overline{\Theta}$ be the proper algebraic subset for the definition of rational identifiability of $w_i\rightarrow v$ with respect to $G^\mathcal{L}$. In the following, because $ \overline{\Theta_{\textnormal{sub}}}\cong \mathbb{R}^{m_{\Theta_{\textnormal{sub}}}}$, where $m_{\Theta_{\textnormal{sub}}}=|D^{\textnormal{sub}}_V|+|D_{\mathcal{L}V}|+d+\ell=|D_V|-m+|D_{\mathcal{L}V}|+d+\ell$, and because $\overline{\Theta}\cong \mathbb{R}^m$, we in slight abuse of notation write $\mathbb{R}^m\times B$ for any subset $B\subseteq \overline{\Theta_{\textnormal{sub}}}$ to indicate the subset of $\overline{\Theta}$ for which the coefficients corresponding to $w_1\rightarrow v_1,\ldots, w_m\rightarrow v_m$ can be chosen from the real-numbers and for which the remaining coefficients corresponding to $\overline{G}^\mathcal{L}$ must be in $B$.
We now construct a new proper algebraic subset $A\subsetneq \overline{\Theta}\cong\mathbb{R}^{m_\Theta}$ for rational identifiability of $\Lambda_{w_0v_0}$ with respect to $G^\mathcal{L}$ by
\begin{align*}
    A:=(\mathbb{R}^m\times A_{\textnormal{sub}}) \cup \bigcup_{i=1}^mA_i.
\end{align*}
 Note that $\mathbb{R}^m\times A_{\textnormal{sub}}$ is still an algebraic set as the Cartesian product of algebraic sets is an algebraic set (see the beginning of Section \ref{app_further_exp} in the Appendix for a proof of this fact). Also, note that $\mathbb{R}^m\times A_{\textnormal{sub}}$ is proper because $A_{\textnormal{sub}}$ is proper. Because the finite union of algebraic sets is also an algebraic set, it follows that $A$ is an algebraic subset of $\overline{\Theta}\cong\mathbb{R}^{m_\Theta}$.
Regarding properness of $A$: The set $A$ is a finite union of proper algebraic subsets of the irreducible set $\overline{\Theta}$ and hence, $A$ has Lebesgue measure zero (as mentioned in Section \ref{sec_generic_id}, see, e.g., the lemma in \citet{okamoto1973distinctness}). However, $\overline{\Theta}\cong\mathbb{R}^{m_\Theta}$ does not have Lebesgue measure zero.
Hence, $A$ must be a proper subset of $\overline{\Theta}$.

Now, write $\chi_{\overline{G}^\mathcal{L}}:\Theta_{\textnormal{sub}}\rightarrow \textnormal{PD}(d)$ for the function mapping parameters in $\Theta_{\textnormal{sub}}$ to observed covariance matrices with respect to $\overline{G}^\mathcal{L}$. Write $\chi_{\overline{G}^\mathcal{L},\textnormal{allowed}}:=\pi\circ \chi_{\overline{G}^\mathcal{L}}$ where $\pi$ is the projection function onto the covariances that can be calculated via Proposition \ref{allowed_covariances}. Note that from Proposition \ref{allowed_covariances} it follows that we can write $\chi_{\overline{G}^\mathcal{L},\textnormal{allowed}}(\Lambda_{\textnormal{sub}}, \Gamma, \Omega_{\textnormal{diag}}, \mathcal{V}_L)$, where $\Lambda_{\textnormal{sub}}$ is $\Lambda$ with the entries $\Lambda_{w_1v},\ldots,\Lambda_{w_mv}$ set to zero, as a polynomial and hence rational function in terms of $\chi_{G^\mathcal{L}}(\Lambda, \Gamma, \Omega_{\textnormal{diag}}, \mathcal{V}_L)$ and $\Lambda_{w_1v},\ldots,\Lambda_{w_mv}$ for all $(\Lambda, \Gamma, \Omega_{\textnormal{diag}},\allowbreak \mathcal{V}_L)\allowbreak\in \Theta\setminus A$. That is, for all $(\Lambda, \Gamma, \Omega_{\textnormal{diag}}, \mathcal{V}_L)\in \Theta\setminus A$ we can write $\chi_{\overline{G}^\mathcal{L},\textnormal{allowed}}(\Lambda_{\textnormal{sub}}, \Gamma, \Omega_{\textnormal{diag}}, \mathcal{V}_L)=\gamma(\chi_{G^\mathcal{L}}(\Lambda, \Gamma, \Omega_{\textnormal{diag}}, \mathcal{V}_L),\allowbreak\Lambda_{w_1v},\allowbreak\ldots,\Lambda_{w_mv})$ for a rational function $\gamma$.

 Let  $\psi_{\textnormal{sub}}$ be the rational function such that $(\psi_{\textnormal{sub}}\circ \chi_{\overline{G}^\mathcal{L}})(\Lambda_{\textnormal{sub}}, \Gamma, \Omega_{\textnormal{diag}},\mathcal{V}_L)=\Lambda_{w_0v_0}$ for all  $(\Lambda_{\textnormal{sub}}, \Gamma, \Omega_{\textnormal{diag}},\mathcal{V}_L)\in \Theta_{\textnormal{sub}}\setminus A_{\textnormal{sub}}$. 
As one is just using  covariances that can be calculated by Proposition \ref{allowed_covariances} for rational identifiability of $\Lambda_{w_0v}$ with respect to $\overline{G}^\mathcal{L}$, we have for some other rational function $\psi_{\textnormal{sub},\textnormal{allowed}}$ that
\begin{align*}
    &(\psi_{\textnormal{sub},\textnormal{allowed}}\circ\chi_{\overline{G}^\mathcal{L},\textnormal{allowed}})(\Lambda_{\textnormal{sub}}, \Gamma, \Omega_{\textnormal{diag}},\mathcal{V}_L)\\
   &=(\psi_{\textnormal{sub}}\circ \chi_{\overline{G}^\mathcal{L}})(\Lambda_{\textnormal{sub}}, \Gamma, \Omega_{\textnormal{diag}},\mathcal{V}_L)
\end{align*}
 for all $(\Lambda_{\textnormal{sub}}, \Gamma, \Omega_{\textnormal{diag}},\mathcal{V}_L)\in\Theta_{\textnormal{sub}}\setminus A_{\textnormal{sub}}$ and hence that
\begin{align*}
&(\psi_{\textnormal{sub},\textnormal{allowed}}(\gamma(\chi_{G^\mathcal{L}}(\Lambda, \Gamma, \Omega_{\textnormal{diag}}, \mathcal{V}_L),\Lambda_{w_1v},\ldots,\Lambda_{w_mv}))\\
&(\psi_{\textnormal{sub},\textnormal{allowed}}\circ\chi_{\overline{G}^\mathcal{L},\textnormal{allowed}})(\Lambda_{\textnormal{sub}}, \Gamma, \Omega_{\textnormal{diag}},\mathcal{V}_L)\\
   &=(\psi_{\textnormal{sub}}\circ \chi_{\overline{G}^\mathcal{L}})(\Lambda_{\textnormal{sub}}, \Gamma, \Omega_{\textnormal{diag}},\mathcal{V}_L)\\
    &=\Lambda_{w_0v_0},
\end{align*}
for all $(\Lambda, \Gamma, \Omega_{\textnormal{diag}},\mathcal{V}_L)\in\Theta\setminus A$. Each $\Lambda_{w_iv}$ is itself rationally identifiable in $G^\mathcal{L}$, and thus, there exists $\psi_i$ for each $i\in\{1,\ldots,m\}$ such that
$(\psi_i\circ \chi_{G^\mathcal{L}})(\Lambda, \Gamma, \Omega_{\textnormal{diag}},\mathcal{V}_L)=\Lambda_{w_iv}$ for all $(\Lambda, \Gamma, \Omega_{\textnormal{diag}},\mathcal{V}_L) \in \Theta\setminus A_i$ and hence in particular for all $(\Lambda, \Gamma, \Omega_{\textnormal{diag}},\mathcal{V}_L) \in \Theta\setminus A$. Therefore, we have for all $(\Lambda, \Gamma, \Omega_{\textnormal{diag}},\mathcal{V}_L) \in \Theta\setminus A$
\begin{align*}
   \Lambda_{w_0v_0} &= (\psi_{\textnormal{sub},\textnormal{allowed}}(\gamma(\chi_{G^\mathcal{L}}(\Lambda, \Gamma, \Omega_{\textnormal{diag}}, \mathcal{V}_L),\Lambda_{w_1v},\ldots,\Lambda_{w_mv}))\\&
   =(\psi_{\textnormal{sub},\textnormal{allowed}}(\gamma(\chi_{G^\mathcal{L}}(\Lambda_{\textnormal{sub}}, \Gamma, \Omega_{\textnormal{diag}},\mathcal{V}_L),\\&
   \hspace{3cm}(\psi_1\circ \chi_{G^\mathcal{L}})(\Lambda, \Gamma, \Omega_{\textnormal{diag}},\mathcal{V}_L),\ldots,\\&
   \hspace{3cm}(\psi_m\circ \chi_{G^\mathcal{L}})(\Lambda, \Gamma, \Omega_{\textnormal{diag}},\mathcal{V}_L))\\
   &=(\psi_{\textnormal{sub},\textnormal{allowed}}\circ\gamma\circ( \chi_{G^\mathcal{L}}, (\psi_1\circ \chi_{G^\mathcal{L}}),\ldots,(\psi_m\circ \chi_{G^\mathcal{L}})))(\Lambda, \Gamma, \Omega_{\textnormal{diag}},\mathcal{V}_L)
   \\
   &=(\psi_{\textnormal{sub},\textnormal{allowed}}\circ\gamma\circ( (\textnormal{id}\circ\chi_{G^\mathcal{L}}), (\psi_1\circ \chi_{G^\mathcal{L}}),\ldots,(\psi_m\circ \chi_{G^\mathcal{L}})))(\Lambda, \Gamma, \Omega_{\textnormal{diag}},\mathcal{V}_L)
   \\
      &=(\psi_{\textnormal{sub},\textnormal{allowed}}\circ\gamma\circ( \textnormal{id}, \psi_1,\ldots,\psi_m)\circ \chi_{G^\mathcal{L}})(\Lambda, \Gamma, \Omega_{\textnormal{diag}},\mathcal{V}_L)
   \\
   &=(\psi\circ \chi_{G^\mathcal{L}})(\Lambda, \Gamma, \Omega_{\textnormal{diag}},\mathcal{V}_L),
\end{align*}
where $\psi:=\psi_{\textnormal{sub},\textnormal{allowed}}\circ\gamma\circ( \textnormal{id}, \psi_1,\ldots,\psi_m)$ is a rational function which implies the result.

\textbf{Regarding the notion of rational identifiability from \citet{barber2022half}:} The previous proof also works for the notion of rational identifiability from \citet{barber2022half}. Just replace $\mathbb{R}^{D_{\mathcal{L}V}}\times \textnormal{diag}^+_d\times \textnormal{diag}^+_\ell$ by $\textnormal{im}(\tau)$; the remainder of the proof is then exactly the same as only the first part of the Cartesian product---the part corresponding to $\Lambda$---is used and connected with $\Lambda_{\textnormal{sub}}$ in the remainder of the proof.
\end{proof}

\subsection{Proof of Proposition \ref{propo_deletion_procedure}}
\label{sec_deletion_procedure}
For $x, y\in V\cup \mathcal{L}$, write $x\preceq y$ if $x\notin \textnormal{dec}_{G^\mathcal{L}}(y)$. We now argue that $(\Sigma_n)_{xy}$ being allowed with respect to the deletion sequence $D_{v_1},\ldots, D_{v_n}$ is equivalent to $x,y\preceq v_1,\ldots, v_n$ (part 1) and ($\{x,y\}\cap \{v_1,\ldots,v_n\}=\emptyset$ or ($\{x,y\}\cap \{v_1,\ldots,v_n\}\neq \emptyset$ and $x\neq y$)) (part 2). This equivalence then implies the result because the right-hand side of this equivalence does not depend on the order of the $v_1,\ldots, v_n$ and also only depends on the unique elements of $v_1,\ldots, v_n$, that is, on $\{v_1,\ldots, v_n\}$, and thus, the equivalent condition on the right-hand side is the same for every other choice $v'_1,\ldots,v'_{n'}\in V$ for which $\bigcup_{i=1}^{n'}D_{v'_i}=D_{\textnormal{del}}$ (as $\{v_1,\ldots, v_n\}=\{v'_1,\ldots, v'_{n'}\}$ for every other such deletion sequence).

For the ``$\Rightarrow $"-direction, part 2 immediately follows from the equality conditions in Proposition \ref{allowed_covariances}. From the repeated application of Proposition \ref{allowed_covariances}, it also follows that $x,y\notin \textnormal{dec}_{G^\mathcal{L}_{i-1}}(v_i)$ for all $i\in \{1,\ldots, n\}$ (where $G^\mathcal{L}_0:=G^\mathcal{L}$). Now, suppose without loss of generality that $i\in \{1,\ldots,n\}$ is the \emph{minimal} index for which $x\in \textnormal{dec}_{G^\mathcal{L}}(v_i)$, which would contradict part 1. Suppose first $i>1$. Then, there is a nontrivial directed path in $G^\mathcal{L}$ from $v_i$ to $x$ that has to use one of the deleted edges from $D_{v_1},\ldots, D_{v_{i-1}}$, otherwise, $x\in \textnormal{dec}_{G^\mathcal{L}_{i-1}}(v_i)$. However, then, $x\in \textnormal{dec}_{G^\mathcal{L}}(\{v_1,\ldots, v_{i-1}\})$ which is a contradiction to the minimality of $i$. If $i=1$, then $G^\mathcal{L}_{i-1}=G^\mathcal{L}$ anyhow, so this case is trivial. For the `` $\Leftarrow$"-direction: Note that $x,y\preceq v_1,\ldots,v_n$ and $\{x,y\}\cap \{v_1,\ldots,v_n\}=\emptyset$ immediately implies that Case 1 of Proposition \ref{allowed_covariances} applies to each deletion (as deleting edges from $G^\mathcal{L}$ can only reduce the number of descendants). Similarly, if $x,y\preceq v_1,\ldots,v_n$  and ($\{x,y\}\cap \{v_1,\ldots,v_n\}\neq \emptyset$ and $x\neq y$), then one of the three cases from Proposition \ref{allowed_covariances} at each deletion step applies. In addition,
for all $w\in \textnormal{pa}(\{x, y\})$, it holds that $w\preceq v_1,\ldots, v_n$ and $w\neq v_1,\ldots, v_n$, otherwise $x$ or $y$ is in $ \textnormal{dec}_{G^{\mathcal{L}}}(\{v_1,\ldots, v_n\})$, contradiction. Therefore, for all $w\in \textnormal{pa}_V(y)$ and $w'\in \textnormal{pa}_V(x)$, one of the three cases from Proposition \ref{allowed_covariances} also applies at each deletion step $i\in \{1,\ldots, n\}$ to the covariances $(\Sigma_i)_{xw}$ and $(\Sigma_i)_{yw'}$, and Case 1 applies to $(\Sigma_i)_{ww'}$, thus making all further required covariances for Proposition \ref{allowed_covariances} allowed at each step $i$.

\section{Appendix to Section \ref{section_computation}}
\label{app_sec_computation}
We begin this section by proving  Theorem \ref{theorem_existence_of_Y} in Section \ref{sec_numerical_proof_1}. In Section \ref{sec_numerical_proof_2}, we then prove Proposition \ref{propo_4_children}. Finally, in Section \ref{sec_numerical_proof_3}, we prove Theorem \ref{theorem_complexity}.

\subsection{Proof of Theorem \ref{theorem_existence_of_Y}}
\label{sec_numerical_proof_1}
\begin{proof}
The set of allowed nodes $A$ is already defined such that condition (ii) is satisfied for every $Y\subseteq A$. Furthermore, the remaining assumptions
 in Theorem \ref{theorem_existence_of_Y} ensure that condition (i) of the eLF-HTC is satisfied except for $|Y|=|W_v\cup Z\cup W_Z|$. Thus, Theorem \ref{theorem_existence_of_Y} really is a statement about condition (iii) of the eLF-HTC (plus the just mentioned remaining part of condition (i)). Now, the set $W_v\cup W_Z$ in condition (iii) of the eLF-HTC is $\textnormal{pa}_V(v)$ in condition (iii) of the LF-HTC. Otherwise, condition (iii) in the eLF-HTC and condition (iii) in the LF-HTC are equal. 
 Now, similarly as in the LF-HTC paper, introduce a source node and a sink node in order to transform our multi-source and multi-sink max-flow problem for the eLF-HTC into an equivalent single-source and single-sink max-flow problem with the same max-flow by adding a source node $s$ which points to all vertices in $A$ and by adding a sink node $t$ such that from each vertex in $W_v'\cup W_z'\cup Z'$ there is an edge to $t$, and such that $s$ and $t$ have capacity $\infty$. Next, the set $W_v\cup W_Z$ (and $W'_v\cup W'_Z$) in this equivalent flow network for the eLF-HTC is $\textnormal{pa}_V(v)$ in the flow network for the LF-HTC (which is introduced in \citet{barber2022half}). Otherwise, the (transformed) flow network constructions are equal. Said differently, for both the eLF-HTC and the LF-HTC, it is possible to replace $W_v\cup W_Z$ respectively $\textnormal{pa}_V(v)$ (and the corresponding primed versions) for the sake of this proof by any arbitrary placeholder sets since the precise definitions of these sets are neither used for condition (iii) (and the remainder of condition (i)) 
 nor are they used for the corresponding (transformed) flow networks. Thus, conditions (iii) in the eLF-HTC and the LF-HTC and the corresponding (transformed) flow networks looked at in an isolated manner are the same. Therefore, the result for condition (iii) for the eLF-HTC follows from the same proof as the proof of Theorem 5.1 in \citet{barber2022half}. The remainder of condition (i) stating that $|Y|=|W_v\cup Z\cup W_Z|$ then follows automatically as otherwise, no such system of latent-factor half-treks from $Y$ to $W_v\cup Z\cup W_Z$ can exist.
\end{proof}

\subsection{Proof of Proposition \ref{propo_4_children}}
\label{sec_numerical_proof_2}
\begin{proof}
The proof is very similar to the proof of Proposition 5.2 in \citet{barber2022half}. First note that by assumption there exists $h\in H$ so $|H|=|Z|\geq 1$. Now, suppose $(Y,\allowbreak Z,\allowbreak (W_z)_{z\in Z},\allowbreak  H)$ satisfies the eLF-HTC with respect to $(v, W_v)$. Because there exists a system of latent-factor half-treks from $Y$ to $W_v\cup W_Z\cup Z$ such that for each $z\in Z$ the latent-factor half-trek ending in $z$ has the form $y\leftarrow h\rightarrow z$ for some $y\in Y$ and because $|H|=|Z|\geq 1$, it follows that $|\textnormal{ch}(h)|> 1$ for all $h\in H$. 

Next, let $h\in H$ such that $|\textnormal{ch}(h)|\in \{2,3\}$. Then, in the system of latent-factor half-treks for condition (iii) there is a latent-factor half-trek of the form $y\leftarrow h\rightarrow z$ for some $y\in Y$ and some $z\in Z$. Now, take $\Tilde{Y}=Y\setminus \{y\}$,  $\Tilde{Z}=Z\setminus \{z\}$ and $\Tilde{H}=H\setminus \{h\}$. Then, $(\Tilde{Y}, \Tilde{Z}, (W_{\Tilde{z}})_{\Tilde{z}\in \Tilde{Z}},  \Tilde{H})$  clearly satisfies conditions (i) and (iii) from the eLF-HTC and that $\Tilde{Y}\cap (\Tilde{Z}\cup \{v\})=\emptyset$ (the first part of condition (ii)).

In order to show that the remainder of condition (ii) from the eLF-HTC also holds, we must show that $h\notin \textnormal{pa}_{\mathcal{L}}(\Tilde{Y})\cap (\textnormal{pa}_{\mathcal{L}}(\Tilde{Z})\cup \{v\})$. As conditions (ii) in the eLF-HTC and the LF-HTC are exactly the same, from here on, the proof is exactly the same as the proof of Proposition 5.2; for completeness, we restate it:
If $|\textnormal{ch}(h)|=2$, then the vertices $y$ and $z$ are the only two children of $h$. Therefore, $h$ can clearly not be a parent of $\Tilde{Y}$ or $\Tilde{Z}\cup v$, that is, $h\notin \textnormal{pa}_\mathcal{L}(\Tilde{Y})\cap \textnormal{pa}_\mathcal{L}(\Tilde{Z}\cup \{v\})$. In case $|\textnormal{ch}(h)|=3$, then there might exist one child $w\in \textnormal{ch}(h)\setminus \{y,z\}$. Because as previously argued $\Tilde{Y}\cap (\Tilde{Z}\cup \{v\})=\emptyset$, it follows that $w$ cannot be an element of both $\Tilde{Y}$ and $\Tilde{Z}\cup \{v\}$ simultaneously. Therefore, $h\notin \textnormal{pa}_\mathcal{L}(\Tilde{Y})\cap \textnormal{pa}_\mathcal{L}(\Tilde{Z}\cup \{v\})$ in the case $|\textnormal{ch}(h)|=3$ as well. Therefore, condition (ii) and hence the eLF-HTC are satisfied for $(\Tilde{Y}, \Tilde{Z}, (W_{\Tilde{z}})_{\Tilde{z}\in \Tilde{Z}},  \Tilde{H})$.
\end{proof}
\subsection{Proof of Theorem \ref{theorem_complexity}}
\label{sec_numerical_proof_3}
\begin{proof}
This proof bears some similarity to the proof of Theorem 5.3 from \citet{barber2022half} which itself is similar to the proof from \citet{foygel2012half}.

We start by analyzing the complexity of the determinantal subprocedure. The multi-source multi-sink determinantal flow graph and corresponding max-flow problem  can be (and is in practice) transformed  to an equivalent single-source single-sink max-flow problem by adding a source node $s$ pointing to all previous sources, and a sink node $t$ such that all previous sinks point to $t$ and by giving $s$ and $t$ capacity $\infty$ and the newly introduced edges capacity $1$. For this equivalent flow-graph, each max-flow computation has a computational complexity of at most
$\mathcal{O}((|V^f|+r^f)^3)$ (e.g., \citet{foygel2012half} or Section 26 in \citealp{cormen2022introduction}) where $r^f$ is the number of reciprocal edge pairs in the original (or equivalently transformed) determinantal flow-graph. 
Because $|V^f|=2(|V|+|\mathcal{L}|)$ and because in the transformed determinantal flow graph the number of reciprocal edge pairs is two times the number of reciprocal edge pairs in the original latent-factor graph, so $r^f=2r$, it follows that the determinantal-flow-graph calculation has a time complexity of at most $O((|V|+|\mathcal{L}|+r)^3)$. 

 In line 2 of the determinantal subprocedure, we iterate (at most) over all sets $S\subseteq V$ and sets $T\subseteq V\setminus \{v,w_0\}$ where $|S|=|T|+1=k$ for all possible sizes $k=1,\ldots,|V|-1$. Given that the number of subsets of $V$ equals $2^{|V|}$, it follows that there are at most $\mathcal{O}(2^{|V|})$ individual checks of Theorem \ref{theorem_determinantal} for one $w_0\rightarrow v$; if one is willing to only iterate over $c_{S}$-many subsets, then there are at most $\mathcal{O}(c_S)$ individual checks of Theorem \ref{theorem_determinantal} for one $w_0\rightarrow v$. As there are at most $|V|$-many observed parents of $v$, there are at most $\mathcal{O}(|V|2^{|V|})$ respectively $\mathcal{O}(|V|c_S)$ individual checks of Theorem \ref{theorem_determinantal} for a fixed target vertex $v\in V$, so for one call of the determinantal subprocedure.

Finding descendants in a graph (condition (i) in Theorem \ref{theorem_determinantal}) is of complexity at most  $\mathcal{O}(|V|^2)$ (by, for example, doing a breadth-first search (e.g., Section 22.2 in \citealp{cormen2022introduction}]) and thus, finding descendants is not of higher complexity than the max-flow computations. Therefore, the determinantal subprocedure has without any simplifying assumptions a time complexity of at most $\mathcal{O}(|V|2^{|V|}(|V|+|\mathcal{L}|+r)^3)$ and if one is just iterating over $c_S$-many subsets, a time complexity of at most $\mathcal{O}(|V|c_S(|V|+|\mathcal{L}|+r)^3)$.

Now to the eLF-HTC subprocedure: For the for-loop starting in line 1 we iterate over all sets $H\subseteq \mathcal{L}_{\geq 4}\subseteq \mathcal{L}$. In general, there are $\mathcal{O}(2^{|\mathcal{L}|})$ such iterations. However, as already argued in \citet{barber2022half}, under the assumption $|H|\leq c_H$, this number reduces to
\begin{align}
\label{eq_c_H_assumption}
    \sum_{i=0}^{c_H} {|\mathcal{L}_{\geq 4}|\choose i} =\mathcal{O}(|\mathcal{L}|^{c_H}).
\end{align}
In line 2 of the eLF-HTC subprocedure we then iterate over all $Z\subseteq \textnormal{ch}(H)\setminus \{v\}\subseteq V$ such that $|Z|=|H|$. It follows from an analogous calculation as in equation \eqref{eq_c_H_assumption} that this for-loop in line 2 has at most $\mathcal{O}(2^{|V|})$ iterations under no simplifying assumptions and $\mathcal{O}(|V|^{c_H})$ iterations under the simplifying assumptions from the theorem. In general, the for-loop over the $(W_z)_{z\in Z}$ has at most $\mathcal{O}(2^{|V|^2})$ iterations as each $W_z$ has at most $O(2^{|V|})$ supersets and as $|Z|\leq |V|$.
If one just takes as the $W_z$ the set of already rationally identified observed parents of $z$, then this for-loop in line 3 ``disappears" and hence has $\mathcal{O}(1)$ iterations.

Therefore, there are at most $\mathcal{O}(2^{|\mathcal{L}|}2^{|V|^2+|V|})$ max-flow calculations in each call of the eLF-HTC subprocedure when no simplifying assumptions are made. With the simplifying assumptions from the theorem, 
there are at most $\mathcal{O}(|\mathcal{L}|^{c_H}|V|^{c_H})$ max-flow calculations in each call of the eLF-HTC subprocedure. From an analogous argument as for the determinantal subprocedure, each max-flow computation has time complexity at most $\mathcal{O}((|V|+|\mathcal{L}|+r)^3)$. Each $\textnormal{htr}_H$-calculation can be done by, e.g., a breadth-first search yielding a  complexity of at most $\mathcal{O}(|V|+|\mathcal{L}|+|D|)$ (e.g., Section 22.2 in \citealp{cormen2022introduction}) and thus of at most $\mathcal{O}(|\mathcal{L}||V|^2)$ (and hence not of higher complexity than the max-flow computations) because $|D|=|D_V|+|D_{\mathcal{L}V}|\leq |V|^2+|\mathcal{L}||V|\leq 2|\mathcal{L}||V|^2$ and because $|\mathcal{L}|\leq |\mathcal{L}||V|^2$. The other required basic set operations  and finding parents is not of higher time complexity. Hence, each eLF-HTC subprocedure call has a time complexity of at most $\mathcal{O}(2^{|\mathcal{L}|}2^{|V|^2+|V|}(|V|+|\mathcal{L}|+r)^3)$ when no simplifying assumptions are made and $\mathcal{O}(|\mathcal{L}|^{c_H}|V|^{c_H}(|V|+|\mathcal{L}|+r)^3)$ with the simplifying assumptions.

Together, one call of the eLF-HTC subprocedure and one call of the determinantal subprocedure thus have a time complexity of $\mathcal{O}(2^{|\mathcal{L}|}2^{|V|^2+|V|}(|V|+|\mathcal{L}|+r)^3)$ with no simplifying assumptions and $\mathcal{O}(c_S|\mathcal{L}|^{c_H}|V|^{c_H}(|V|+|\mathcal{L}|+r)^3)$ with the simplifying assumptions.

Now to the combined algorithm (Algorithm \ref{algo_combined_identification}): Note that we run
the repeat-loop in each call of Algorithm \ref{algo_combined_identification} at most $|V|^2$ times.
To see this fact, note that this repeat loop only does another repetition if a further edge is added to $S_{\textnormal{edges}}$, otherwise the algorithm terminates. Therefore, after at most $|V|^2$-many repetitions of the repeat-loop in line 1 of Algorithm \ref{algo_combined_identification}, either all edges have been added to $S_{\textnormal{edges}}$ or Algorithm \ref{algo_combined_identification} has terminated.

Next, the for-loop in line 3 of Algorithm \ref{algo_combined_identification} is executed at most $|V|$-times within each repeat-loop iteration as there are at most $|V|$ unsolved nodes.
At the last level in the recursion tree, the for-loop in line 17 is not executed at all. Without this execution of the for-loop in line 17, the combined identification algorithm thus has a complexity of at most $\mathcal{O}(|V|^32^{|\mathcal{L}|}2^{|V|^2+|V|}(|V|+|\mathcal{L}|+r)^3)$ when no simplifying assumptions are made and $\mathcal{O}(c_S|\mathcal{L}|^{c_H}|V|^{c_H+3}(|V|+|\mathcal{L}|+r)^3)$ with the simplifying assumptions. Now, for any recursion level above, note that the complexity(-bound) of the for-loop in line 3 is negligible relative to the complexity(-bound) of calling another instance of the combined algorithm. Writing $A_0$ for the initial call of the combined identification algorithm and $A_i$ to represent any call with $i$ edges deleted from $G^\mathcal{L}$, we have that
\begin{align*}
    \textnormal{Time\_Complexity\_Bound}(A_i)\leq |V|^4\textnormal{Time\_Complexity\_Bound}(A_{i+1})
\end{align*}
as the repeat-loop in each call of Algorithm \ref{algo_combined_identification} is executed at most $|V|^2$-times and as the for-loop in line 17 of Algorithm \ref{algo_combined_identification} is called at most $|D_V|\leq |V|^2$-times in each repeat-loop-iteration. As at most $|D_V|\leq |V|^2$-edges can be deleted from $G^\mathcal{L}$, we obtain that Algorithm $1$ has a time complexity of
$\mathcal{O}(|V|^{4|V|^2+3}2^{|\mathcal{L}|}2^{|V|^2+|V|}(|V|+|\mathcal{L}|+r)^3)$ with no simplifying assumptions and $\mathcal{O}(c_S\allowbreak|\mathcal{L}|^{c_H}\allowbreak|V|^{4c_{\textnormal{rec}}+c_H+3}\allowbreak(|V|+|\mathcal{L}|+r)^3)$ with the simplifying assumptions.

\end{proof}
\section{Appendix to Section \ref{sec_numerical_experiments}}
\label{sec_appendix_numerical_experiments}
In this section, we present extensions of Tables \ref{table1} and \ref{table2} from Section \ref{sec_numerical_experiments}. In particular, we consider different combinations of the individual identification results in order to investigate each's contribution. The "No $W_z$-loop"-column is the algorithm including the determinantal and eLF-HTC subprocedure and the recursive subgraph result with the additional simplification for the for-loop over the $W_z$'s as explained in Theorem \ref{theorem_complexity} in Section \ref{section_computation}. Similarly, the $\leq 10$-, $\leq 100$- and $\leq 500$-columns refer to the algorithm including the determinantal and eLF-HTC subprocedure and the recursive subgraph result with the additional simplification of limiting the number of pairs $S, T$ in the determinantal subprocedure as explained in Section \ref{section_computation}. In Figure \ref{fig_runtime}, we also present the median runtimes for each latent structure and for each $|D_V|$. 
In Section \ref{sec_numerical_experiments}, we discuss some findings regarding these tables and figures.
\begin{figure}
    \centering
    \includegraphics[scale=0.15, angle=0]{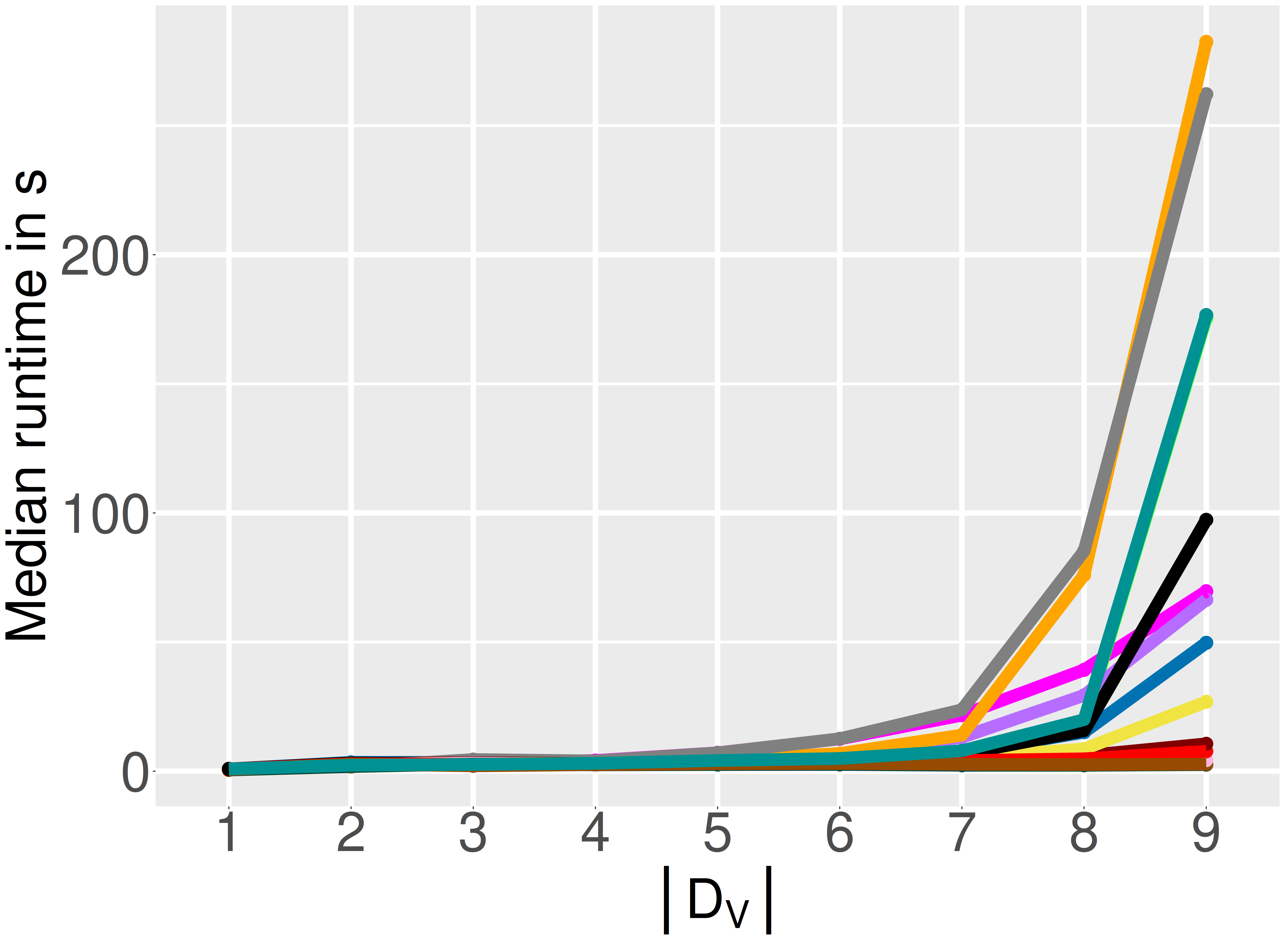}
        \includegraphics[scale=0.15, angle=0]{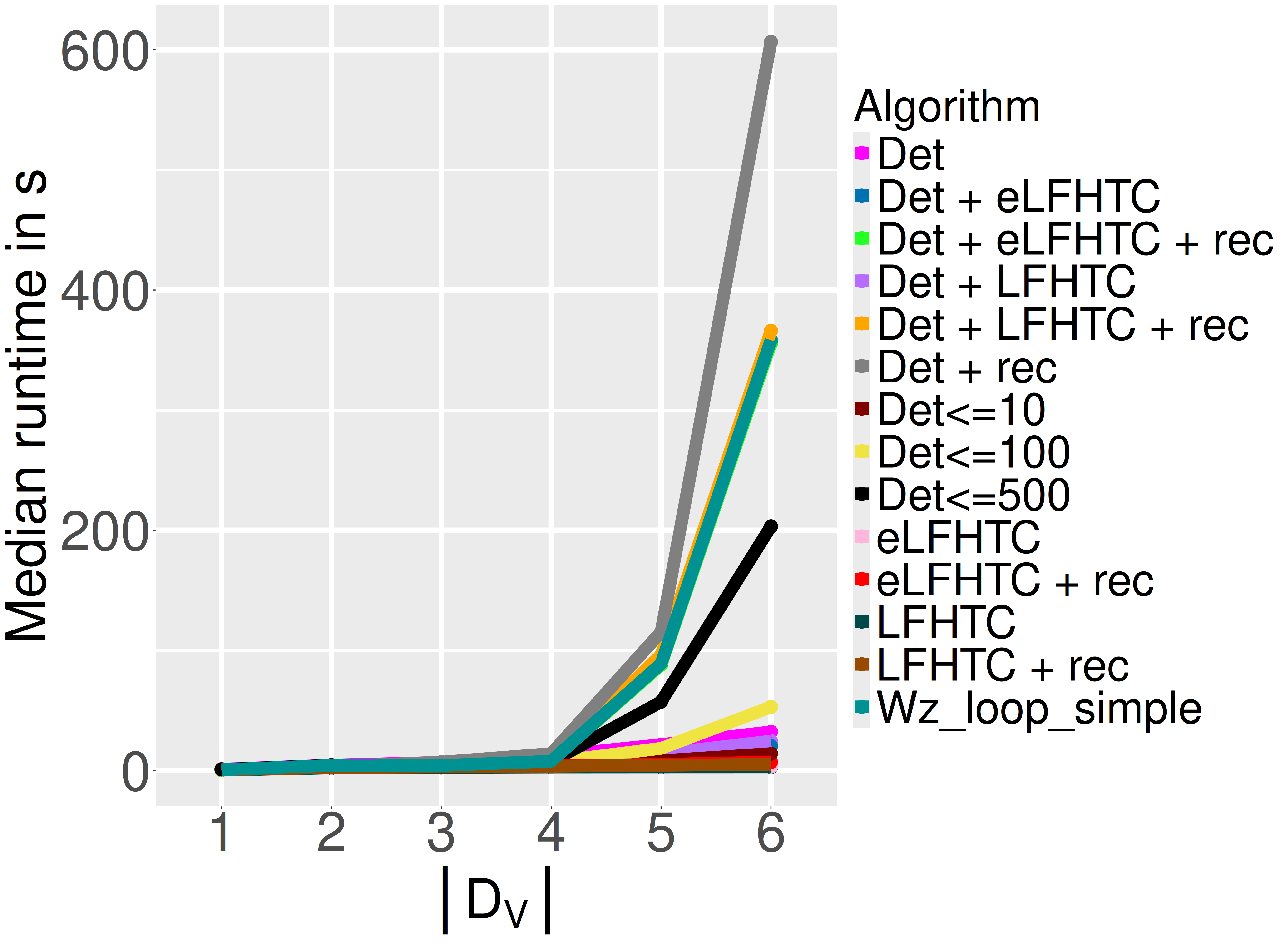}
    \caption{Median runtimes: 1 latent node (left), 2 latent nodes (right). For 
    better distinguishability: For the left plot and  $|D_V|=9$, the lines from top to bottom are: Det+LFHTC+rec, Det+rec, Wz\_loop\_simple, Det+eLFHTC+rec (essentially located under the Wz\_loop\_simple-line), Det$\leq$500, Det, Det+LFHTC, Det+eLFHTC, Det$\leq $100, Det$\leq$10, eLFHTC+rec, eLFHTC, LFHTC+rec, LFHTC. For the right plot and $|D_V|=6$, the lines from top to bottom are: Det+rec, Det+LFHTC+rec, Wz\_loop\_simple, Det+eLFHTC+rec (essentially located under the Wz\_loop\_simple-line), Det$\leq$500, Det$\leq $100, Det+LFHTC, Det$\leq$10, eLFHTC+rec, LFHTC+rec, eLFHTC, LFHTC.}
    \label{fig_runtime}
\end{figure}

\begin{table}[h]
\begin{center}
\resizebox{10cm}{!}{%
	\begin{tabular}{ cccccccc }
		\hline
		 $|D_V|$ & Total & Rationally & Det & Det + rec& LF-HTC & LF-HTC + rec & Det + LF-HTC\\
		\hline
		0 & 1 & 1 & 1 & 1 & 1 & 1 & 1  \\
		1 & 1 & 1 & 1 & 1 & 1 & 1 & 1 \\
		2 & 4 & 4 & 4 & 4 & 4 & 4 & 4 \\
	    3 & 13 & 13 & 13 & 13 & 13 & 13 & 13   \\
	    4 & 51 & 51 & 45 & 51 & 50 & 50 & 51  \\
		5 & 163 & 159 & 104 & 159 & 134 & 134 & 159 \\
		6 & 407 & 398 & 168 & 397 & 250 & 250 & 398 \\
		7 & 796 & 747 & 168 & 737 & 234 & 234 & 739 \\
		8 & 1169 & 956 & 89 & 911 & 64 & 64 & 882 \\
		9 &  1291 & 631 & 11 & 550 & 4 & 4 & 434\\
	\end{tabular}}\\
\resizebox{14cm}{!}{%
	\begin{tabular}{ cccccccc }
		\hline
		 $|D_V|$ & Total & Rationally & Det + LF-HTC + rec & eLF-HTC& eLF-HTC + rec & Det + eLF-HTC & Det + eLF-HTC + rec \\
		\hline
		0 & 1 & 1  & 1 & 1 & 1 & 1 & 1  \\
		1 & 1 & 1  & 1 & 1 & 1 & 1 & 1 \\
		2 & 4 & 4 & 4 & 4 & 4 & 4 & 4 \\
	3 & 13 & 13 & 13 & 13 & 13 & 13 & 13  \\
	4 & 51 & 51 & 51 & 51 & 51 & 51 & 51 \\
		5 & 163 & 159 & 159 & 154 & 154 & 159 & 159 \\
		6 & 407 & 398 & 398 & 372 & 372 & 398 & 398 \\
		7 & 796 & 747 & 742 & 642 & 649 & 743 & 743 \\
		8 & 1169 & 956 & 926 & 669 & 714 & 922 & 938 \\
		9 &  1291 & 631 & 566 & 300 & 377 & 556 & 606 \\
	\end{tabular}}\\	
	\resizebox{7cm}{!}{%
		\begin{tabular}{ ccccccc }
		\hline
		 $|D_V|$ & Total & Rationally & No $W_z$-loop & $\leq 10$ &  $\leq 100$ &  $\leq 500$\\
		\hline
		0 & 1 & 1 & 1 & 1 & 1 & 1\\
		1 & 1 & 1 & 1 & 1 & 1 & 1\\
		2 & 4 & 4 & 4 & 4 & 4 & 4\\
	3 & 13 & 13 & 13 & 13 & 13 & 13\\
	4 & 51 & 51 & 51 & 51 & 51 & 51\\
		5 & 163 & 159 & 159 & 154 & 157 & 159\\
		6 & 407 & 398 & 398 & 378 & 393 & 395\\
		7 & 796 & 747 & 743 & 670 & 711 & 738\\
		8 & 1169 & 956 & 938 & 771 & 864 & 929\\
		9 &  1291 & 631 & 606 & 425 & 494 & 589\\
	\end{tabular}
	}
\end{center}
\caption{Extended version of Table \ref{table1}.}
\label{table1_app}
\end{table}
\begin{table}
	\begin{center}
	\resizebox{10cm}{!}{%
		\begin{tabular}{ cccccccc }
			\hline
$|D_V|$ & Total & Rationally & Det & Det + rec& LF-HTC & LF-HTC + rec & Det + LF-HTC \\
		\hline
$0$ & 1 & 1 & 1 & 1& 1 & 1 & 1 \\
$1$ & 8 & 6 & 6 & 6& 6 & 6 & 6 \\
$2$ & 63 & 45 & 43 & 45 & 43  & 43 & 45 \\
$3$ & 391 & 255 & 238 & 255& 236 & 236 & 255 \\
$4$ & 1983 & 1171 & 882 & 1164& 1018 & 1018 & 1163 \\
$5$ & 7570 & 3898 & 1789 & 3742& 3028 & 3028 & 3691 \\
$6$ & 21,029 & 8960 & 1882 & 7704 & 5861 & 5861 & 7783 \\
		\end{tabular}}
	\resizebox{14cm}{!}{%
		\begin{tabular}{ cccccccc }
			\hline
$|D_V|$ & Total & Rationally & Det + LF-HTC + rec & eLF-HTC& eLF-HTC + rec & Det + eLF-HTC & Det + eLF-HTC + rec \\
		\hline
$0$ & 1 & 1 & 1 & 1& 1 & 1 & 1 \\
$1$ & 8 & 6  & 6 & 6& 6 & 6 & 6 \\
$2$ & 63 & 45 & 45 & 45& 45 & 45 & 45 \\
$3$ & 391 & 255 & 255 & 254 & 254 & 255 & 255 \\
$4$ & 1983 & 1171  & 1166 & 1140 & 1146 & 1166 & 1168 \\
$5$ & 7570 & 3898 & 3827 & 3601& 3642 & 3819 & 3850 \\
$6$ & 21,029 & 8960 & 8557 & 7563& 7750 & 8449 & 8675 \\
		\end{tabular}}
		\resizebox{7cm}{!}{%
		\begin{tabular}{ ccccccc }
		\hline
		 $|D_V|$ & Total & Rationally & No $W_z$-loop &  $\leq 10$ &  $\leq 100$ &  $\leq 500$ \\
		\hline
		0 & 1 & 1 & 1 & 1 & 1 & 1\\
		1 & 8 & 6  & 6 & 6 & 6 & 6\\
		2 & 63 & 45 & 45 & 45 & 45 & 45\\
	3 & 391 & 255 & 255 & 254 & 255 & 255\\
	4 & 1983 & 1171 & 1166 & 1149 & 1160 & 1167\\
		5 & 7570 & 3898 & 3832 & 3670 & 3766 & 3839\\
		6 & 21029 & 8960 & 8589 & 7882 & 8278 & 8617\\
	\end{tabular}
	}
	\end{center}
	\caption{Extended version of Table \ref{table2}.}
	\label{table2_app}
\end{table}
\bibliography{bib}

\end{document}